\title{Field-free Line Magnetic Particle Imaging: Radon-based Artifact Reduction with Motion Models}
\author{Stephanie Blanke\footnote{stephanie.blanke@uni-hamburg.de}~ and Christina Brandt}
\date{}
\DeclareMathOperator{\R}{\mathbb R}
\DeclareMathOperator{\dmathInt}{\, d}
\DeclareMathOperator{\Dmath}{D}
\DeclareMathOperator{\BV}{BV}
\DeclareMathOperator{\TV}{TV}
\DeclareMathOperator{\supp}{supp}
\newcommand{\brackets}[1]{\left( #1 \right)}
\newcommand{\abs}[1]{\left| #1 \right|}
\newcommand{\norm}[1]{\left\| #1 \right\|}
\newcommand{\dmath}[1]{\frac{\text{d}}{\text{d}#1}}
\newcommand{\tr}[1]{\text{tr}\left[ #1\right] }
\def\angleRange{{\left[ 0,2\pi\right] }}
\newtheorem{definition}{Definition}[section]
\newtheorem{lemma}[definition]{Lemma}
\newtheorem{remark}[definition]{Remark}
\newtheorem{example}[definition]{Example}
\newtheorem{theorem}[definition]{Theorem}
\newtheorem{corollary}[definition]{Corollary}
\newtheorem{assumption}[definition]{Assumption}
\newenvironment{proof}{\textbf{Proof:}}{\hfill$\square$ \\}
\begin{document}

\maketitle

\vspace{-0.7cm}

\begin{center}
	\small
	Department of Mathematics, Universit\"at Hamburg, Germany
\end{center}

\vspace{0.05cm}

\begin{abstract}
	\hspace{-0.58cm}Magnetic particle imaging is a promising medical imaging technique. Applying changing magnetic fields to tracer material injected into the object under investigation results in a change in magnetization. Measurement of related induced voltage signals enables reconstruction of the particle distribution. For the field-free line scanner the scanning geometry is similar to the one in computerized tomography. We make use of these similarities to derive a forward model for dynamic particle concentrations. We validate our theoretical findings for synthetic data. By utilizing information about the object's dynamics in terms of a diffeomorphic motion model, we are able to jointly reconstruct the particle concentration and the corresponding dynamic Radon data without or with reduced motion artifacts. Thereby, we apply total variation regularization for the concentration and an optional sparsity constraint on the Radon data. 
\end{abstract}

\small

\textbf{Keywords:} Magnetic Particle Imaging, Computerized Tomography, Dynamic Inverse \phantom{\hspace{2.8cm}} Problems, Total Variation Regularization

\section{Introduction}
Magnetic particle imaging (MPI) is a fast tracer-based medical imaging technique allowing for quantitative imaging of magnetic nanoparticles (MNPs) within the patient's body. Relying on the non-linear magnetization response of magnetic particles to changing magnetic fields, in MPI the composition of a so-called selection field $\mathbf{H}_S$ featuring a low-field volume (LFV) and a drive field $\mathbf{H}_D$ steering the LFV through the field of view (FOV) is applied to the region of interest. The change in magnetization leads to an induced voltage signal captured by dedicated receive coils. Since only MNPs in close vicinity to the LFV are contributing to the signal, as all other particles stay in magnetic saturation, spatial information is encoded in the measurements. Having access to data for a suitable amount of different LFV positions enables reconstruction of the spatial distribution of the particles. We recommend to consult~\cite{knopp2012magnetic} for an elaborate introduction to MPI. In view of achievements and milestones in MPI history we refer to~\cite{knopp2017magnetic}. \\
Most scanner implementations make use of a field-free point (FFP) centering the LFV. Aside from that,  Weizenecker, Gleich, and Borgert suggested a field-free line (FFL) encoding scheme in~\cite{weizenecker2008magnetic}. They stated a possible increase in sensitivity for an FFL scanner as, due to the larger LFV, more particles are contributing to the signal. The first FFL imagers were developed by~\cite{goodwill2012projection} and~\cite{bente2014electronic}. Information about a small-bore imaging system using a field-free line is shared within the scope of	the open-source project OS-MPI~\cite{mattingly2020mpi}. MPI leads to the task of solving an ill-posed inverse problem. However, according to~\cite{kluth2018degree} the FFL setting may lead to a less ill-posed problem compared to the usage of an FFP. For more details concerning FFL scanner, we refer to~\cite{bringout2016field} and~\cite{erbe2014field}. \\
With the first article regarding MPI having been published in 2005~\cite{gleich2005tomographic}, MPI is still a rather new imaging modality leaving a lot of space for research. Lately, reduction of motion-based artifacts started to arouse further interest. In~\cite{gdaniec2017detection} periodic particle concentrations are considered, whose dynamic properties e.g. result from organ movements like the beating heart or respiration. They estimate the motion frequency from the measured data, which is then grouped according to the different motion states. After their proposed data processing, usual MPI reconstruction methods assuming static tracer distributions can be applied. This approach is extended to multi-patch MPI in~\cite{gdaniec2020suppression}. An adaption, and corresponding significance examination, of the MPI forward model allowing possibly fast and non-periodic timely changing tracer distributions can be found in~\cite{brandt2021modeling}. In order to circumvent motion and multi-patch artifacts, in~\cite{brandt2022motion} the authors further developed a reconstruction method based on an expression of the dynamic MNP distribution in terms of spline curves. For numerical results all of the aforementioned works regard the FFP encoding scheme. In this article, we investigate modeling and reconstruction of dynamic particle concentrations for MPI using a field-free line. \\
For MPI-FFL devices the scanning geometry resembles the one in computerized tomography (CT), both rely on measuring data along lines. While MPI is a tracer-based imaging modality, CT accesses morphological information about the tissue itself~\cite{knopp2012magnetic}. Thus, combining these two modalities results in a powerful instrument for clinical diagnostics and accordingly the idea for a hybrid MPI-CT scanner came up~\cite{vogel2019magnetic}. For sequential data generation, i.e. acquiring data while translating the field-free line through the FOV and rotating the FFL in between measurements, it was derived in~(\cite{knopp2011fourier},~\cite{bringout2020new}) for static concentrations that MPI data can be linked to corresponding Radon data and well-established reconstruction methods developed for CT become also available for MPI. Remember the Radon transform, mapping a function to the set of its line integrals, gives the forward operator in computerized tomography. With respect to their open-sided FFL scanner prototype, the authors of~\cite{kilic2022inverse} compared images obtained via system matrix based reconstruction using Kaczmarz to inverse Radon transform based reconstruction results. In~\cite{https://doi.org/10.48550/arxiv.2211.11683} the relation between MPI and Radon data was investigated for the acquisition scheme suggested in~\cite{weizenecker2008magnetic} that is the FFL simultaneously rotates with its translation. It was found that this setting is connected to a non-rotating oscillating FFL in combination with a continuously rotating phantom. On that account, in this work we extend the result of~\cite{knopp2011fourier} towards time-dependent particle concentrations not restricted to phantom rotation. \\\\
Because of the geometrical similarity to CT, we model time-dependent concentrations based on a framework applied in dynamic CT (e.g.~\cite{+2014+323+339},~\cite{hahn2017motion},~\cite{hahn2021motion}). More precisely, at each time point we link the dynamic particle concentration to a static reference concentration, for example the initial MNP distribution, via diffeomorphic motion functions. For multi-patch MPI using a field-free point and restricting to rigid motion, a registration-based approach also led to diffeomorphic transformation model~\cite{ehrhardt2019temporal}. While the authors jointly regarded image and motion estimation, within the scope of this work, we assume the dynamics to be known. Nevertheless, in practice the motion needs to be reconstructed beforehand or simultaneously with image reconstruction. We derive a connection between MPI and dynamic Radon data. Since the resulting adapted Radon transform is the same as derived e.g. in~\cite{hahn2021motion}, related methods and results from dynamic CT become available. \\
Instead of first reconstructing the adapted Radon data and then applying dedicated reconstruction techniques, inspired by~\cite{Tovey_2019} we simultaneously derive particle concentration and associated dynamic Radon data applying total variation (TV) regularization. Variational approaches are suitable for many applications, including the treatment of inverse problems~\cite{hauptmann2021image}. Further, also in view of MPI, TV already has been used in order to incorporate a priori information of the object in the reconstruction process~(\cite{storath2016edge},~\cite{zdun2021fast},~\cite{bathke2017improved},~\cite{ilbey2017comparison}). For an introduction to TV regularization we refer e.g. to~\cite{burger2013guide}. \\\\
We keep the structure and notation of the article similar to those in~\cite{https://doi.org/10.48550/arxiv.2211.11683} such that results for the different settings can be easily compared. In Section~\ref{Sec:FFL} we give a brief overview of MPI. For static MNP distributions we state the corresponding forward model for the FFL encoding scheme and recapitulate the  connection between MPI and Radon data known from~\cite{knopp2011fourier}. Afterwards, in Section~\ref{Sec:DynModel} we derive a forward model for dynamic concentrations and build the link towards an adapted version of the Radon transform in Section~\ref{Sec:Relation}. In Section~\ref{Sec:ImageReco} we present our TV-based joint reconstruction approach similar to~\cite{https://doi.org/10.48550/arxiv.2211.11683}. Finally, in Section~\ref{Sec:Results} we state numerical results for simulated data.

\section{Principles of MPI using a field-free line} \label{Sec:FFL} 

According to~\cite{knopp2012magnetic}, the signal equation relating the time-dependent MNP distribution $c: \, \R^3\times\R^+_{0} \to \R^+_{0} $ with $\R^+_{0} := \R^+ \cup \left\lbrace 0\right\rbrace$ to the measured voltage signal induced in the $l$-th receive coil $u_l: \, \R^+_0 \to \R,\; l\in\left\lbrace 1,\dots,L\right\rbrace $ can be written as
\begin{equation}
u_l\brackets{t} = -\mu_0 \dmath{t}\int_{\R^3} c\brackets{\mathbf{r},t} \overline{\mathbf{m}}\brackets{\mathbf{r}, t} \cdot \mathbf{p}_l\brackets{\mathbf{r}} \dmathInt\mathbf{r}
\label{SignalEquation}
\end{equation}
with magnetic permeability $\mu_0$, mean magnetic moment $\overline{\mathbf{m}}: \, \R^3 \times\R^+_0 \to \R^3$, and receive coil sensitivities $\mathbf{p}_l: \, \R^3 \to \R^3$. In practice, the magnetic excitation field additionally contributes to the voltage signal, which thus needs to be adjusted. However, we omit signal filtering and assume to have preprocessed the data. Further, for the moment we suppose the concentration to be static. Then, equation~\eqref{SignalEquation} transforms to
\begin{equation}
u_l\brackets{t} = -\mu_0 \int_{\R^3} c\brackets{\mathbf{r}} \frac{\partial}{\partial t}\overline{\mathbf{m}}\brackets{\mathbf{r}, t} \cdot \mathbf{p}_l\brackets{\mathbf{r}} \dmathInt\mathbf{r}.
\label{SignalEquation_static}
\end{equation}
In order to guarantee that~\eqref{SignalEquation_static} is well-defined, we propose that $c$ and $\frac{\partial}{\partial t}\overline{\mathbf{m}}\brackets{\cdot, t} \cdot \mathbf{p}_l\brackets{\cdot}$ are $L_2$-functions. For our considerations we need a model for the mean magnetic moment $\overline{\mathbf{m}}$. The state-of-the-art approach, which we will use in this work, is given by applying the Langevin theory of paramagnetism. Then, following~\cite{knopp2011fourier} we can write
\begin{equation}
\overline{\mathbf{m}}\brackets{\mathbf{r}, t} = \overline{m}\brackets{\left\|  \mathbf{H}\brackets{\mathbf{r}, t} \right\|} \frac{\mathbf{H}\brackets{\mathbf{r}, t}}{\left\|  \mathbf{H}\brackets{\mathbf{r}, t} \right\|}
\label{Eqn:ModelinMagnetization}
\end{equation}
with modulus of the mean magnetic moment $\overline{m}\brackets{H}= m\mathcal{L}\brackets{\frac{\mu_0 m}{k_{\text{B}} T_p}H}$. Thereby, $m$ denotes the magnetic moment of a single particle, $k_{\text{B}}$ the Boltzmann constant, $T_p$ the particle temperature, and $\mathcal{L}$ the Langevin function 
\begin{eqnarray}
\mathcal{L}:\; \R \to \left[ -1,1\right], \quad
\mathcal{L}\brackets{\lambda} := \begin{cases}
\coth \brackets{\lambda} - \frac{1}{\lambda}&, \; \lambda \neq 0, \\
0&, \; \lambda=0.
\end{cases}
\label{eq:Langevin}
\end{eqnarray}
\begin{remark}
	The Langevin model makes the assumption that particles are in thermal equilibrium~\cite{knopp2012magnetic}, which is not an adequate characterization of the particle dynamics. Accordingly, finding more accurate models shows an important field of research. We refer to (\cite{weizenecker2018fokker},~\cite{kluth2019towards},~\cite{kaltenbacher2021parameter},~\cite{albers2022modeling}) for some works heading in this direction. Further,~\cite{kluth2018mathematical} is a survey paper regarding mathematical modeling of the signal chain.
\end{remark}
We are interested in MPI using an FFL scanner. More precisely, for our considerations, we regard the scanning geometry visualized in Figure~\ref{Fig:SeqRot}, which was also used e.g. in~\cite{knopp2011fourier}. Thereby, the FFL is sequentially translated through the FOV and rotated in between measurements. We now derive the signal equation for the specific setting of using a field-free line for spatial encoding. To simplify matters, we suppose the magnetic particles to be contained within the xy-plane. Thus, for data generation the FFL is moved within this plane and the problem formulation can be considered as two dimensional,~cf.~\cite{kluth2018degree} for further details. As already suggested the total magnetic field $\mathbf{H}(\mathbf{r},\varphi,t)$ is composed of a static selection field $\mathbf{H}_S(\mathbf{r},\varphi)$ and a time-dependent drive field $\mathbf{H}_D(\varphi,t)$. The angle $\varphi$ determines the orientation of the FFL within the xy-plane (cf. Figure~\ref{Fig:Geom}). Based on~\cite{knopp2011fourier} we model these fields as
\begin{equation}
\mathbf{H}(\mathbf{r},\varphi,t) = \mathbf{H}_S(\mathbf{r},\varphi) + \mathbf{H}_D(\varphi,t) = \left( -G \; \mathbf{r}
\cdot \mathbf{e}_{\varphi}
+ A\Lambda_\varphi(t)\right) \mathbf{e}_{\varphi}
\label{Eqn:MagneticField}
\end{equation}
with unit vector $\mathbf{e}_\varphi:=\brackets{-\sin\varphi,\cos\varphi}^T$ orthogonal to the FFL. More components are the gradient strength $G$ determining the width of the LFV, the drive peak amplitude $A$, and an excitation function $\Lambda_\varphi$ typically chosen to be sinusoidal. Computing the nulls of the magnetic field gives the field-free line 
\begin{equation*}
\text{FFL}\brackets{\mathbf{e_{\varphi}}, s_{\varphi,t}}:=\left\lbrace \mathbf{r} \in \R^2 : \; \mathbf{r} \cdot \mathbf{e_{\varphi}} = s_{\varphi,t} \right\rbrace
\end{equation*}
with displacement $ s_{\varphi,t}:= \frac{A}{G} \Lambda_\varphi\brackets{t}$ to the origin and normal vector $\mathbf{e}_\varphi$ visualized in Figure~\ref{Fig:Geom}.
From~\eqref{Eqn:MagneticField} we further get that the magnetic field is constant along lines parallel to the FFL~\cite{knopp2011fourier}. Note that we consider an idealistic setting. In real measurements field imperfections result in deformed LFVs~(e.g.~\cite{bringout2016field},~\cite{bringout2020new}) leading to artifacts in the reconstructed images when being ignored.
\begin{figure}[htbp]
	\begin{subfigure}[b]{0.45\textwidth}
		\centering 
		\includegraphics[width=1\linewidth]{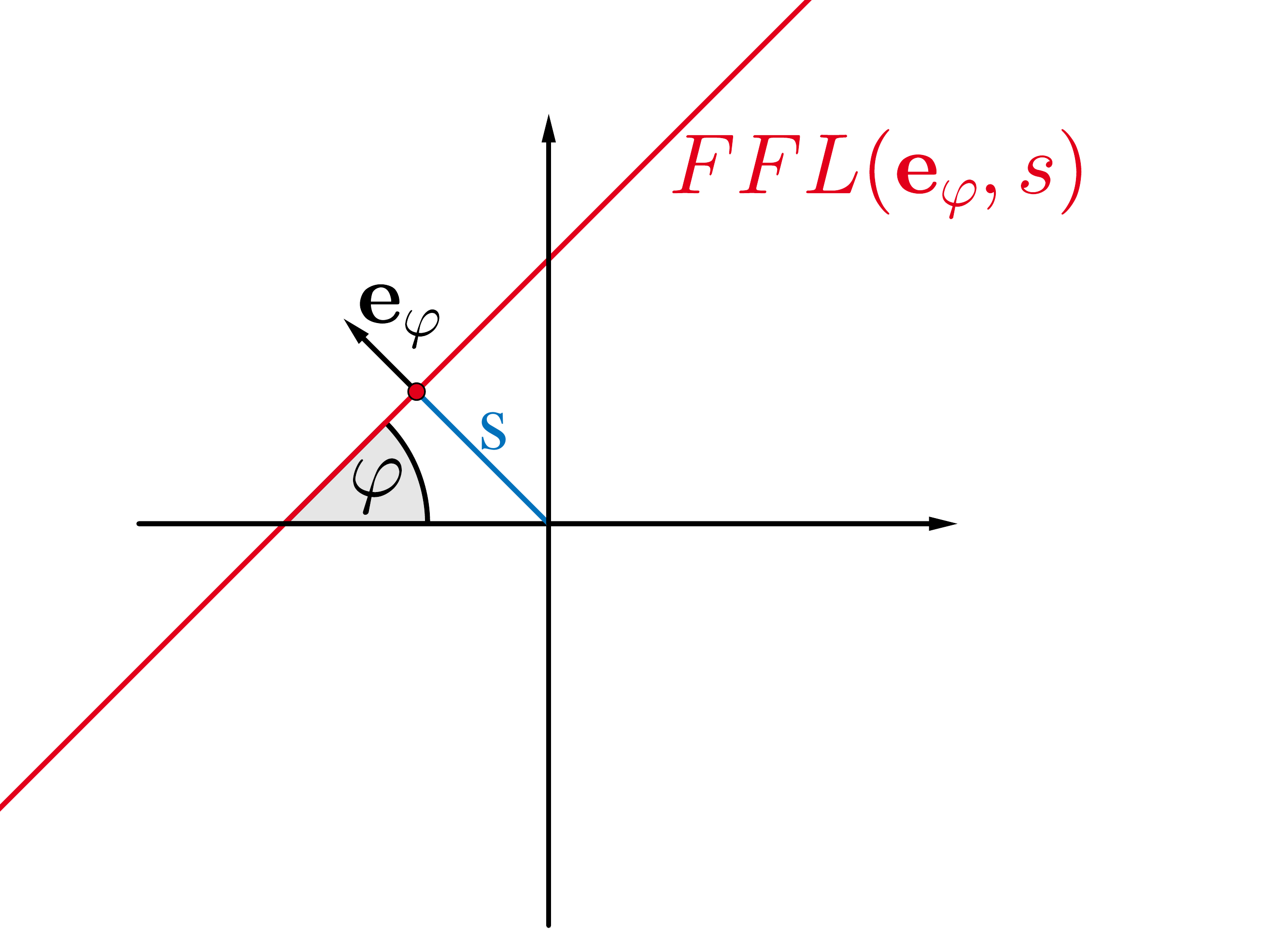}
		\subcaption{\small FFL orthogonal to $\mathbf{e}_\varphi$ and with displacement $s$ to the origin}
		\label{Fig:Geom}
	\end{subfigure}
	\hfill
	\begin{subfigure}[b]{0.45\textwidth}
		\centering
		\includegraphics[width=1\linewidth]{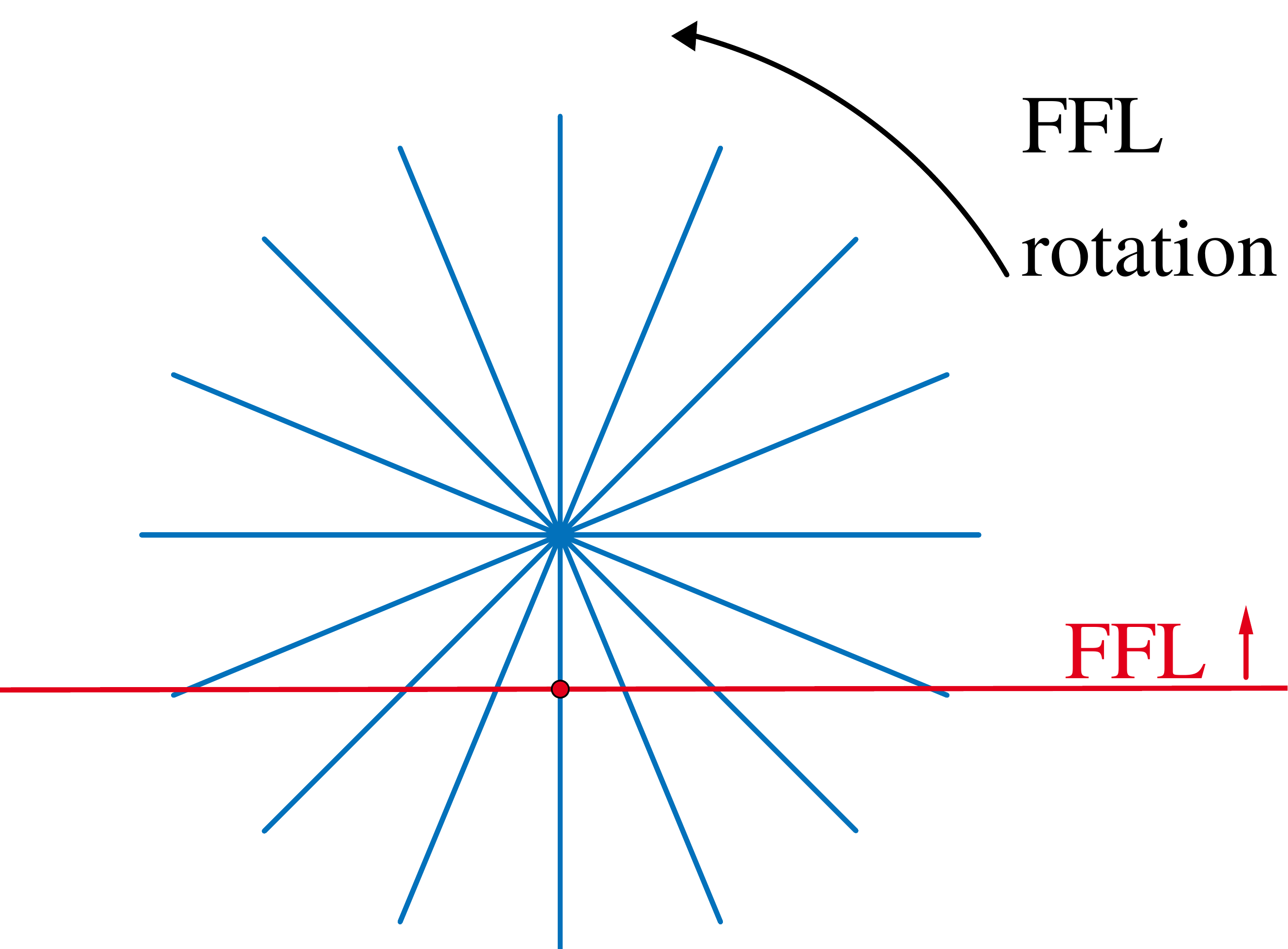}
		\subcaption{\small Sequential translation and rotation of the FFL}
		\label{Fig:SeqRot}
	\end{subfigure}
	\caption{\small Visualization of the scanning geometry.}
	\label{Fig:ScanningGeometry}
\end{figure}
\\
Now, we are able to derive the signal equation for an FFL scanner (cf.~\cite{knopp2011fourier}). From~\eqref{eq:Langevin} we get that $\overline{m}\brackets{-H}=-\overline{m}\brackets{H}$ and thus
by inserting the magnetic field~\eqref{Eqn:MagneticField} into~\eqref{Eqn:ModelinMagnetization}, we obtain 
\begin{equation*}
\overline{\mathbf{m}}\brackets{\mathbf{r}, t} = \overline{m} \left( 
-G \; \mathbf{r}
\cdot \mathbf{e}_\varphi
+ A\Lambda_\varphi(t)\right) \mathbf{e}_\varphi
\end{equation*}
and~\eqref{SignalEquation_static} transforms to
\begin{equation}
u_l(\varphi, t) = -\mu_0 \int_{\R^2} c(\mathbf{r}) \frac{\partial }{\partial t} \overline{m} \left( 
-G \; \mathbf{r}
\cdot \mathbf{e}_\varphi
+ A\Lambda_\varphi(t)\right) \mathbf{e}_\varphi \cdot \mathbf{p}_l\brackets{\mathbf{r}} \dmathInt\mathbf{r}.
\label{Eqn:ModelFFL}
\end{equation}
For the remainder, let $T>0$ denote the measurement time utilized per scanning direction for FFL translation and define $Z_T:=\angleRange \times \left[ 0,T\right] $. Moreover, we assume the MNPs to be contained within the circle $B_R$ of radius $R>0$ around the origin, i.e. $\supp \brackets{c} \subset B_R \subset \R^2$. Thus, henceforth we regard concentrations $c\in L_2\brackets{B_R, \R^+_{0} }$ with continuous extension $c\brackets{\mathbf{r}}:=0$ for $\mathbf{r}\in \R^2 \setminus B_R$ (cf. support condition in~\cite{hahn2021motion}) and define the forward model for field-free line magnetic particle imaging as follows.
\begin{definition}\label{Def:MPI_FFL}
	Define $\mathcal{A}_l: \; L_2\brackets{B_R,\R^+_{0} } \to L_2\brackets{Z_T, \R}$ to be 
	\begin{equation*}
	\mathcal{A}_l c\brackets{\varphi,t} := -\mu_0  \int_{\R^2} c(\mathbf{r}) \frac{\partial}{\partial t} \overline{m} \left( 
	-G \; \mathbf{r}
	\cdot \mathbf{e}_{\varphi}
	+ A\Lambda_\varphi(t)\right) \mathbf{e}_{\varphi} \cdot \mathbf{p}_l\brackets{\mathbf{r}} \dmathInt\mathbf{r}.
	\label{Def:MPIForwardOp}
	\end{equation*}
	Therewith, the forward operator for a MPI-FFL scanner is given as $\mathcal{A}: \; L_2\brackets{B_R,\R^+_{0} } \to L_2\brackets{Z_T, \R^L}$ with $\mathcal{A} c\brackets{\varphi,t} = \brackets{\mathcal{A}_lc\brackets{\varphi,t}}_{l=1,\dots,L}$.
\end{definition}
Thus, to reconstruct the static MNP distribution $c$ from measured data $\mathbf{u} = \brackets{u_l}_{l=1,\dots,L}$ an ill-posed linear inverse problem $$\mathcal{A} c = \mathbf{u} $$ with forward operator $\mathcal{A}$ needs to be solved. \\\\
A prominent breakthrough in history of medical imaging was the invention of the nowadays well-known diagnostic tool computerized tomography. Using the parallel scanning geometry, the radiation source emits a bunch of parallel X-rays passing through the specimen. The corresponding intensity loss is measured via a detector panel. Afterwards, radiation source and detector are rotated. This process is repeated for a suitable amount of different directions in order to enable reconstruction of the tissue density. For a mathematical treatment of CT, we refer to~\cite{natterermathematics}. \\
Looking at Figure~\ref{Fig:ScanningGeometry} with FFLs being substituted by X-rays, it becomes obvious that the FFL shifting resembles the parallel scanning geometry for computerized tomography. However, in CT the X-ray direction is generally determined by the angle between axis and normal vector $\mathbf{e_{\varphi}}$ instead to the ray itself. Thus, as observed in~\cite{knopp2011fourier} there is an angle shift of $\frac{\pi}{2}$ in the definition of $\varphi$ compared to common CT literature. Because of the aforementioned geometric similarity between MPI and CT, a connection between the associated forward operators is self-evident. For CT this operator is given by the Radon transform $\mathcal{R}:\, L_2\brackets{B_R, \R^+_{0} } \to L_2\brackets{Z,\R^+_{0}  }$ with $Z:=\left[ 0,2\pi\right] \times \R$ and
\begin{equation*}
\mathcal{R}c\brackets{\varphi, s} = \int_{\R^2} c\brackets{\mathbf{r}} \delta\brackets{\mathbf{r}\cdot\mathbf{e}_\varphi-s} \dmathInt \mathbf{r}.
\label{Def:RadonTransform}
\end{equation*}
For static particle distributions, it was indeed shown that MPI-FFL data can be traced back to the Radon transform of the particle concentration.
\begin{theorem}[{\cite{knopp2011fourier}}] \label{Thm:FourierSlice}
	Under the assumption of spatially homogeneous receive coil sensitivities and sequential line rotation, signal equation~\eqref{Eqn:ModelFFL} can be reformulated as
	\begin{equation*}
	u_l\brackets{\varphi, t} 
	= -\mu_0 A \Lambda_\varphi'(t) \; \mathbf{e}_\varphi \cdot \mathbf{p}_l
	\Big[   \overline{m}' \left(G \; \cdot\right) * \mathcal{R}c\brackets{\varphi,\cdot} \Big]  \left(\frac{A}{G}\Lambda_\varphi\brackets{t}\right).
	\label{Eqn:FourierSlice}
	\end{equation*}
\end{theorem}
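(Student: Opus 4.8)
The plan is to reduce the static FFL signal equation~\eqref{Eqn:ModelFFL} to the stated convolution by exploiting that the mean magnetic moment depends on the spatial variable only through the scalar projection $\mathbf{r}\cdot\mathbf{e}_\varphi$. First I would use the assumption of spatially homogeneous receive coil sensitivities, i.e. $\mathbf{p}_l(\mathbf{r})\equiv\mathbf{p}_l$, to pull the factor $\mathbf{e}_\varphi\cdot\mathbf{p}_l$ out of the spatial integral. The partial derivative inside the integral is evaluated pointwise by the chain rule as $A\Lambda_\varphi'(t)\,\overline{m}'\!\left(-G\,\mathbf{r}\cdot\mathbf{e}_\varphi + A\Lambda_\varphi(t)\right)$, and the scalar prefactor $A\Lambda_\varphi'(t)$ likewise comes out of the integral, leaving
$$u_l(\varphi,t) = -\mu_0 A\Lambda_\varphi'(t)\,(\mathbf{e}_\varphi\cdot\mathbf{p}_l)\int_{\R^2} c(\mathbf{r})\,\overline{m}'\!\left(-G\,\mathbf{r}\cdot\mathbf{e}_\varphi + A\Lambda_\varphi(t)\right)\dmathInt\mathbf{r}.$$

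The geometric heart of the proof is a change of variables adapted to the line direction. Writing $\mathbf{r} = s\,\mathbf{e}_\varphi + \tau\,\mathbf{e}_\varphi^{\perp}$ with $\mathbf{e}_\varphi^{\perp}$ a unit vector orthogonal to $\mathbf{e}_\varphi$ (hence parallel to the FFL), the Jacobian equals one and $\mathbf{r}\cdot\mathbf{e}_\varphi = s$, so by Fubini's theorem the integral splits as
$$\int_{\R}\overline{m}'\!\left(-Gs + A\Lambda_\varphi(t)\right)\left(\int_{\R} c\!\left(s\,\mathbf{e}_\varphi + \tau\,\mathbf{e}_\varphi^{\perp}\right)\dmathInt\tau\right)\dmathInt s.$$
The inner integral is, by definition, the Radon transform $\mathcal{R}c(\varphi,s)$ (the line integral of $c$ over the field-free line with displacement $s$). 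Finally, rewriting $-Gs + A\Lambda_\varphi(t) = G\!\left(\tfrac{A}{G}\Lambda_\varphi(t) - s\right)$ identifies the remaining $s$-integral as the convolution $\big[\overline{m}'(G\,\cdot) * \mathcal{R}c(\varphi,\cdot)\big]$ evaluated at the displacement $s_{\varphi,t} = \tfrac{A}{G}\Lambda_\varphi(t)$, which is precisely the claimed identity.

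What needs genuine care, beyond this essentially forced algebra, are the integrability issues. Since $c\in L_2(B_R)$ has compact support it lies in $L_1(\R^2)$, and $\overline{m}'$ is bounded because $\mathcal{L}$ is smooth with bounded derivatives; hence the integrand is in $L_1(\R^2)$ and the coordinate change together with Fubini is justified. For the convolution to be well defined one notes that $\mathcal{L}'(\lambda) = \lambda^{-2} - \sinh^{-2}\lambda$ decays like $\lambda^{-2}$, so $\overline{m}' \in L_1(\R)\cap L_\infty(\R)$, while $\mathcal{R}c(\varphi,\cdot)$ is bounded with support contained in $[-R,R]$. I expect the main --- albeit modest --- obstacle to be packaging these domination and support statements cleanly, and checking that $\mathcal{R}c(\varphi,\cdot)$ has the regularity needed for the convolution to make sense; the core computation itself is forced once the line-adapted coordinates are introduced.
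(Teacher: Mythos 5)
Your proposal is correct and follows essentially the same route the paper takes: the paper does not prove Theorem~\ref{Thm:FourierSlice} directly (it is cited from the literature), but its proof of the more general Theorem~\ref{Thm:DynConv} — which reduces to this statement for $\Gamma_{\varphi,t}=\mathrm{Id}$ — uses exactly your steps: pull out the homogeneous sensitivities and the chain-rule factor $A\Lambda_\varphi'(t)$, pass to line-adapted rotated coordinates $\mathbf{R}^{-\varphi}\mathbf{r}=(v',s')^T$ with unit Jacobian, identify the inner integral as $\mathcal{R}c(\varphi,s')$, and read off the convolution at $\tfrac{A}{G}\Lambda_\varphi(t)$. Your added integrability remarks (compact support of $c$, decay of $\overline{m}'$) are sound and only supplement what the paper leaves implicit.
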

Our main goal is to derive a similar result for the dynamic setting. As a first step, we adapt the forward operator from Definition~\ref{Def:MPI_FFL} towards time-dependent particle concentrations in the next section.

\section{Dynamic particle concentrations} \label{Sec:DynModel}
For dynamic particle concentrations the time derivative in~\eqref{SignalEquation} does not only act on the mean magnetic moment but also on the concentration itself, i.e.
\begin{equation*}
u_l\brackets{t} 
= -\mu_0 \int_{\R^3} \brackets{\overline{\mathbf{m}}\brackets{\mathbf{r}, t}\frac{\partial}{\partial t}c\brackets{\mathbf{r},t}  
	+c\brackets{\mathbf{r},t} \frac{\partial }{\partial t}\overline{\mathbf{m}}\brackets{\mathbf{r}, t}} \cdot   
\mathbf{p}_l\brackets{\mathbf{r}} \dmathInt\mathbf{r}.
\end{equation*}
While in~\cite{gdaniec2020suppression} they propose that the part of the signal equation corresponding to the concentration change is insignificant, the authors of~\cite{brandt2022motion} treated this part by expressing the particle distribution in terms of spline curves. In this work, we neither want to deal with the concentration derivative itself nor do we want to neglect it a priori. Instead, we propose that the object's dynamics can be described via so-called motion functions $\Gamma$, which have been successfully applied for example in dynamic CT (e.g.~\cite{+2014+323+339},~\cite{hahn2017motion},~\cite{hahn2021motion}). The idea is to trace back the dynamic inverse problem of reconstructing the time-dependent concentration $c$ towards the reconstruction of a static reference concentration $c_0$. Therewith, no computation of the concentration derivative is required. For the scope of this work, we take these motion functions as given but note that in practice the motion is a priori unknown and needs to be estimated either beforehand or simultaneously with the concentration reconstruction~\cite{hahn2017motion}. \\\\
We now go back to the specific setting of using a field-free line for spatial encoding and derive a respective dynamic forward model. As we regard sequential line rotation, the state of the particle distribution does not only depend on $t\in\left[ 0,T\right] $ but also on the angle $\varphi\in\angleRange$ determining the direction of the FFL. Suppose that $c\brackets{\cdot,\varphi,t}\in L_2\brackets{B_R,\R^+_{0} }$ for all $\brackets{\varphi,t}\in Z_T$. Further, let $c_0\in L_2\brackets{B_R,\R^+_{0} }$ be a static reference concentration, which could for example be chosen as
\begin{equation*}
c_0\brackets{\mathbf{r}} = c\brackets{\mathbf{r}, \varphi_0, t_0} \label{Eqn:RefConc}
\end{equation*}
for some reference time $t_0\in\left[ 0,T\right] $ and angle $\varphi_0\in\angleRange$. Let $\Gamma: \, \R^2 \times Z_T \to \R^2$ be such that $\Gamma_{\varphi,t}\mathbf{r} := \Gamma\brackets{\mathbf{r}, \varphi,t}$ is a diffeomorphism for all $\brackets{\varphi,t}\in Z_T$ and
\begin{eqnarray}
\textit{(intensity pres.)} \hspace{3.78cm} c\brackets{\mathbf{r},\varphi,t} &=& c_0\brackets{\Gamma_{\varphi,t}\mathbf{r}}, \hspace{3.7cm}  \label{CDyn:IntPres} \\
\textit{(mass pres.)\phantom{......}} \hspace{3.7cm} c\brackets{\mathbf{r},\varphi,t} &=& c_0\brackets{\Gamma_{\varphi,t}\mathbf{r}} \abs{\det \Dmath \Gamma_{\varphi,t}\mathbf{r}}. \hspace{2.8cm} \label{CDyn:MassPres}
\end{eqnarray}
Thereby, assumption~\eqref{CDyn:IntPres} is used for the supposition of intensity and~\eqref{CDyn:MassPres} for mass preservation~\cite{hahn2021motion}. Using
\begin{eqnarray*}
	h_{\varphi,t}\brackets{\mathbf{y}} &:=&
	\begin{cases}
		\abs{\det \Dmath \Gamma^{-1}_{\varphi,t}\mathbf{y}}, \; \text{for $c$ as in~\eqref{CDyn:IntPres}}, \\
		1,\;  \text{for $c$ as in~\eqref{CDyn:MassPres}},
	\end{cases}
\end{eqnarray*}
we summarize~\eqref{CDyn:IntPres} and~\eqref{CDyn:MassPres} via
\begin{eqnarray}
c\brackets{\mathbf{r},\varphi,t} &=& c_0\brackets{\Gamma_{\varphi,t}\mathbf{r}} h_{\varphi,t}\brackets{\Gamma_{\varphi,t}\mathbf{r}}\abs{\det \Dmath \Gamma_{\varphi,t}\mathbf{r}}.\label{Eqn:CDyn}
\end{eqnarray}
\begin{remark}\label{Rem:GammaReg}
	Applying motion models $\Gamma$ may already result in a regularization as it fixes the objects dynamics~\cite{hauptmann2021image}.
\end{remark}
\begin{assumption} \label{Assumption}
	For the remainder, we assume that $\Gamma^{-1}_{\varphi,t}$ as well as $ D\Gamma^{-1}_{\varphi,t}$ are differentiable with respect to t. We denote $\displaystyle \frac{\partial}{\partial t}\Gamma^{-1}_{\varphi,t}=:\brackets{\Gamma^{-1}_{\varphi,t}}'$ with similar notation for $\displaystyle \frac{\partial}{\partial t}D\Gamma^{-1}_{\varphi,t}$ and $\displaystyle \frac{\partial}{\partial t}h_{\varphi,t}$. Further, we suppose for all $\displaystyle \mathbf{y}\in \R^2$ and $ \brackets{\varphi,t}\in Z_T$ that
	\begin{equation}
	\norm{\brackets{\Gamma^{-1}_{\varphi,t}}'\mathbf{y}} \leq \frac{A}{G} C_{\varphi,t}, \quad
	\abs{h'_{\varphi,t}\brackets{\mathbf{y}}} \leq D_{\varphi,t}
	\label{Eqn:MotionBounds}
	\end{equation}
	for some constants $C_{\varphi,t},\;D_{\varphi,t} >0.$
\end{assumption}
\begin{remark}
	Assumption~\ref{Assumption} guarantees that the signal equation is well-defined. Note that the bounds in~\eqref{Eqn:MotionBounds} are naturally satisfied as in practice the speed and extent of motion are limited due to physiological constraints.
\end{remark}
Using~\eqref{Eqn:CDyn} and substituting $\mathbf{y} := \Gamma_{\varphi,t}\mathbf{r}$, the MPI-FFL signal equation transforms to
\begin{eqnarray*}
	u_l\brackets{\varphi, t} =&-&\mu_0 \dmath{t} \int_{\R^2} c(\mathbf{r},\varphi,t) \overline{m} \left( 
	-G \; \mathbf{r}
	\cdot \mathbf{e}_\varphi
	+ A\Lambda_\varphi(t)\right) \mathbf{e}_\varphi \cdot \mathbf{p}_l\brackets{\mathbf{r}} \dmathInt\mathbf{r} \nonumber \\
	=& -&\mu_0 \dmath{t} \int_{\R^2} c_0\brackets{\Gamma_{\varphi,t}\mathbf{r}}
	h_{\varphi,t}\brackets{\Gamma_{\varphi,t}\mathbf{r}} \abs{\det \Dmath \Gamma_{\varphi,t}\mathbf{r}} \nonumber \\
	&\phantom{-}&\hspace{1.3cm}\times \; \overline{m} \left( 
	-G \; \mathbf{r}
	\cdot \mathbf{e}_\varphi
	+ A\Lambda_\varphi(t)\right) \mathbf{e}_\varphi \cdot \mathbf{p}_l\brackets{\mathbf{r}} \dmathInt\mathbf{r}  \nonumber \\
	=& -&\mu_0 \,  \int_{\R^2} c_0\brackets{\mathbf{y}}  \frac{\partial}{\partial t}   h_{\varphi,t}\brackets{\mathbf{y}} \overline{m}\brackets{-G \; \Gamma_{\varphi,t}^{-1}\mathbf{y} \cdot \mathbf{e}_{\varphi} + A\Lambda_\varphi(t)} \mathbf{e}_\varphi \cdot \mathbf{p}_l\brackets{\Gamma_{\varphi,t}^{-1}\mathbf{y}}  \dmathInt \mathbf{y}. \nonumber \\ \label{Eqn:DynSignalEqn}
\end{eqnarray*}
With this all time-dependencies lie in quantities supposed to be known. 

\begin{definition}\label{Def:Dyn_MPI_FFL}
	Define $\mathcal{A}^\Gamma_l: \; L_2\brackets{B_R,\R^+_{0} } \to L_2\brackets{Z_T, \R}$ to be
	\begin{multline}
	\mathcal{A}^\Gamma_l c_0\brackets{\varphi,t} := -\mu_0  \int_{\R^2} c_0\brackets{\mathbf{y}}  \frac{\partial}{\partial t}   h_{\varphi,t}\brackets{\mathbf{y}} \overline{m}\brackets{-G \; \Gamma_{\varphi,t}^{-1}\mathbf{y} \cdot \mathbf{e}_{\varphi} + A\Lambda_\varphi(t)} \mathbf{e}_\varphi \cdot \mathbf{p}_l\brackets{\Gamma_{\varphi,t}^{-1}\mathbf{y}}  \dmathInt \mathbf{y}.
	\label{Def:Dyn_MPIForwardOp}
	\end{multline}
	Therewith, the forward operator for MPI-FFL scanner is given as $\mathcal{A}^\Gamma: \; L_2\brackets{B_R,\R^+_{0} } \to L_2\brackets{Z_T, \R^L}$ with $\mathcal{A}^\Gamma c_0\brackets{\varphi,t} = \brackets{\mathcal{A}^\Gamma_l c_0\brackets{\varphi,t}}_{l=1,\dots,L}$.
\end{definition}
Hence, for concentration reconstruction in the dynamic case, we need to solve the linear ill-posed inverse problem $\mathcal{A}^\Gamma c_0 = \mathbf{u} $ with measured data $\mathbf{u} = \brackets{u_l}_{l=1,\dots,L}$ and forward operator $\mathcal{A}^\Gamma$. 

\section{Relation between MPI and Radon data}\label{Sec:Relation}
We aim at a formulation of MPI data in terms of the Radon transform of the particle concentration also when being confronted with moving objects. To this end, as for the MPI forward operator we derive an adapted version of the Radon transform (cf. \cite{hahn2021motion}), i.e.
\begin{eqnarray}
\mathcal{R} c\brackets{\varphi, t, s} &=& \int_{\R^2} c\brackets{\mathbf{r},\varphi,t} \delta\brackets{\mathbf{r} \cdot \mathbf{e}_{\varphi} - s} \dmathInt \mathbf{r} \nonumber \\  
&=& \int_{\R^2} c_0\brackets{\Gamma_{\varphi,t}\mathbf{r}} h_{\varphi,t}\brackets{\Gamma_{\varphi,t}\mathbf{r}} \abs{\det \Dmath \Gamma_{\varphi,t}\mathbf{r}} \delta\brackets{\mathbf{r} \cdot \mathbf{e}_{\varphi} - s} \dmathInt \mathbf{r} \nonumber\\
&=& \int_{\R^2} c_0\brackets{\mathbf{y}} h_{\varphi,t}\brackets{\mathbf{y}} \delta\brackets{\Gamma_{\varphi,t}^{-1}\mathbf{y} \cdot \mathbf{e}_{\varphi} - s} \dmathInt \mathbf{y} =: \mathcal{R}^\Gamma c_0\brackets{\varphi, t, s}. \nonumber \\
\label{Eqn:DynRadonMPI}
\end{eqnarray}
Additionally, we define a weighted version of this dynamic Radon transform
\begin{eqnarray}
\mathcal{R}_w^\Gamma c_0\brackets{\varphi, t, s} := \int_{\R^2} w_{\varphi,t}\brackets{\mathbf{y}}c_0\brackets{\mathbf{y}} h_{\varphi,t}\brackets{\mathbf{y}} \delta\brackets{\Gamma_{\varphi,t}^{-1}\mathbf{y} \cdot \mathbf{e}_{\varphi} - s} \dmathInt \mathbf{y}
\label{Eqn:WeightedRadon}
\end{eqnarray}
with bounded weight function $w_{\varphi,t}: \, \R^2 \to \R$. In the following, we further denote
\begin{eqnarray*}
	\mathbf{R}^{{\varphi}}=\begin{pmatrix}
		\cos{\varphi}&-\sin{\varphi} \\
		\sin{\varphi} &\phantom{-}\cos{\varphi}
	\end{pmatrix}, \quad
	\mathbf{e}_{{\varphi}}=\begin{pmatrix}
		-\sin{\varphi} \\
		\phantom{-}\cos{\varphi}
	\end{pmatrix}, \quad
	\mathbf{e}^\perp_{{\varphi}}=-\begin{pmatrix}
		\cos{\varphi} \\
		\sin{\varphi}
	\end{pmatrix}.
\end{eqnarray*}
\begin{theorem} \label{Thm:DynConv}
	Under the assumption of spatially homogeneous receive coil sensitivities and sequential line rotation, we can reformulate~\eqref{Def:Dyn_MPIForwardOp} as
	\begin{equation*}
	\mathcal{A}^\Gamma_l  =  \mathcal{K}_{1,l} \circ \mathcal{R}^\Gamma + \mathcal{K}_{2,l} \circ \mathcal{R}^\Gamma_\alpha + \mathcal{K}_{3,l} \circ \mathcal{R}^\Gamma_\beta
	\label{Eqn:Thm}
	\end{equation*}
	with convolution operators $\mathcal{K}_{i,l}: \; L_2\brackets{Z_T\times \R ,\R} \to L_2\brackets{Z_T,\R}$ for $i=1,2,3$ and $l\in \left\lbrace 1,\dots,L\right\rbrace $ defined as
	\begin{eqnarray*}
		\mathcal{K}_{1,l} f\brackets{\varphi,t}
		&=&-\mu_0 \; \mathbf{e}_{\varphi}\cdot \mathbf{p}_l \; A \Lambda_\varphi'(t) \; \overline{m}' \left(G \; \cdot\right) * f\brackets{\varphi, t, \cdot}  \left(s_{\varphi,t}\right), \\
		\mathcal{K}_{2,l} f\brackets{\varphi,t}
		&=& \phantom{-} \mu_0  \; \mathbf{e}_{\varphi}\cdot \mathbf{p}_l \; G \; \overline{m}' \left(G \; \cdot\right) * f\brackets{\varphi, t, \cdot}  \left(s_{\varphi,t}\right), \\
		\mathcal{K}_{3,l} f\brackets{\varphi,t}
		&=& -\mu_0 \; \mathbf{e}_{\varphi}\cdot \mathbf{p}_l \; \overline{m} \left(G \; \cdot\right) 
		* f\brackets{\varphi, t, \cdot}  \left(s_{\varphi,t}\right)
	\end{eqnarray*}
	and weight functions
	\begin{eqnarray*} 
		\alpha_{\varphi,t}\brackets{\mathbf{y}}
		&=& \brackets{\Gamma_{\varphi,t}^{-1}}'\mathbf{y} \cdot \mathbf{e}_{\varphi}, \quad
		\beta_{\varphi,t}\brackets{\mathbf{y}}
		=  \frac{h_{\varphi,t}'\brackets{\mathbf{y}}}{h_{\varphi,t}\brackets{\mathbf{y}}}.
	\end{eqnarray*}
\end{theorem}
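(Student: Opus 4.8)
The plan is to carry out the $t$-differentiation in Definition~\ref{Def:Dyn_MPI_FFL} explicitly and to regroup the outcome into three convolution-type pieces. First I would use the hypothesis of spatially homogeneous receive coil sensitivities to replace $\mathbf{p}_l\brackets{\Gamma_{\varphi,t}^{-1}\mathbf{y}}$ by the constant vector $\mathbf{p}_l$, so that the scalar $\mathbf{e}_\varphi\cdot\mathbf{p}_l$ factors out of~\eqref{Def:Dyn_MPIForwardOp} and $\frac{\partial}{\partial t}$ acts only on $h_{\varphi,t}\brackets{\mathbf{y}}\,\overline{m}\brackets{-G\,\Gamma_{\varphi,t}^{-1}\mathbf{y}\cdot\mathbf{e}_\varphi + A\Lambda_\varphi(t)}$. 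The product rule, together with the chain rule and $\frac{\partial}{\partial t}\brackets{-G\,\Gamma_{\varphi,t}^{-1}\mathbf{y}\cdot\mathbf{e}_\varphi + A\Lambda_\varphi(t)} = -G\,\brackets{\Gamma_{\varphi,t}^{-1}}'\mathbf{y}\cdot\mathbf{e}_\varphi + A\Lambda_\varphi'(t)$, then produces an integrand equal to the sum of three terms carrying the factors $h_{\varphi,t}'\brackets{\mathbf{y}}\,\overline{m}\brackets{-G\,\Gamma_{\varphi,t}^{-1}\mathbf{y}\cdot\mathbf{e}_\varphi + A\Lambda_\varphi(t)}$, then $-G\,h_{\varphi,t}\brackets{\mathbf{y}}\,\brackets{\Gamma_{\varphi,t}^{-1}}'\mathbf{y}\cdot\mathbf{e}_\varphi\;\overline{m}'\brackets{-G\,\Gamma_{\varphi,t}^{-1}\mathbf{y}\cdot\mathbf{e}_\varphi + A\Lambda_\varphi(t)}$, and finally $A\Lambda_\varphi'(t)\,h_{\varphi,t}\brackets{\mathbf{y}}\,\overline{m}'\brackets{-G\,\Gamma_{\varphi,t}^{-1}\mathbf{y}\cdot\mathbf{e}_\varphi + A\Lambda_\varphi(t)}$, respectively. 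Rewriting $h_{\varphi,t}'\brackets{\mathbf{y}} = \beta_{\varphi,t}\brackets{\mathbf{y}}\,h_{\varphi,t}\brackets{\mathbf{y}}$ and noting $\brackets{\Gamma_{\varphi,t}^{-1}}'\mathbf{y}\cdot\mathbf{e}_\varphi = \alpha_{\varphi,t}\brackets{\mathbf{y}}$ makes the theorem's weight functions appear, while the residual scalars $A\Lambda_\varphi'(t)$, $G$ and the sign of $-\mu_0$ already match the prefactors of $\mathcal{K}_{1,l},\mathcal{K}_{2,l},\mathcal{K}_{3,l}$; according to whether $\overline{m}$ or $\overline{m}'$ and which of the weights $1,\alpha_{\varphi,t},\beta_{\varphi,t}$ occurs, the three terms are precisely the ones that will reassemble into $\mathcal{K}_{3,l}\circ\mathcal{R}^\Gamma_\beta$, $\mathcal{K}_{2,l}\circ\mathcal{R}^\Gamma_\alpha$, and $\mathcal{K}_{1,l}\circ\mathcal{R}^\Gamma$ applied to $c_0$.

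It then remains to recast each of the three $\R^2$-integrals
\begin{equation*}
I_{w,g}\brackets{\varphi,t} := \int_{\R^2} w_{\varphi,t}\brackets{\mathbf{y}}\,c_0\brackets{\mathbf{y}}\,h_{\varphi,t}\brackets{\mathbf{y}}\,g\brackets{-G\,\Gamma_{\varphi,t}^{-1}\mathbf{y}\cdot\mathbf{e}_\varphi + A\Lambda_\varphi(t)}\dmathInt\mathbf{y},
\end{equation*}
with $w_{\varphi,t}\in\left\lbrace 1,\alpha_{\varphi,t},\beta_{\varphi,t}\right\rbrace$ and $g\in\left\lbrace\overline{m},\overline{m}'\right\rbrace$, as a one-dimensional convolution. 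The key is the disintegration of the $\R^2$-integral along the level sets of $\mathbf{y}\mapsto\Gamma_{\varphi,t}^{-1}\mathbf{y}\cdot\mathbf{e}_\varphi$ — legitimate since $\Gamma_{\varphi,t}^{-1}$ is a diffeomorphism, whence this map has nowhere-vanishing gradient — which, with the weighted dynamic Radon transform~\eqref{Eqn:WeightedRadon}, reads
\begin{equation*}
I_{w,g}\brackets{\varphi,t} = \int_{\R} \mathcal{R}^\Gamma_{w} c_0\brackets{\varphi,t,s}\,g\brackets{-Gs + A\Lambda_\varphi(t)}\dmathInt s .
\end{equation*}
Using $A\Lambda_\varphi(t) = G\,s_{\varphi,t}$, so that $-Gs + A\Lambda_\varphi(t) = G\brackets{s_{\varphi,t}-s}$, the right-hand side equals $\bigl[\,g\brackets{G\,\cdot}*\mathcal{R}^\Gamma_{w} c_0\brackets{\varphi,t,\cdot}\,\bigr]\brackets{s_{\varphi,t}}$, precisely the action of the matching $\mathcal{K}_{i,l}$ on $\mathcal{R}^\Gamma_{w}c_0$. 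Summing the three contributions gives $\mathcal{A}^\Gamma_l c_0 = \mathcal{K}_{1,l}\circ\mathcal{R}^\Gamma c_0 + \mathcal{K}_{2,l}\circ\mathcal{R}^\Gamma_\alpha c_0 + \mathcal{K}_{3,l}\circ\mathcal{R}^\Gamma_\beta c_0$ (with $\mathcal{R}^\Gamma = \mathcal{R}^\Gamma_{w}$ for $w\equiv 1$); in the mass-preserving case~\eqref{CDyn:MassPres} one has $h_{\varphi,t}\equiv 1$, hence $\beta_{\varphi,t}\equiv 0$, and the last term drops out.

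The main obstacle is analytic rather than algebraic. The product and chain rules above presuppose that $\overline{m}$ is differentiable — which holds, the Langevin function~\eqref{eq:Langevin} being smooth — and that $t\mapsto\Gamma_{\varphi,t}^{-1}$ and $t\mapsto h_{\varphi,t}$ are differentiable, i.e. Assumption~\ref{Assumption}; the bounds~\eqref{Eqn:MotionBounds}, together with $\supp\brackets{c_0}\subset B_R$ and the boundedness of $\overline{m}$ and $\overline{m}'$, are what render the differentiated integrand absolutely integrable and legitimize the interchange of $\frac{\partial}{\partial t}$ with $\int_{\R^2}$ already invoked to reach~\eqref{Def:Dyn_MPIForwardOp}. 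One must also make the displayed disintegration rigorous: it is cleanest to establish it first for $g\in C_c(\R)$ through the change of variables $\mathbf{y}\mapsto\brackets{\Gamma_{\varphi,t}^{-1}\mathbf{y}\cdot\mathbf{e}_\varphi,\ \Gamma_{\varphi,t}^{-1}\mathbf{y}\cdot\mathbf{e}^\perp_\varphi}$ — the composition of $\Gamma_{\varphi,t}^{-1}$ with the orthonormal change of basis to $\left\lbrace\mathbf{e}_\varphi,\mathbf{e}^\perp_\varphi\right\rbrace$ — followed by Fubini, which at the same time furnishes the precise meaning of the $\delta$-integrals in~\eqref{Eqn:DynRadonMPI} and~\eqref{Eqn:WeightedRadon}, and then to pass to $g=\overline{m}\brackets{G\,\cdot},\,\overline{m}'\brackets{G\,\cdot}$ by density, using that $\mathcal{R}^\Gamma_{w}c_0\brackets{\varphi,t,\cdot}$ has compact support in $s$ since $\supp\brackets{c_0}\subset B_R$ and $\Gamma_{\varphi,t}$ is a diffeomorphism. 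That same compact support, with the boundedness of the kernels $\overline{m}\brackets{G\,\cdot}$, $\overline{m}'\brackets{G\,\cdot}$ and of $\Lambda_\varphi'$, also secures that the convolutions lie in $L_2\brackets{Z_T,\R}$, so that the $\mathcal{K}_{i,l}$ act between the stated spaces and the operator identity is well-defined. Once these two analytic points are in hand, matching all constants with those of $\mathcal{K}_{1,l},\mathcal{K}_{2,l},\mathcal{K}_{3,l}$ is a direct comparison.
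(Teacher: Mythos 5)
Your proposal is correct and follows essentially the same route as the paper's proof: differentiate under the integral via the product and chain rules to obtain the three terms with weights $1$, $\alpha_{\varphi,t}$, $\beta_{\varphi,t}$ and kernels $\overline{m}'$ or $\overline{m}$, identify each $\R^2$-integral with the corresponding (weighted) dynamic Radon transform through the change of variables to coordinates adapted to $\mathbf{e}_\varphi$ and $\mathbf{e}_\varphi^\perp$ (the paper's substitution $\mathbf{R}^{-\varphi}\Gamma^{-1}_{\varphi,t}\mathbf{y}=(v',s')^T$ is exactly your disintegration along the level sets of $\mathbf{y}\mapsto\Gamma^{-1}_{\varphi,t}\mathbf{y}\cdot\mathbf{e}_\varphi$), and rewrite the remaining one-dimensional integral as a convolution evaluated at $s_{\varphi,t}=\frac{A}{G}\Lambda_\varphi(t)$. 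Your additional remarks on justifying the interchange of differentiation and integration and on the rigorous meaning of the $\delta$-integrals go beyond what the paper records but do not change the argument.
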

\begin{proof}
	Computing the derivative in~\eqref{Def:Dyn_MPIForwardOp}, we obtain
	\begin{eqnarray*}
		\mathcal{A}^\Gamma_l c_0\brackets{\varphi,t}
		=& - &\mu_0 \; \mathbf{e_\varphi} \cdot \mathbf{p}_l \; A\Lambda_\varphi'\brackets{t} \,  \int_{\R^2} c_0\brackets{\mathbf{y}} h_{\varphi,t}\brackets{\mathbf{y}} \overline{m}^\prime \left( -G \; \Gamma_{\varphi,t}^{-1}\mathbf{y} \cdot \mathbf{e}_{\varphi} + A\Lambda_\varphi(t)\right)  \dmathInt \mathbf{y} \\
		&+& \mu_0 \; \mathbf{e_\varphi} \cdot \mathbf{p}_l \; G   \,  \int_{\R^2} c_0\brackets{\mathbf{y}} h_{\varphi,t}\brackets{\mathbf{y}} \\
		&\phantom{+}& \hspace{2.5cm} \times \;  \overline{m}^\prime \left( -G \; \Gamma_{\varphi,t}^{-1}\mathbf{y} \cdot \mathbf{e}_{\varphi} + A\Lambda_\varphi(t)\right)  \brackets{\Gamma_{\varphi,t}^{-1}}'\mathbf{y} \cdot \mathbf{e}_{\varphi}  \dmathInt \mathbf{y} \\
		&-& \mu_0 \; \mathbf{e_\varphi} \cdot \mathbf{p}_l   \,  \int_{\R^2} c_0\brackets{\mathbf{y}} \overline{m} \left( -G \; \Gamma_{\varphi,t}^{-1}\mathbf{y} \cdot \mathbf{e}_{\varphi} + A\Lambda_\varphi(t)\right) h_{\varphi,t}'\brackets{\mathbf{y}} \dmathInt \mathbf{y} \\
		= &-&\mu_0 \; \mathbf{e_\varphi} \cdot \mathbf{p}_l \Big( A\Lambda_\varphi'\brackets{t} \; I\brackets{\varphi,t}
		- G \; II\brackets{\varphi,t}
		+ III\brackets{\varphi,t} \Big) .
	\end{eqnarray*}
	We look at the integrals separately and proceed similar to \cite{knopp2011fourier} by substituting $\mathbf{R}^{-\varphi}\Gamma_{\varphi,t}^{-1}\mathbf{y} =: \mathbf{r}' =: \brackets{v',s'}^T$. It holds
	\[\mathbf{R}^{\varphi}\mathbf{r}' = \mathbf{R}^{\varphi}  \begin{pmatrix}v' \\s'\end{pmatrix} = s'\,\mathbf{e}_{\varphi} - v'\,\mathbf{e}_{\varphi}^\perp\] 
	yielding
	\begin{eqnarray*}
		\text{FFL}\brackets{\mathbf{e_{\varphi}}, s'} &=& \left\lbrace \mathbf{r} \in \R^2 : \; \mathbf{r} \cdot \mathbf{e}_{\varphi} = s'  \right\rbrace = 
		\left\lbrace s'\,\mathbf{e}_{\varphi} - v'\,\mathbf{e}_{\varphi}^\perp : \; v' \in \R\right\rbrace \\
		&=& \left\lbrace \mathbf{R}^{\varphi}  \begin{pmatrix}
			v' \\
			s'
		\end{pmatrix} : \; v' \in \R\right\rbrace.
	\end{eqnarray*}
	Together with~\eqref{Eqn:DynRadonMPI} we obtain 
	\begin{eqnarray*}
		I\brackets{\varphi,t} 
		&:=& \int_{\R^2} c_0\brackets{\mathbf{y}} h_{\varphi,t}\brackets{\mathbf{y}} \overline{m}^\prime \left( -G \; \Gamma_{\varphi,t}^{-1}\mathbf{y} \cdot \mathbf{e}_{\varphi} + A\Lambda_\varphi(t)\right)  \dmathInt \mathbf{y}\\
		&\hspace{1mm}=&\int_{\R^2} 
		c_0\brackets{\Gamma_{\varphi,t} \mathbf{R}^{\varphi}\mathbf{r}'} h_{\varphi,t}\brackets{\Gamma_{\varphi,t} \mathbf{R}^{\varphi}\mathbf{r}'} \abs{\det \Dmath \Gamma_{\varphi,t}\brackets{\mathbf{R}^{\varphi}\mathbf{r}'}} \\
		&\phantom{\hspace{1mm}=}& \hspace{0.5cm} \times \; \overline{m}^\prime \left( -G \; \mathbf{R}^{{\varphi}} \mathbf{r}' \cdot  \mathbf{e}_\varphi + A\Lambda_\varphi(t)\right)  \dmathInt\mathbf{r}' \\
		&\hspace{1mm}=&  \int_{\R} \overline{m}^\prime \left( -G \; s' + A\Lambda_\varphi(t)\right)\\
		&\phantom{\hspace{1mm}=}& \hspace{0.5cm} \times \; \int_{\R^2}
		c_0\brackets{\Gamma_{\varphi,t} \mathbf{r}} h_{\varphi,t}\brackets{\Gamma_{\varphi,t}\mathbf{r}} \abs{\det \Dmath \Gamma_{\varphi,t}\mathbf{r}} \delta\brackets{\mathbf{r}\cdot \mathbf{e}_\varphi-s'}   \dmathInt \mathbf{r} \dmathInt s'  \\
		&\hspace{1mm}=& \int_{\mathbb{R}}  
		\mathcal{R}^{\Gamma}c_0\left( \varphi, t, s'\right)  \overline{m}' \left( -G \; s'
		+ A\Lambda_\varphi(t)\right)  \dmathInt s' \\
		&\hspace{1mm}=& \Big[ \overline{m}' \left(G \; \cdot\right) * \mathcal{R}^\Gamma c_0\brackets{\varphi,t,\cdot} \Big] \left(\frac{A}{G}\Lambda_\varphi\brackets{t}\right).
	\end{eqnarray*}
	Analogously we get for the second and third integral 
	\begin{eqnarray*}
		II\brackets{\varphi,t}	
		&:=& \int_{\R^2} c_0\brackets{\mathbf{y}} h_{\varphi,t}\brackets{\mathbf{y}} \overline{m}^\prime \left( -G \; \Gamma_{\varphi,t}^{-1}\mathbf{y} \cdot \mathbf{e}_{\varphi} + A\Lambda_\varphi(t)\right)  \brackets{\Gamma_{\varphi,t}^{-1}}'\mathbf{y} \cdot \mathbf{e}_{\varphi}  \dmathInt \mathbf{y} \\
		&\hspace{1mm}=&\int_{\R^2} 
		c_0\brackets{\Gamma_{\varphi,t} \mathbf{R}^{\varphi}\mathbf{r}'} h_{\varphi,t}\brackets{\Gamma_{\varphi,t} \mathbf{R}^{\varphi}\mathbf{r}'} \abs{\det \Dmath \Gamma_{\varphi,t}\brackets{\mathbf{R}^{\varphi}\mathbf{r}'}} \\
		&\hspace{1mm}\phantom{=}& \hspace{0.5cm}\times \; \overline{m}^\prime \left( -G \; \mathbf{R}^{{\varphi}} \mathbf{r}' \cdot  \mathbf{e}_\varphi + A\Lambda_\varphi(t)\right)  \brackets{\Gamma_{\varphi,t}^{-1}}' \brackets{\Gamma_{\varphi,t}\mathbf{R}^{\varphi}\mathbf{r}'} \cdot \mathbf{e}_{\varphi} \dmathInt\mathbf{r}' \\
		&\hspace{1mm}=&  \int_{\mathbb{R}}  
		\mathcal{R}_\alpha^{\Gamma}c_0\left( \varphi, t, s'\right)  \overline{m}' \left( -G \; s'
		+ A\Lambda_\varphi(t)\right)  \dmathInt s'\\
		&\hspace{1mm}=& \Big[ \overline{m}' \left(G \; \cdot\right) * \mathcal{R}_\alpha^\Gamma c_0\brackets{\varphi,t,\cdot}\Big] \left(\frac{A}{G}\Lambda_\varphi\brackets{t}\right)
	\end{eqnarray*}
	as well as 
	\begin{eqnarray*}
		III\brackets{\varphi,t}	
		&:=& \int_{\R^2} c_0\brackets{\mathbf{y}} \overline{m} \left( -G \; \Gamma_{\varphi,t}^{-1}\mathbf{y} \cdot \mathbf{e}_{\varphi} + A\Lambda_\varphi(t)\right) h_{\varphi,t}'\brackets{\mathbf{y}} \dmathInt \mathbf{y}\\
		&\hspace{1mm}=& \int_{\R^2} 
		c_0\brackets{\Gamma_{\varphi,t} \mathbf{R}^{\varphi}\mathbf{r}'} h_{\varphi,t}\brackets{\Gamma_{\varphi,t} \mathbf{R}^{\varphi}\mathbf{r}'} \abs{\det \Dmath \Gamma_{\varphi,t}\brackets{\mathbf{R}^{\varphi}\mathbf{r}'}}  \\
		&\hspace{1mm}\phantom{=}& \hspace{0.5cm} 
		\times \; \overline{m} \left( -G \; \mathbf{R}^{{\varphi}} \mathbf{r}' \cdot  \mathbf{e}_\varphi + A\Lambda_\varphi(t)\right)   \frac{h_{\varphi,t}'\brackets{\Gamma_{\varphi,t} \mathbf{R}^{\varphi}\mathbf{r}'}}{h_{\varphi,t}\brackets{\Gamma_{\varphi,t} \mathbf{R}^{\varphi}\mathbf{r}'}} \dmathInt\mathbf{r}' \\
		&\hspace{1mm}=& \int_{\mathbb{R}}   
		\mathcal{R}_\beta^{\Gamma}c_0\left( \varphi, t, s'\right) \overline{m} \left( -G \; s'
		+ A\Lambda_\varphi(t)\right)  \dmathInt s' \\
		&\hspace{1mm}=&  \Big[ \overline{m} \left(G \; \cdot\right) * \mathcal{R}_\beta^\Gamma c_0\brackets{\varphi,t,\cdot}\Big] \left(\frac{A}{G}\Lambda_\varphi\brackets{t}\right).
	\end{eqnarray*}
	Thus, we have
	\begin{eqnarray*}
		\mathcal{A}^\Gamma_l c_0\brackets{\varphi,t}
		&=& - \mu_0 \; \mathbf{e_\varphi} \cdot \mathbf{p}_l \Big( A\Lambda_\varphi'\brackets{t} \, I\left( \varphi, t\right)  
		- G   \, II\left( \varphi,t\right) + III\left( \varphi,t\right)\Big) \\
		&=& \left[ \mathcal{K}_{1,l} \brackets{\mathcal{R}^\Gamma c_0} + \mathcal{K}_{2,l} \brackets{\mathcal{R}^\Gamma_\alpha c_0} + \mathcal{K}_{3,l} \brackets{ \mathcal{R}^\Gamma_\beta c_0}\right] \brackets{\varphi,t}.
	\end{eqnarray*}
\end{proof}
\begin{remark}
	Note that for static objects we have $\Gamma_{\varphi,t} = \text{Id}$ and Theorem~\ref{Thm:DynConv} reduces to Theorem~\ref{Thm:FourierSlice}.  
\end{remark}
\begin{corollary} \label{corollary}
	In case that there exist scalar functions $\tilde{\alpha}_{\varphi,t},\tilde{\beta}_{\varphi,t}: \, \R \to \R$ such that 
	\begin{eqnarray*}
		\alpha_{\varphi,t}\brackets{\mathbf{y}} 
		&=& \tilde{\alpha}_{\varphi,t}\brackets{\Gamma^{-1}_{\varphi,t} \mathbf{y} \cdot \mathbf{e}_\varphi}, \quad
		\beta_{\varphi,t}\brackets{\mathbf{y}} 
		= \tilde{\beta}_{\varphi,t}\brackets{\Gamma^{-1}_{\varphi,t} \mathbf{y} \cdot \mathbf{e}_\varphi}
	\end{eqnarray*}
	we can write
	\begin{equation}
	\mathcal{A}^\Gamma_l  =  \mathcal{K}_{l} \circ \mathcal{R}^\Gamma
	\label{Eqn:Cor}
	\end{equation}
	with $\mathcal{K}_{l}: \; L_2\brackets{Z_T\times \R,\R} \to L_2\brackets{Z_T,\R}$ for $l\in \left\lbrace 1,\dots,L\right\rbrace $ 
	\begin{eqnarray*}
		\mathcal{K}_{l} f \brackets{\varphi,t}
		&=& \left[ \mathcal{K}_{1,l}f+ \mathcal{K}_{2,l} \brackets{\tilde{\alpha}_{\varphi,t}f} + \mathcal{K}_{3,l} \brackets{\tilde{\beta}_{\varphi,t}f}\right] \brackets{\varphi,t}.
	\end{eqnarray*}
\end{corollary}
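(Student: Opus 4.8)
The plan is to deduce the corollary directly from Theorem~\ref{Thm:DynConv} by showing that, under the separability hypotheses, the two auxiliary weighted Radon transforms collapse to pointwise rescalings of $\mathcal{R}^\Gamma c_0$. Starting from the definition~\eqref{Eqn:WeightedRadon} of $\mathcal{R}^\Gamma_w$ with $w_{\varphi,t} = \alpha_{\varphi,t}$, I would exploit that the Dirac factor $\delta\brackets{\Gamma^{-1}_{\varphi,t}\mathbf{y}\cdot\mathbf{e}_\varphi - s}$ forces $\Gamma^{-1}_{\varphi,t}\mathbf{y}\cdot\mathbf{e}_\varphi = s$ on its support; hence, by the assumed form $\alpha_{\varphi,t}\brackets{\mathbf{y}} = \tilde{\alpha}_{\varphi,t}\brackets{\Gamma^{-1}_{\varphi,t}\mathbf{y}\cdot\mathbf{e}_\varphi}$, the weight equals the $\mathbf{y}$-independent constant $\tilde{\alpha}_{\varphi,t}(s)$ and can be pulled out of the integral. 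What remains is precisely $\mathcal{R}^\Gamma c_0(\varphi,t,s)$ from~\eqref{Eqn:DynRadonMPI}, so $\mathcal{R}^\Gamma_\alpha c_0(\varphi,t,s) = \tilde{\alpha}_{\varphi,t}(s)\,\mathcal{R}^\Gamma c_0(\varphi,t,s)$, and the same argument with $\beta_{\varphi,t}$ gives $\mathcal{R}^\Gamma_\beta c_0(\varphi,t,s) = \tilde{\beta}_{\varphi,t}(s)\,\mathcal{R}^\Gamma c_0(\varphi,t,s)$. To make the ``evaluation on the support of $\delta$'' rigorous I would simply repeat the change of variables $\mathbf{r}' := \mathbf{R}^{-\varphi}\Gamma^{-1}_{\varphi,t}\mathbf{y} =: \brackets{v',s'}^T$ used in the proof of Theorem~\ref{Thm:DynConv}, after which the delta becomes $\delta(s'-s)$ and the $s'$-integration produces the factor $\tilde{\alpha}_{\varphi,t}(s)$ outright.

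Next I would insert these two identities into the decomposition $\mathcal{A}^\Gamma_l = \mathcal{K}_{1,l}\circ\mathcal{R}^\Gamma + \mathcal{K}_{2,l}\circ\mathcal{R}^\Gamma_\alpha + \mathcal{K}_{3,l}\circ\mathcal{R}^\Gamma_\beta$ supplied by Theorem~\ref{Thm:DynConv}. Each operator $\mathcal{K}_{i,l}$ acts on its argument $f$ only through a convolution in the last variable $s$ evaluated at $s_{\varphi,t}$, all remaining factors ($\mathbf{e}_\varphi\cdot\mathbf{p}_l$, $A\Lambda_\varphi'(t)$, $G$) depending on $(\varphi,t)$ alone; therefore feeding $f(\varphi,t,s) = \tilde{\alpha}_{\varphi,t}(s)\,\mathcal{R}^\Gamma c_0(\varphi,t,s)$ into $\mathcal{K}_{2,l}$ is by definition $\mathcal{K}_{2,l}\brackets{\tilde{\alpha}_{\varphi,t}\,\mathcal{R}^\Gamma c_0}(\varphi,t)$, and likewise for the $\beta$-term. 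Collecting the three contributions yields $\mathcal{A}^\Gamma_l c_0(\varphi,t) = \brackets{\mathcal{K}_{1,l}(\mathcal{R}^\Gamma c_0) + \mathcal{K}_{2,l}(\tilde{\alpha}_{\varphi,t}\mathcal{R}^\Gamma c_0) + \mathcal{K}_{3,l}(\tilde{\beta}_{\varphi,t}\mathcal{R}^\Gamma c_0)}(\varphi,t)$, which is exactly $\mathcal{K}_l\circ\mathcal{R}^\Gamma$ with $\mathcal{K}_l$ as stated in~\eqref{Eqn:Cor}.

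Finally, a small amount of bookkeeping is needed to confirm that $\mathcal{K}_l$ genuinely maps $L_2\brackets{Z_T\times\R,\R}$ into $L_2\brackets{Z_T,\R}$: this requires $\tilde{\alpha}_{\varphi,t}$ and $\tilde{\beta}_{\varphi,t}$ to be bounded, which holds since $\abs{\alpha_{\varphi,t}(\mathbf{y})} \leq \frac{A}{G}C_{\varphi,t}$ by Assumption~\ref{Assumption} and $\beta_{\varphi,t}$ is bounded by construction (as already required of the weight in~\eqref{Eqn:WeightedRadon}), while surjectivity of $\mathbf{y}\mapsto\Gamma^{-1}_{\varphi,t}\mathbf{y}\cdot\mathbf{e}_\varphi$ onto $\R$ (because $\Gamma_{\varphi,t}$ is a diffeomorphism of $\R^2$) guarantees $\tilde{\alpha}_{\varphi,t},\tilde{\beta}_{\varphi,t}$ are well-defined on all of $\R$. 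I do not anticipate a genuine obstacle here: once Theorem~\ref{Thm:DynConv} is available, the corollary is essentially a one-step specialization, and the only point demanding any care — the distributional handling of $\delta$ — is disposed of by the very substitution already performed in the proof of that theorem.
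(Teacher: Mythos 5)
Your proposal is correct and follows exactly the paper's own route: you show via the sifting property of the Dirac factor (made rigorous by the same change of variables as in Theorem~\ref{Thm:DynConv}) that $\mathcal{R}^\Gamma_\alpha c_0 = \tilde{\alpha}_{\varphi,t}(s)\,\mathcal{R}^\Gamma c_0$ and $\mathcal{R}^\Gamma_\beta c_0 = \tilde{\beta}_{\varphi,t}(s)\,\mathcal{R}^\Gamma c_0$, then substitute into the decomposition of Theorem~\ref{Thm:DynConv}. The paper's proof is a two-line version of precisely this argument; your additional remarks on boundedness and well-definedness are sensible but not part of the paper's reasoning.
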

\begin{proof}
	Under the given assumption, it directly follows from the definition of the weighted Radon transform~\eqref{Eqn:WeightedRadon} that
	\begin{eqnarray*}
		\mathcal{R}_\alpha^\Gamma c_0\brackets{\varphi, t, s} 
		&=& \tilde{\alpha}_{\varphi,t}\brackets{s} \mathcal{R}^\Gamma c_0\brackets{\varphi, t, s}, \quad
		\mathcal{R}_\beta^\Gamma c_0\brackets{\varphi, t, s}
		=\tilde{\beta}_{\varphi,t}\brackets{s} \mathcal{R}^\Gamma c_0\brackets{\varphi, t, s}
	\end{eqnarray*}
	and the representation~\eqref{Eqn:Cor} is then a consequence of the last theorem.
\end{proof}
\begin{remark} \label{Rem:beta}
	For the assumption of mass preservation it holds $h_{\varphi,t}\brackets{\mathbf{y}} =  1$ and thus
	\begin{equation*}
	\beta_{\varphi,t}\brackets{\mathbf{y}} 
	=  \frac{h_{\varphi,t}'\brackets{\mathbf{y}}}{h_{\varphi,t}\brackets{\mathbf{y}}} = 0 \quad \Longrightarrow \quad \mathcal{K}_{3,l} \circ \mathcal{R}^\Gamma_\beta = 0,
	\end{equation*}
	i.e. the third component of the signal equation vanishes. For intensity preservation we have $h_{\varphi,t}\brackets{\mathbf{y}} = \abs{\det \Dmath \Gamma^{-1}_{\varphi,t}\mathbf{y}}$ and
	\begin{equation*}
	h'_{\varphi,t}\brackets{\mathbf{y}} 
	= \frac{\partial}{\partial t} \abs{\det \Dmath \Gamma^{-1}_{\varphi,t}\mathbf{y}}
	= \frac{\det \Dmath \Gamma^{-1}_{\varphi,t}\mathbf{y} \, \frac{\partial}{\partial t} \det \Dmath \Gamma^{-1}_{\varphi,t}\mathbf{y}}{\abs{\det \Dmath \Gamma^{-1}_{\varphi,t}\mathbf{y}}}.
	\end{equation*}
	According to Jacobi's formula it holds
	\begin{equation*}
	\frac{\partial}{\partial t} \det \Dmath \Gamma^{-1}_{\varphi,t}\mathbf{y}
	= \det \Dmath \Gamma^{-1}_{\varphi,t}\mathbf{y} \; \tr{\brackets{\Dmath \Gamma^{-1}_{\varphi,t}}^{-1}\brackets{\mathbf{y}} \, \brackets{\Dmath \Gamma^{-1}_{\varphi,t}}'\brackets{\mathbf{y}}},
	\end{equation*}
	which leads to
	\begin{eqnarray*}
		\beta_{\varphi,t}\brackets{\mathbf{y}} 
		&=& \frac{\frac{\partial}{\partial t} \abs{\det \Dmath \Gamma^{-1}_{\varphi,t}\mathbf{y}}}{\abs{\det \Dmath \Gamma^{-1}_{\varphi,t}\mathbf{y}}}
		= \frac{\det \Dmath \Gamma^{-1}_{\varphi,t}\mathbf{y} \, \frac{\partial}{\partial t} \det \Dmath \Gamma^{-1}_{\varphi,t}\mathbf{y}}{\abs{\det \Dmath \Gamma^{-1}_{\varphi,t}\mathbf{y}}^2}
		= \frac{\frac{\partial}{\partial t} \det \Dmath \Gamma^{-1}_{\varphi,t}\mathbf{y}}{\det \Dmath \Gamma^{-1}_{\varphi,t}\mathbf{y}}\\
		&=& \tr{\brackets{\Dmath \Gamma^{-1}_{\varphi,t}}^{-1}\brackets{\mathbf{y}} \, \brackets{\Dmath \Gamma^{-1}_{\varphi,t}}'\brackets{\mathbf{y}}}
		= \tr{\Dmath \Gamma_{\varphi,t}\brackets{\Gamma^{-1}_{\varphi,t}\mathbf{y}} \, \brackets{\Dmath \Gamma^{-1}_{\varphi,t}}'\brackets{\mathbf{y}}}.
	\end{eqnarray*}
\end{remark}
Next, we want to regard the specific case of affine motions. Hence, suppose $\Gamma^{-1}_{\varphi,t}\mathbf{y} = \mathbf{A}_{\varphi,t} \mathbf{y} + \mathbf{b}_{\varphi,t}$ for some $\mathbf{b}_{\varphi,t} \in \R^{2},\;\mathbf{A}_{\varphi,t} \in \R^{2 \times 2}$ being differentiable with respect to time and satisfying $\det \mathbf{A}_{\varphi,t}\neq 0$. We know from the last remark that for the mass preserving setting we have $\beta_{\varphi,t} = 0$ and for the intensity preserving case we get
\begin{eqnarray}
\beta_{\varphi,t}\brackets{\mathbf{y}}
= \tr{\mathbf{A}^{-1}_{\varphi,t} \, \mathbf{A}'_{\varphi,t}}, \label{Ex:beta}
\end{eqnarray}
which is constant for fixed angle $\varphi$ and time $t$. Even for affine motions the weighting function $\alpha_{\varphi,t}\brackets{\mathbf{y}}$ cannot be written as in Corollary~\ref{corollary} in general but we can extract the part, which does not depend on $\mathbf{y}$
\begin{equation*}
\alpha_{\varphi,t}\brackets{\mathbf{y}} = \brackets{\mathbf{A}'_{\varphi,t}\mathbf{y} + \mathbf{b}_{\varphi,t}'}\cdot\mathbf{e}_\varphi =: \mathbf{A}'_{\varphi,t}\mathbf{y} \cdot\mathbf{e}_\varphi + \tilde{\alpha}_{\varphi,t}
\end{equation*}
and write
\begin{equation*}
\mathcal{A}^\Gamma_l  =  \mathcal{K}_{l} \circ \mathcal{R}^\Gamma + \mathcal{K}_{2,l} \circ \mathcal{R}^\Gamma_{\mathbf{A}'_{\varphi,t}\mathbf{y} \cdot\mathbf{e}_\varphi}.
\end{equation*}
\begin{example} \label{Ex:alpha}
	Let for $a_{\varphi,t} \neq 0$ being differentiable with respect to $t$ the deformation matrix be given as 
	\begin{equation*}
	\mathbf{A}_{\varphi,t} := \mathbf{R}^\varphi \begin{pmatrix}
	a_{\varphi,t}& 0 \\
	0 & 1
	\end{pmatrix} \mathbf{R}^{-\varphi}
	\end{equation*}
	resulting in an object deformation restricted to directions orthogonal to the FFL \textcolor{black}{movement}. It holds
	\begin{eqnarray*}
		\mathbf{A}_{\varphi,t} \mathbf{y} = 
		a_{\varphi,t}\brackets{ \mathbf{y} \cdot \mathbf{e}_\varphi^\perp } \mathbf{e}_\varphi^\perp + \brackets{\mathbf{y} \cdot \mathbf{e}_\varphi } \mathbf{e}_\varphi  			 
	\end{eqnarray*}
	yielding $\mathbf{A}'_{\varphi,t} \mathbf{y} = a'_{\varphi,t} \brackets{ \mathbf{y} \cdot \mathbf{e}_\varphi^\perp } \mathbf{e}_\varphi^\perp$ and thus $\mathbf{A}'_{\varphi,t}\mathbf{y} \cdot \mathbf{e}_{\varphi}=0.$ We further have 
	\begin{eqnarray*}
		\mathcal{R}^\Gamma c_0\brackets{\varphi, t, s} 
		&=& \int_{\R^2} c_0\brackets{\mathbf{y}} h_{\varphi,t}\brackets{\mathbf{y}} \delta\brackets{\mathbf{A}_{\varphi,t}\mathbf{y} \cdot \mathbf{e}_{\varphi} + \mathbf{b}_{\varphi,t} \cdot \mathbf{e}_{\varphi}  - s} \dmathInt \mathbf{y} \\
		&=& \int_{\R^2} c_0\brackets{\mathbf{y}} \textcolor{black}{\abs{a_{\varphi,t}}} \delta\brackets{\mathbf{y} \cdot \mathbf{e}_\varphi + \mathbf{b}_{\varphi,t} \cdot \mathbf{e}_{\varphi}  - s} \dmathInt \mathbf{y} \\
		&=& \textcolor{black}{\abs{a_{\varphi,t}}} \mathcal{R} \textcolor{black}{c_0} \brackets{\varphi, t, s-\mathbf{b}_{\varphi,t} \cdot \mathbf{e}_{\varphi}}.
	\end{eqnarray*}
\end{example}
We conclude from the last example that for object translation and further for objects deforming only orthogonal to the FFL movement  $\alpha_{\varphi,t}\brackets{\mathbf{y}}=\tilde{\alpha}_{\varphi,t}$ is spatially independent.
\begin{example}
	As a second example, we regard rigid phantom rotations $\Gamma^{-1}_{\varphi,t}\mathbf{y}=\mathbf{R}^{a_{\varphi,t}}\mathbf{y}$. We obtain
	\begin{equation*}
	\Gamma^{-1}_{\varphi,t}\mathbf{y} \cdot \mathbf{e}_\varphi = \mathbf{y} \cdot \mathbf{e}_{\varphi - a_{\varphi,t}},\quad \det \Dmath\Gamma^{-1}_{\varphi,t}\mathbf{y}=1
	\end{equation*}
	and thus
	\begin{equation*}
	\alpha_{\varphi,t}=\brackets{\Gamma^{-1}_{\varphi,t}}'\mathbf{y} \cdot \mathbf{e}_\varphi = -a'_{\varphi,t}\mathbf{y} \cdot \mathbf{e}^\perp_{\varphi - a_{\varphi,t}}, \quad \beta_{\varphi,t} = 0.
	\end{equation*}
\end{example}
\begin{remark}
	The previous example builds a link towards the setting of static phantoms together with the simultaneous line rotation scanning geometry, i.e. the FFL rotates concurrently with its translation, investigated in~\cite{https://doi.org/10.48550/arxiv.2211.11683}.
\end{remark}
\begin{remark}
	Being able to derive dynamic Radon data $\mathcal{R}^\Gamma c_0$ from MPI measurements not only allows application of corresponding reconstruction methods from dynamic CT but also gives access towards results regarding the amount of information contained in the data as e.g. given in~(\cite{doi:10.1137/16M1057917},~\cite{hahn2021microlocal}). For example choosing $a_{\varphi,t} = \varphi$ for the phantom rotation ends in highly insufficient data since we only get information for one angle though data is acquired for different FFL directions. 
\end{remark}
Finally, we want to develop estimates for $\mathcal{K}_{2,l}\circ \mathcal{R}^\Gamma_\alpha$ and $\mathcal{K}_{3,l}\circ \mathcal{R}^\Gamma_\beta$ for $l\in\left\lbrace 1,\dots,L\right\rbrace$. 
\begin{lemma} \label{Lem:Estimate1}
	It holds 
	\begin{eqnarray*}
		\abs{\mathcal{K}_{1,l} \mathcal{R}^\Gamma c_0\brackets{\varphi,t}}
		&\geq& \abs{\frac{\Lambda_\varphi'(t)}{C_{\varphi,t}}}\abs{\mathcal{K}_{2,l} \mathcal{R}_\alpha^\Gamma c_0\brackets{\varphi,t}}.
	\end{eqnarray*}
\end{lemma}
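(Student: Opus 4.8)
The plan is to reduce the asserted inequality to a pointwise bound on the weight $\alpha_{\varphi,t}$, using the representations already produced in the proof of Theorem~\ref{Thm:DynConv}. Recall from that proof that
\[
\mathcal{K}_{1,l}\mathcal{R}^\Gamma c_0\brackets{\varphi,t} = -\mu_0\,\mathbf{e}_\varphi\cdot\mathbf{p}_l\,A\Lambda_\varphi'(t)\,I\brackets{\varphi,t}, \qquad \mathcal{K}_{2,l}\mathcal{R}^\Gamma_\alpha c_0\brackets{\varphi,t} = \mu_0\,\mathbf{e}_\varphi\cdot\mathbf{p}_l\,G\,II\brackets{\varphi,t},
\]
where $I$ and $II$ are the $\R^2$-integrals appearing there, and the integrand of $II$ is that of $I$ multiplied by the extra factor $\alpha_{\varphi,t}\brackets{\mathbf{y}} = \brackets{\Gamma^{-1}_{\varphi,t}}'\mathbf{y}\cdot\mathbf{e}_\varphi$. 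Taking moduli and cancelling the common scalar $\abs{\mu_0\,\mathbf{e}_\varphi\cdot\mathbf{p}_l\,\Lambda_\varphi'(t)}$ — if it vanishes, both sides of the claimed inequality are zero and there is nothing to prove — the statement becomes equivalent to $\abs{II\brackets{\varphi,t}} \le \frac{A}{G}C_{\varphi,t}\,\abs{I\brackets{\varphi,t}}$, so everything is pointwise in $\brackets{\varphi,t}$.

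First I would observe that the integrand of $I$ is non-negative: $c_0\ge 0$ by assumption, $h_{\varphi,t}\ge 0$ since it is either identically $1$ or the modulus of a Jacobian determinant, and $\overline{m}'\ge 0$ because $\overline{m}\brackets{H} = m\mathcal{L}\brackets{\tfrac{\mu_0 m}{k_{\mathrm{B}}T_p}H}$ with the Langevin function monotonically increasing (its derivative $\tfrac{1}{\lambda^2}-\tfrac{1}{\sinh^2\lambda}$ is positive). Hence $\abs{I\brackets{\varphi,t}} = I\brackets{\varphi,t}$ is the integral of a non-negative function. Next, by the Cauchy--Schwarz inequality, $\norm{\mathbf{e}_\varphi}=1$, and the first bound in Assumption~\ref{Assumption},
\[
\abs{\alpha_{\varphi,t}\brackets{\mathbf{y}}} = \abs{\brackets{\Gamma^{-1}_{\varphi,t}}'\mathbf{y}\cdot\mathbf{e}_\varphi} \le \norm{\brackets{\Gamma^{-1}_{\varphi,t}}'\mathbf{y}} \le \frac{A}{G}C_{\varphi,t} \qquad \text{for all } \mathbf{y}\in\R^2.
\]
Pulling the modulus inside the integral defining $II$ and using this bound together with the fact that the remaining factors form exactly the non-negative integrand of $I$, one gets $\abs{II\brackets{\varphi,t}}\le \frac{A}{G}C_{\varphi,t}\,I\brackets{\varphi,t} = \frac{A}{G}C_{\varphi,t}\,\abs{I\brackets{\varphi,t}}$. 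Multiplying through again by $G\,\abs{\mu_0\,\mathbf{e}_\varphi\cdot\mathbf{p}_l\,\Lambda_\varphi'(t)}$ and using $G\cdot\tfrac{A}{G}=A$ recovers $\abs{\mathcal{K}_{1,l}\mathcal{R}^\Gamma c_0\brackets{\varphi,t}} \ge \abs{\Lambda_\varphi'(t)/C_{\varphi,t}}\,\abs{\mathcal{K}_{2,l}\mathcal{R}^\Gamma_\alpha c_0\brackets{\varphi,t}}$, which is the claim.

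There is no serious obstacle; the only point that needs care is the sign of $\overline{m}'$, since the argument crucially uses that $\abs{I}$ equals the integral of the absolute value of its integrand, not merely the weaker estimate $\abs{I}\le\int\abs{\cdot}$. This is exactly where monotonicity of the Langevin function enters, and I would state it explicitly. Beyond that, the integrability needed to split and bound the integrals is already guaranteed by Assumption~\ref{Assumption} (which was introduced precisely to make the signal equation well defined), so the remaining manipulations are routine.
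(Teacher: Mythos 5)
Your proof is correct and follows essentially the same route as the paper: both reduce the claim to the pointwise bound $\abs{\alpha_{\varphi,t}\brackets{\mathbf{y}}}\leq\norm{\brackets{\Gamma^{-1}_{\varphi,t}}'\mathbf{y}}\leq\frac{A}{G}C_{\varphi,t}$ from Assumption~\ref{Assumption} and then exploit the non-negativity of the remaining integrand, the only cosmetic difference being that the paper phrases the estimate at the level of $\mathcal{R}^\Gamma_\alpha c_0 \leq \frac{A}{G}C_{\varphi,t}\,\mathcal{R}^\Gamma c_0$ and the subsequent convolution with $\overline{m}'\brackets{G\,\cdot}$, while you work directly with the integrals $I$ and $II$. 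Your explicit remark that $\overline{m}'\geq 0$ is needed (so that the one-sided bound upgrades to a bound on absolute values after convolution) is a point the paper leaves implicit, and is worth stating.
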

\begin{proof}
	First, we regard an estimate for the weighted Radon transform
	\begin{eqnarray*}
		\mathcal{R}_\alpha^\Gamma c_0\brackets{\varphi, t, s} 
		&=& \int_{\R^2} \alpha_{\varphi,t}\brackets{\mathbf{y}}c_0\brackets{\mathbf{y}} h_{\varphi,t}\brackets{\mathbf{y}} \delta\brackets{\Gamma_{\varphi,t}^{-1}\mathbf{y} \cdot \mathbf{e}_{\varphi} - s} \dmathInt \mathbf{y} \\ 
		&\leq& \frac{A}{G}C_{\varphi,t} \mathcal{R}^\Gamma c_0\brackets{\varphi, t, s},
	\end{eqnarray*}
	where we used $\displaystyle	\alpha_{\varphi,t}\brackets{\mathbf{y}} \leq \abs{\alpha_{\varphi,t}\brackets{\mathbf{y}}} \leq \norm{\brackets{\Gamma_{\varphi,t}^{-1}}'\mathbf{y}}\leq \frac{A}{G} C_{\varphi,t}$, which holds due to Assumption~\ref{Assumption}. Therewith, we obtain
	\begin{eqnarray*}
		\abs{\mathcal{K}_{1,l} \mathcal{R}^\Gamma c_0\brackets{\varphi,t}}
		&=& \abs{\mu_0 \; \mathbf{e}_{\varphi}\cdot \mathbf{p}_l \; A \Lambda_\varphi'(t) \; \overline{m}' \left(G \; \cdot\right) * \mathcal{R}^\Gamma c_0\brackets{\varphi, t, \cdot}  \left(s_{\varphi,t}\right)} \\
		&\geq& \abs{\mu_0 \; \mathbf{e}_{\varphi}\cdot \mathbf{p}_l \; G \,\frac{\Lambda_\varphi'(t)}{C_{\varphi,t}} \; \overline{m}' \left(G \; \cdot\right) * \mathcal{R}_\alpha^\Gamma c_0\brackets{\varphi, t, \cdot}  \left(s_{\varphi,t}\right)} \\
		&=& \abs{\frac{\Lambda_\varphi'(t)}{C_{\varphi,t}}}\abs{\mathcal{K}_{2,l} \mathcal{R}_\alpha^\Gamma c_0\brackets{\varphi,t}}.
	\end{eqnarray*}
\end{proof}
\begin{lemma} \label{Lem:Estimate2}
	Let $\overline{m}'_\infty:=\underset{{\lambda\in\R}}{\max}\;\overline{m}'\brackets{\lambda}$ and $c_{max}$ the maximal particle concentration during measurements, it holds that
	\begin{eqnarray*}
		\abs{\mathcal{K}_{2,l} \mathcal{R}^\Gamma_\alpha c_0\brackets{\varphi,t}}
		&\leq&	\mu_0 \norm{\mathbf{p}_l} \overline{m}'_\infty \; A \; C_{\varphi,t} \;c_{\max}\pi R^2, \\
		\abs{\mathcal{K}_{3,l} \mathcal{R}^\Gamma_\beta c_0\brackets{\varphi,t}} 
		&\leq&	\mu_0 \norm{\mathbf{p}_l} m \; D_{\varphi,t} \;c_{\max}\pi R^2.
	\end{eqnarray*}
\end{lemma}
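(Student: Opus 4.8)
The plan is to estimate each of the two operators directly, falling back on the explicit $\R^2$-integrals $II\brackets{\varphi,t}$ and $III\brackets{\varphi,t}$ already computed in the proof of Theorem~\ref{Thm:DynConv}. From that proof one reads off
\[
\mathcal{K}_{2,l}\mathcal{R}^\Gamma_\alpha c_0\brackets{\varphi,t}=\mu_0\,\mathbf{e}_\varphi\cdot\mathbf{p}_l\,G\,II\brackets{\varphi,t},\qquad
\mathcal{K}_{3,l}\mathcal{R}^\Gamma_\beta c_0\brackets{\varphi,t}=-\mu_0\,\mathbf{e}_\varphi\cdot\mathbf{p}_l\,III\brackets{\varphi,t},
\]
where $II$ is the integral of $c_0\,h_{\varphi,t}\,\overline{m}'\brackets{\cdots}\,\alpha_{\varphi,t}$ and $III$ the integral of $c_0\,\overline{m}\brackets{\cdots}\,h'_{\varphi,t}$ over $\R^2$; equivalently, one expands the convolutions and integrates out the $\delta$-distribution in $\mathcal{R}^\Gamma_\alpha$, $\mathcal{R}^\Gamma_\beta$ (its $s'$-integral being a pure factor $1$), arriving at the same integrals. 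In both estimates I would begin with $\abs{\mathbf{e}_\varphi\cdot\mathbf{p}_l}\le\norm{\mathbf{p}_l}$ (Cauchy--Schwarz, $\norm{\mathbf{e}_\varphi}=1$) and move the modulus into the integral, using $c_0\ge0$ and $h_{\varphi,t}\ge0$.

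For the first bound I would then use $\overline{m}'\le\overline{m}'_\infty$ (note $\overline{m}'>0$ since $\mathcal{L}$ is strictly increasing, hence $\abs{\overline{m}'}=\overline{m}'$) together with $\abs{\alpha_{\varphi,t}\brackets{\mathbf{y}}}=\abs{\brackets{\Gamma^{-1}_{\varphi,t}}'\mathbf{y}\cdot\mathbf{e}_\varphi}\le\norm{\brackets{\Gamma^{-1}_{\varphi,t}}'\mathbf{y}}\le\frac{A}{G}C_{\varphi,t}$ from Assumption~\ref{Assumption}, exactly as in the proof of Lemma~\ref{Lem:Estimate1}. The factor $G$ cancels against $\frac{1}{G}$ and what remains is $\mu_0\norm{\mathbf{p}_l}\overline{m}'_\infty A\,C_{\varphi,t}\int_{\R^2}c_0\brackets{\mathbf{y}}h_{\varphi,t}\brackets{\mathbf{y}}\dmathInt\mathbf{y}$. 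Substituting $\mathbf{y}=\Gamma_{\varphi,t}\mathbf{r}$ and invoking~\eqref{Eqn:CDyn} rewrites the last integral as $\int_{\R^2}c\brackets{\mathbf{r},\varphi,t}\dmathInt\mathbf{r}$, which by $\supp c\brackets{\cdot,\varphi,t}\subset B_R$ and $c\brackets{\cdot,\varphi,t}\le c_{\max}$ is at most $c_{\max}\pi R^2$; this yields the first inequality.

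The second bound is entirely parallel: use $\abs{\overline{m}\brackets{\lambda}}=\abs{m\,\mathcal{L}\brackets{\tfrac{\mu_0 m}{k_{\text{B}}T_p}\lambda}}\le m$ (since $\abs{\mathcal{L}}\le1$) and $\abs{h'_{\varphi,t}}\le D_{\varphi,t}$ from Assumption~\ref{Assumption}, leaving $\mu_0\norm{\mathbf{p}_l}m\,D_{\varphi,t}\int_{\R^2}c_0\brackets{\mathbf{y}}\dmathInt\mathbf{y}$. For mass preservation $h_{\varphi,t}\equiv1$, so $\beta_{\varphi,t}=0$ and the left-hand side already vanishes (cf.\ Remark~\ref{Rem:beta}); for intensity preservation $c\brackets{\cdot,\varphi,t}=c_0\circ\Gamma_{\varphi,t}$, so $c_0$ and $c\brackets{\cdot,\varphi,t}$ share the same range, whence $c_0\le c_{\max}$, and with $\supp c_0\subset B_R$ the integral is again $\le c_{\max}\pi R^2$.

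I do not expect a genuine obstacle here: the argument is a chain of triangle inequalities, supremum bounds, and one change of variables. The only points that warrant a little care are the book-keeping around the $\delta$-distribution (its integration against $s'$ contributes a factor $1$, not a Jacobian) and — if one prefers to route the third term through $\mathcal{R}^\Gamma_\beta$ rather than through $III$ — the identity $\beta_{\varphi,t}h_{\varphi,t}=h'_{\varphi,t}$, which requires $h_{\varphi,t}>0$; this holds because $\Gamma_{\varphi,t}$ is a diffeomorphism, so that $h_{\varphi,t}=\abs{\det\Dmath\Gamma^{-1}_{\varphi,t}\mathbf{y}}>0$ (or $\equiv1$).
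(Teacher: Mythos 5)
Your proposal is correct and follows essentially the same route as the paper: both start from the explicit integrals $II$ and $III$ from the proof of Theorem~\ref{Thm:DynConv}, apply $\abs{\mathbf{e}_\varphi\cdot\mathbf{p}_l}\le\norm{\mathbf{p}_l}$, invoke the bounds of Assumption~\ref{Assumption} on $\alpha_{\varphi,t}$ and $h'_{\varphi,t}$ together with $\overline{m}'\le\overline{m}'_\infty$ and $\abs{\overline{m}}\le m$, and finish by changing variables back to $c\brackets{\cdot,\varphi,t}$ (respectively keeping $c_0$) and bounding the integral over $B_R$ by $c_{\max}\pi R^2$. Your extra remarks on the positivity of $\overline{m}'$ and of $h_{\varphi,t}$ are fine but not needed beyond what the paper already does.
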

\begin{proof}
	From the proof of Theorem~\ref{Thm:DynConv} we know that using Assumption~\ref{Assumption} we can write 
	\begin{eqnarray*}
		\abs{\mathcal{K}_{2,l} \mathcal{R}_\alpha^\Gamma c_0\brackets{\varphi,t}}
		&=&	\left|  \mu_0 \; \mathbf{e_\varphi} \cdot \mathbf{p}_l \; G  \int_{\R^2} c_0\brackets{\mathbf{y}} h_{\varphi,t}\brackets{\mathbf{y}} \right. \\ 
		&\phantom{\hspace{1mm}=}& \hspace{2.5cm} \times \;
		\left. 	\overline{m}^\prime \left( -G \; \Gamma_{\varphi,t}^{-1}\mathbf{y} \cdot \mathbf{e}_{\varphi} + A\Lambda_\varphi(t)\right)  \brackets{\Gamma_{\varphi,t}^{-1}}'\mathbf{y} \cdot \mathbf{e}_{\varphi}  \dmathInt \mathbf{y}\right|  \\
		&\leq&	\mu_0 \norm{\mathbf{p}_l}  \; A \; C_{\varphi,t}\abs{ \int_{\R^2} c\brackets{\mathbf{r}, \varphi, t} \; \overline{m}' \left( -G \; \mathbf{r}\cdot \mathbf{e}_\varphi	+ A\Lambda_\varphi(t)\right) \dmathInt\mathbf{r}} \\
		&\leq& \mu_0 \norm{\mathbf{p}_l} \overline{m}'_\infty \; A \; C_{\varphi,t}\abs{ \int_{\R^2} c\brackets{\mathbf{r}, \varphi, t}  \dmathInt \mathbf{r}} \\
		&\leq& \mu_0 \norm{\mathbf{p}_l} \overline{m}'_\infty \; A \; C_{\varphi,t} \;c_{\max}\pi R^2.
	\end{eqnarray*} 
	The modulus of the mean magnetic moment is bounded by $m$ (cf. Langevin model in Section~\ref{Sec:FFL}). Thus, similarly we obtain
	\begin{eqnarray*}
		\abs{\mathcal{K}_{3,l} \mathcal{R}_\alpha^\Gamma c_0\brackets{\varphi,t}}
		&=&	\abs{ \mu_0 \; \mathbf{e_\varphi} \cdot \mathbf{p}_l   \int_{\R^2} c_0\brackets{\mathbf{y}} \overline{m} \left( -G \; \Gamma_{\varphi,t}^{-1}\mathbf{y} \cdot \mathbf{e}_{\varphi} + A\Lambda_\varphi(t)\right) h_{\varphi,t}'\brackets{\mathbf{y}} \dmathInt \mathbf{y}} \\
		&\leq& \mu_0 \norm{\mathbf{p}_l} m \; D_{\varphi,t}\abs{ \int_{\R^2} c_0\brackets{\mathbf{y}}  \dmathInt \mathbf{y}}  \\
		&\leq& \mu_0 \norm{\mathbf{p}_l} m \; D_{\varphi,t} \;c_{\max}\pi R^2.
	\end{eqnarray*}
\end{proof}\\
From Lemma~\ref{Lem:Estimate1} we get that, if the motion speed is suitably slow compared to the FFL translation, the part $\mathcal{K}_{2,l} \circ \mathcal{R}_\alpha^\Gamma$ of the forward model might be neglected. Furthermore, if we have estimations for the speed and extent of the object's dynamics, the bounds in Lemma~\ref{Lem:Estimate2} can be computed and thus give insight, which components of the forward model should be incorporated and which might be negligible within the image reconstruction process. 

\section{Radon-based image reconstruction using TV regularization} \label{Sec:ImageReco}
We aim to reconstruct the reference concentration $c_0$. Knowledge of the motion functions $\Gamma$, then allows computation of the time-dependent MNP distribution at all other states. According to Remark~\ref{Rem:GammaReg} usage of motion models and thus, incorporating prior motion information, already might regularize the solution as side effect. However, additionally we apply TV regularization. Therefore, we define the variation of functions as
\begin{equation*}
\TV\brackets{c} := \sup\left\lbrace \int_{\R^2} c\brackets{\mathbf{r}}  \operatorname{div}\brackets{\mathbf{g}}\brackets{\mathbf{r}} \dmathInt \mathbf{r} : \; \mathbf{g} \in C^\infty_0\brackets{B_R,\R^2}, \; \abs{\mathbf{g}\brackets{\mathbf{r}}}_2 <1 \; \text{for all} \; \mathbf{r} \right\rbrace.	
\end{equation*}
Therewith, we define the space $\BV$ of functions with bounded variation on $B_R$
\begin{eqnarray*}
	\BV\brackets{B_R} := \left\lbrace c \in L_1\brackets{B_R,\R} : \; \TV\brackets{c} < \infty\right\rbrace,  
\end{eqnarray*}
which provided with $\norm{\cdot}_{\BV} := \norm{\cdot}_{L_1} + \TV\brackets{\cdot}$ becomes a Banach space and satisfies $\BV\brackets{B_R} \subset L_2\brackets{B_R}$. Further information regarding total variation can be found e.g. in~\cite{acar1994analysis} and \cite{burger2013guide}.\\
Similar to~\cite{https://doi.org/10.48550/arxiv.2211.11683} we set $\mathcal{D} := L_2\brackets{B_R,\R}\times L_2\brackets{Z_T \times \R,\R }$ as well as $A^\Gamma_l: \mathcal{D}   \to L_2\brackets{Z_T,\R}$ such that
\begin{equation*}
A^\Gamma_l\brackets{c_0,v^\Gamma} := \mathcal{K}_{1,l} v^\Gamma + \brackets{\mathcal{K}_{2,l} \circ \mathcal{R}^\Gamma_\alpha}c_0 + \brackets{\mathcal{K}_{2,l} \circ \mathcal{R}^\Gamma_\beta}c_0
\label{Eqn:NewOp}
\end{equation*}
with operators $\mathcal{K}_{i,l},\; i=1,2,3$ and $\mathcal{R}^\Gamma_\alpha,\;\mathcal{R}^\Gamma_\beta$ from Theorem~\ref{Thm:DynConv}. Therewith, we have for $c_0\geq0\,$ that $\,A^\Gamma_l\brackets{c_0,\mathcal{R}^\Gamma c_0}=\mathcal{A}^{\Gamma}_l c_0$. Note that if the assumptions of Corollary~\ref{corollary} are satisfied, it holds that
\begin{equation*}
\left[ A^\Gamma_l\brackets{c_0,v^\Gamma}\right] \brackets{\varphi,t} 
= \left[ \mathcal{K}_{1,l} v^\Gamma+ \mathcal{K}_{2,l} 	
\brackets{\tilde{\alpha}_{\varphi,t}v^\Gamma} + \mathcal{K}_{3,l} \brackets{\tilde{\beta}_{\varphi,t}v^\Gamma}\right] \brackets{\varphi,t}
\end{equation*}
only depends on $v^\Gamma$. \\
Analogue to~\cite{https://doi.org/10.48550/arxiv.2211.11683}, inspired by~\cite{Tovey_2019}, we jointly reconstruct the reference concentration $c_0$ of the MNP distribution and corresponding dynamic Radon data $v^\Gamma :=\mathcal{R}^\Gamma c_0$ via solving

\begin{equation}
\min_{\brackets{c_0,v^\Gamma} \in \mathcal{C}} \frac{1}{2} \sum_l \norm{A^\Gamma_l\brackets{c_0,v^\Gamma} - u_l}^2_{L_2} + \frac{\alpha_1}{2}\norm{\mathcal{R}^\Gamma c_0 - v^\Gamma}^2_{L_2} + \alpha_2 \TV\brackets{c_0} + \alpha_3 P\brackets{v^\Gamma}.
\label{Eqn:OptProblem}
\end{equation}
Compared to~\cite{https://doi.org/10.48550/arxiv.2211.11683} we included an additional penalty term acting on the adapted Radon data $v^\Gamma$. Further components are the feasible set 
\begin{equation*}
\mathcal{C} := \left\lbrace \brackets{c_0,v^\Gamma} \in \mathcal{D} : \; c_0\geq 0,\; v^\Gamma\geq0\right\rbrace, 
\end{equation*}
the given voltage signals $u_l,\; l=1,\dots,L$, the weighting parameter $\alpha_1>0$, as well as regularization parameters $\alpha_2,\;\alpha_3>0$. At this point, we do not specify the choice of $P$. In~\cite{Tovey_2019} directional TV regularization was used. In our numerical examples we will either regard $P=0$, i.e. we only penalize the concentration and therewith indirectly also the adapted Radon data, or we apply an additional sparsity constraint on $v^\Gamma$. Furthermore, the TV regularization can be exchanged by any penalty term suiting the regarded imaging problem.
\begin{theorem}
	For convex, proper, and weakly lower semicontinuous $P$ the minimization problem~\eqref{Eqn:OptProblem} has a solution $\brackets{\hat{c}_0,\hat{v}^\Gamma} \in \mathcal{C}$ with $\hat{c}_0 \in \BV\brackets{B_R}.$
\end{theorem}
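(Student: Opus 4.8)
\emph{Proof idea.} I would argue by the direct method of the calculus of variations. Write $F$ for the objective in~\eqref{Eqn:OptProblem}. Since every summand of $F$ is nonnegative and $\brackets{0,0}\in\mathcal{C}$ with $F\brackets{0,0}=\tfrac12\sum_l\norm{u_l}_{L_2}^2<\infty$, the value $m^\ast:=\inf_{\mathcal{C}}F$ is finite. Fix a minimizing sequence $\brackets{c_0^n,v^{\Gamma,n}}\in\mathcal{C}$ with $F\brackets{c_0^n,v^{\Gamma,n}}\leq m^\ast+1=:M$. This at once bounds $\TV\brackets{c_0^n}\leq M/\alpha_2$, the coupling defect $\norm{\mathcal{R}^\Gamma c_0^n-v^{\Gamma,n}}_{L_2}\leq\sqrt{2M/\alpha_1}$, and the residuals $\norm{A^\Gamma_l\brackets{c_0^n,v^{\Gamma,n}}-u_l}_{L_2}\leq\sqrt{2M}$; I would not use any bound coming from $P$, since $P=0$ is admissible.

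The crucial point is coercivity, i.e. boundedness of $\brackets{c_0^n,v^{\Gamma,n}}$ in $\BV\brackets{B_R}\times L_2\brackets{Z_T\times\R}$. Set $e^n:=v^{\Gamma,n}-\mathcal{R}^\Gamma c_0^n$. Using $A^\Gamma_l\brackets{c_0^n,\mathcal{R}^\Gamma c_0^n}=\mathcal{A}^\Gamma_l c_0^n$ for $c_0^n\geq0$ and the linearity of $\mathcal{K}_{1,l}$, we get $\mathcal{A}^\Gamma_l c_0^n=A^\Gamma_l\brackets{c_0^n,v^{\Gamma,n}}-\mathcal{K}_{1,l}e^n$, hence $\norm{\mathcal{A}^\Gamma c_0^n}_{L_2\brackets{Z_T,\R^L}}$ is bounded (boundedness of $\mathcal{K}_{1,l}$ is part of Theorem~\ref{Thm:DynConv}). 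Now split off the mean: by the Poincar\'e--Wirtinger inequality on the ball, $\norm{c-\langle c\rangle}_{L_2\brackets{B_R}}\leq C_P\,\TV\brackets{c}$ with $\langle c\rangle:=\abs{B_R}^{-1}\int_{B_R}c$, and $\langle c_0^n\rangle=\abs{B_R}^{-1}\norm{c_0^n}_{L_1}$ because $c_0^n\geq0$. Since $\mathcal{A}^\Gamma_l=\mathcal{K}_{1,l}\circ\mathcal{R}^\Gamma+\mathcal{K}_{2,l}\circ\mathcal{R}^\Gamma_\alpha+\mathcal{K}_{3,l}\circ\mathcal{R}^\Gamma_\beta$ is a bounded linear map $L_2\brackets{B_R}\to L_2\brackets{Z_T,\R^L}$ (bounded weights $h,\alpha,\beta$, a smooth family of diffeomorphisms mapping $B_R$ into a fixed bounded set, boundedness of the Radon-type operators and of the $\mathcal{K}_{i,l}$), we obtain
\[
\langle c_0^n\rangle\,\norm{\mathcal{A}^\Gamma\mathbf{1}}_{L_2}\leq\norm{\mathcal{A}^\Gamma c_0^n}_{L_2}+\norm{\mathcal{A}^\Gamma}\,C_P\,\TV\brackets{c_0^n},
\]
whose right-hand side is bounded. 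Provided $\mathcal{A}^\Gamma\mathbf{1}\neq0$ in $L_2\brackets{Z_T,\R^L}$ --- a mild non-degeneracy, satisfied because a constant tracer distribution still induces a nontrivial signal (alternatively one adds the physically natural box constraint $c_0\leq c_{\max}$ to $\mathcal{C}$, which trivializes this step) --- this bounds $\norm{c_0^n}_{L_1}$, hence $\norm{c_0^n}_{\BV}$; the embedding $\BV\brackets{B_R}\subset L_2\brackets{B_R}$ bounds $\norm{c_0^n}_{L_2}$, and then $\norm{v^{\Gamma,n}}_{L_2}\leq\norm{e^n}_{L_2}+\norm{\mathcal{R}^\Gamma c_0^n}_{L_2}$ is bounded as well.

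Given boundedness, I would use the compact embedding $\BV\brackets{B_R}\hookrightarrow L_1\brackets{B_R}$ together with reflexivity of $L_2$ to pass, along a common subsequence (not relabeled), to limits $c_0^n\to\hat c_0$ strongly in $L_1\brackets{B_R}$ and weakly in $L_2\brackets{B_R}$, and $v^{\Gamma,n}\rightharpoonup\hat v^\Gamma$ weakly in $L_2\brackets{Z_T\times\R}$. Nonnegativity survives these limits and lower semicontinuity of $\TV$ with respect to $L_1$-convergence gives $\hat c_0\in\BV\brackets{B_R}$ with $\TV\brackets{\hat c_0}\leq\liminf_n\TV\brackets{c_0^n}$, so $\brackets{\hat c_0,\hat v^\Gamma}\in\mathcal{C}$. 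Since $A^\Gamma_l$ and $\mathcal{R}^\Gamma$ are bounded linear, the arguments of the two quadratic terms converge weakly in $L_2\brackets{Z_T}$ resp. $L_2\brackets{Z_T\times\R}$; weak lower semicontinuity of the $L_2$-norm, of $\TV$, and --- by assumption --- properness and weak lower semicontinuity of $P$ then yield $F\brackets{\hat c_0,\hat v^\Gamma}\leq\liminf_nF\brackets{c_0^n,v^{\Gamma,n}}=m^\ast$, so $\brackets{\hat c_0,\hat v^\Gamma}$ is a minimizer with $\hat c_0\in\BV\brackets{B_R}$.

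I expect the coercivity step to be the main obstacle: neither $\TV$ nor $P$ controls the total mass of $c_0$ (constants have vanishing $\TV$ on $B_R$, and $P$ may be zero), so the mass must be recovered from the data-fidelity and coupling terms through $\norm{\mathcal{A}^\Gamma c_0^n}_{L_2}$, which forces one to verify mapping properties of $\mathcal{A}^\Gamma$ between the relevant $L_2$-spaces and to exclude that it annihilates constants (or to restrict $\mathcal{C}$ by an a priori concentration bound). The compact $\BV$-embedding, preservation of the convex constraints, and weak lower semicontinuity of the remaining terms are then routine.
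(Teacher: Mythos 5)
Your argument is the standard direct-method proof that the paper itself only alludes to (it defers the details to the companion article for $P=0$), so in approach you coincide with what is intended; the substance of your write-up is correct. Two caveats are worth recording. First, your opening claim that every summand of $F$ is nonnegative --- used both to get $m^{\ast}>-\infty$ and to extract the sublevel bounds $\TV\brackets{c_0^n}\le M/\alpha_2$ etc. --- does not follow from $P$ being proper, convex and weakly lower semicontinuous alone; you need $P$ bounded below (and $P\brackets{0}<\infty$ for your choice of test point). This holds for the choices the paper actually uses ($P=0$ or an $\ell_1$ sparsity term) but should be made explicit. Second, you are right that coercivity is the only non-routine step: $\TV$ annihilates constants on $B_R$ (the test fields are compactly supported), so the mass of $c_0^n$ must be recovered from the data-fidelity and coupling terms, and your route through the Poincar\'e--Wirtinger inequality requires the non-degeneracy $\mathcal{A}^\Gamma\mathbf{1}\neq 0$, which is not among the stated hypotheses. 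It is physically harmless --- with the Langevin model $\overline{m}'>0$, so a constant concentration on $B_R$ produces a nonzero signal whenever $\mathbf{e}_\varphi\cdot\mathbf{p}_l\not\equiv 0$ --- but it (or the box constraint $c_0\le c_{\max}$ you propose) must be added for the argument to close. The remaining steps --- compactness of $\BV\brackets{B_R}\hookrightarrow L_1\brackets{B_R}$, weak closedness of the nonnegativity cone, and weak lower semicontinuity of the quadratic terms, of $\TV$ under $L_1$-convergence, and of $P$ --- are routine and correctly handled.
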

The proof basically follows standard arguments from convex optimization and has been stated in~\cite{https://doi.org/10.48550/arxiv.2211.11683} for $P=0$.


\section{Numerical results} \label{Sec:Results}
For our numerical examples we use a simulation framework based on the one developed by Gael Bringout~\cite{bringout2016field} available at~\url{https://github.com/gBringout}, which we adapted to suit our purposes. \\\\
We cross over to a partly discrete setting. More precisely, we regard measurements for $p$ different FFL directions determined by angles
\begin{equation*}
\varphi_j := \brackets{j-1}\frac{\pi}{p},\quad j=1,\dots,p.
\end{equation*}
For our experiments we set the excitation function to be
\begin{equation*}
\Lambda_{\varphi_j}\brackets{t} =
\begin{cases}
\phantom{-} \cos \brackets{2\pi f_{\text{d}} t}, \quad j \;\text{odd},  \\
- \cos \brackets{2\pi f_{\text{d}} t} , \quad j \;\text{even},
\end{cases} j = 1,\dots,p, \quad t\in\left[ 0, T \right] 
\end{equation*}
with drive frequency $f_{\text{d}}$ and $T=\frac{1}{2f_d}$. This means that the FFL starts with a signed distance $\frac{A}{G}$ to the origin, is translated once through the FOV until it reaches a displacement of $-\frac{A}{G},$ is then rotated, translated again back through the FOV, and so on. Further, we assume an instantaneous rotation. This is not a realistic proposition but acceptable since we only opt for a proof-of-concept. Additionally, we suppose the particle concentration to be normalized to one and located within a circle of radius $\frac{A}{G}$ around zero during the whole data acquisition. For data simulation the FOV $\left[-\frac{A}{G},\frac{A}{G} \right] \times \left[-\frac{A}{G},\frac{A}{G} \right]$ is partitioned into $501 \times 501$ pixel, whereas we aim to reconstruct on a coarser grid with $201\times 201$ sampling points. The tracer is modeled as a solution with $0.5 \; \frac{\text{mol}\brackets{\text{Fe}_3\text{O}_4}}{\text{m}^3}$ concentration of magnetite with $30$ nm core diameter and $\frac{0.6}{\mu_0}$ T saturation magnetization. We consider two receive coils oriented orthogonal to each other. The choice of parameter values can be found in Table~\ref{table:SimulationParameters}.
\begin{table}[htbp]
	\centering
	\caption{\small Simulation parameters }
	\renewcommand{\arraystretch}{1.3}
	\begin{tabular}{|c|c|c|c|}
		\hline
		\textbf{Parameter} & \textbf{Explanation} & \textbf{Value} & \textbf{Unit} \\
		\hline \hline
		& & & \\[-1.2em]
		$\mu_0$ & magnetic permeability & $4\pi\cdot 10^{-7}$ & $\text{T}\text{m}\text{A}^{-1}$ \\
		$k_{\text{B}}$ & Boltzmann constant & $1.380650424\cdot 10^{-23}$ & $\text{J}\text{K}^{-1}$ \\
		\hline
		$G$ & gradient strength & $4$ & $\text{T}\brackets{\text{m}\mu_0}^{-1}$ \\
		$A$ & drive peak amplitude & $15$  & $\text{mT}\mu_0^{-1}$\\
		$\mathbf{p}_1$ & sensitivity of the first receive coil & $\left[0.015/293.29, 0 \right]^T $  & $\text{m}^{-1}$ \\
		$\mathbf{p}_2$ & sensitivity of the second receive coil & $\left[0, 0.015/379.71\right]^T $  & $\text{m}^{-1}$\\
		\hline
		$f_{\text{d}}$ & drive-field frequency & 25 & kHz\\
		$f_{\text{s}}$ & sampling frequency & 8 & MHz\\
		\hline
		$p$ & amount of FFL directions & 25 & \\
		\hline
	\end{tabular}
	\label{table:SimulationParameters}
	\renewcommand{\arraystretch}{1}
\end{table}\\\\
We consider an affine motion $\Gamma_{\varphi_j,t}\mathbf{r}:=\mathbf{A}_{\varphi_j,t}\mathbf{r} + \mathbf{b}_{\varphi_j,t}$ choosing for $j=1,\dots,p$ and $t\in\left[ 0, T\right]$
\begin{eqnarray*}
	\mathbf{A}_{\varphi_j,t}&:=&
	\begin{pmatrix}
		a_{\varphi_j,t}& 0 \\
		0 & a_{\varphi_j,t}
	\end{pmatrix}, \quad a_{\varphi_j,t}:= 0.8 + 0.2 \cos\brackets{2 \pi f \brackets{\frac{j-1}{2f_d}+t}},\\
	\mathbf{b}_{\varphi_j,t}&:=&
	a_{\varphi_j,t}
	\begin{pmatrix}
		\Delta x -  \brackets{\frac{j-1}{2f_d}+t} \frac{2\Delta x}{\brackets{p-1}T}\\
		0
	\end{pmatrix},
\end{eqnarray*}
with $f=78$ kHz denoting the deformation frequency and $\Delta x = 2$ mm being the translation shift of the object at initial state compared to the reference concentration, which is shown in Figure~\ref{Fig:ReferenceState}. 
According to~\eqref{Ex:beta} and the proof of Corollary~\ref{corollary}, we obtain
\begin{equation*}
\brackets{\mathcal{K}_{3,l} \circ \mathcal{R}^\Gamma_\beta c_0} \brackets{\varphi_j,t} = \tilde{\beta}_{\varphi_j,t} \brackets{\mathcal{K}_{3,l} \circ \mathcal{R}^\Gamma c_0} \brackets{\varphi_j,t}=:\brackets{\widetilde{\mathcal{K}}_{3,l} \circ \mathcal{R}^\Gamma c_0 } \brackets{\varphi_j,t}
\end{equation*}
with $\tilde{\beta}_{\varphi_j,t}=0$ respectively $\displaystyle \tilde{\beta}_{\varphi_j,t} = \tr{\mathbf{A}^{-1}_{\varphi_j,t} \, \mathbf{A}'_{\varphi_j,t}} = 2\frac{a'_{\varphi_j,t}}{a_{\varphi_j,t}}$ proposing mass respectively intensity preservation. 
\begin{figure}[htbp]
	\begin{subfigure}[b]{0.32\textwidth}
		\centering
		\includegraphics[width=1.22\linewidth, trim=4.3cm 8.5cm 0 8cm, clip]{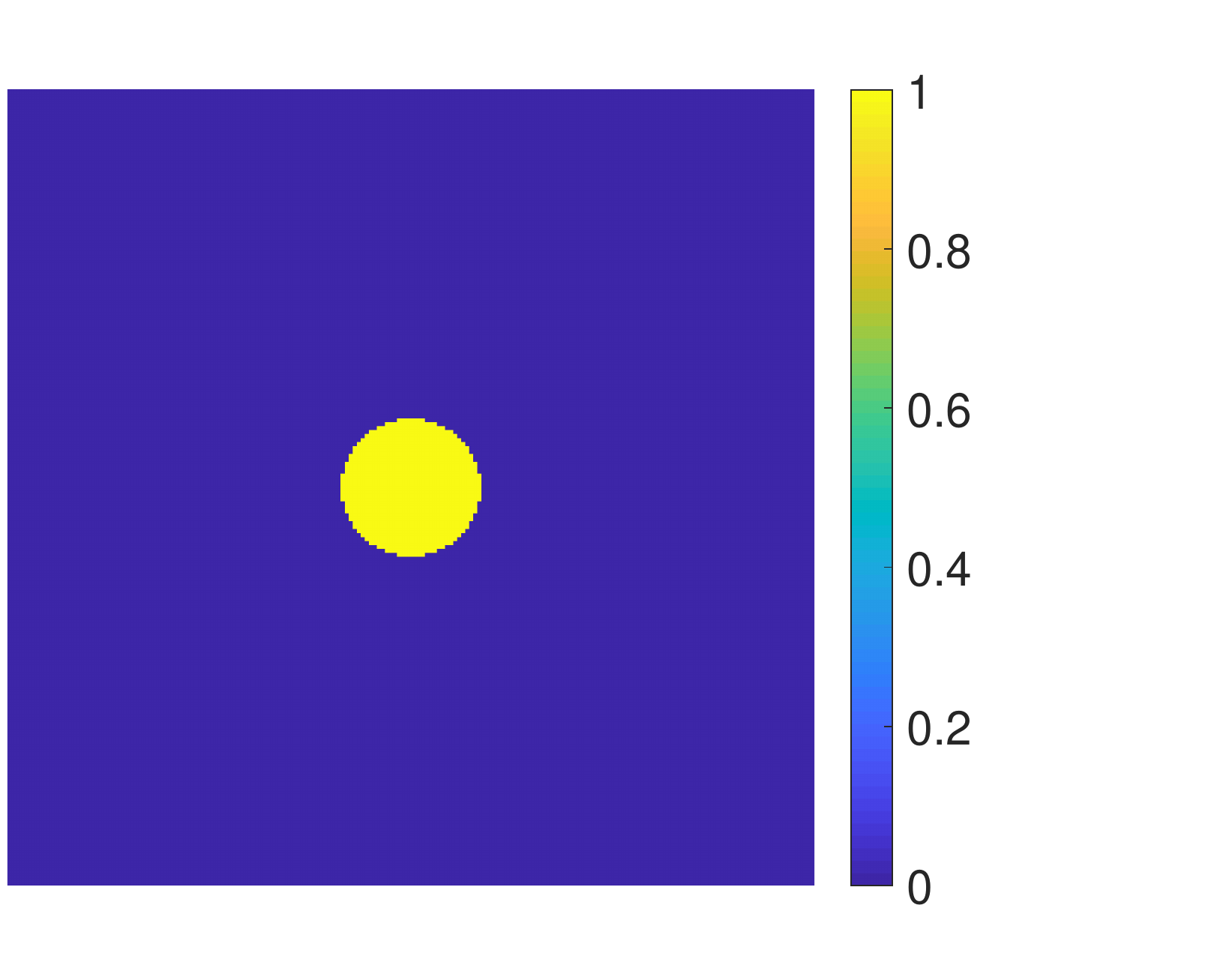}
		\subcaption{\small }
		\label{Fig:ReferenceState}
	\end{subfigure}
	\hfill
	\begin{subfigure}[b]{0.32\textwidth}
		\centering
		\includegraphics[width=1.22\linewidth, trim=4.3cm 8.5cm 0 8cm, clip]{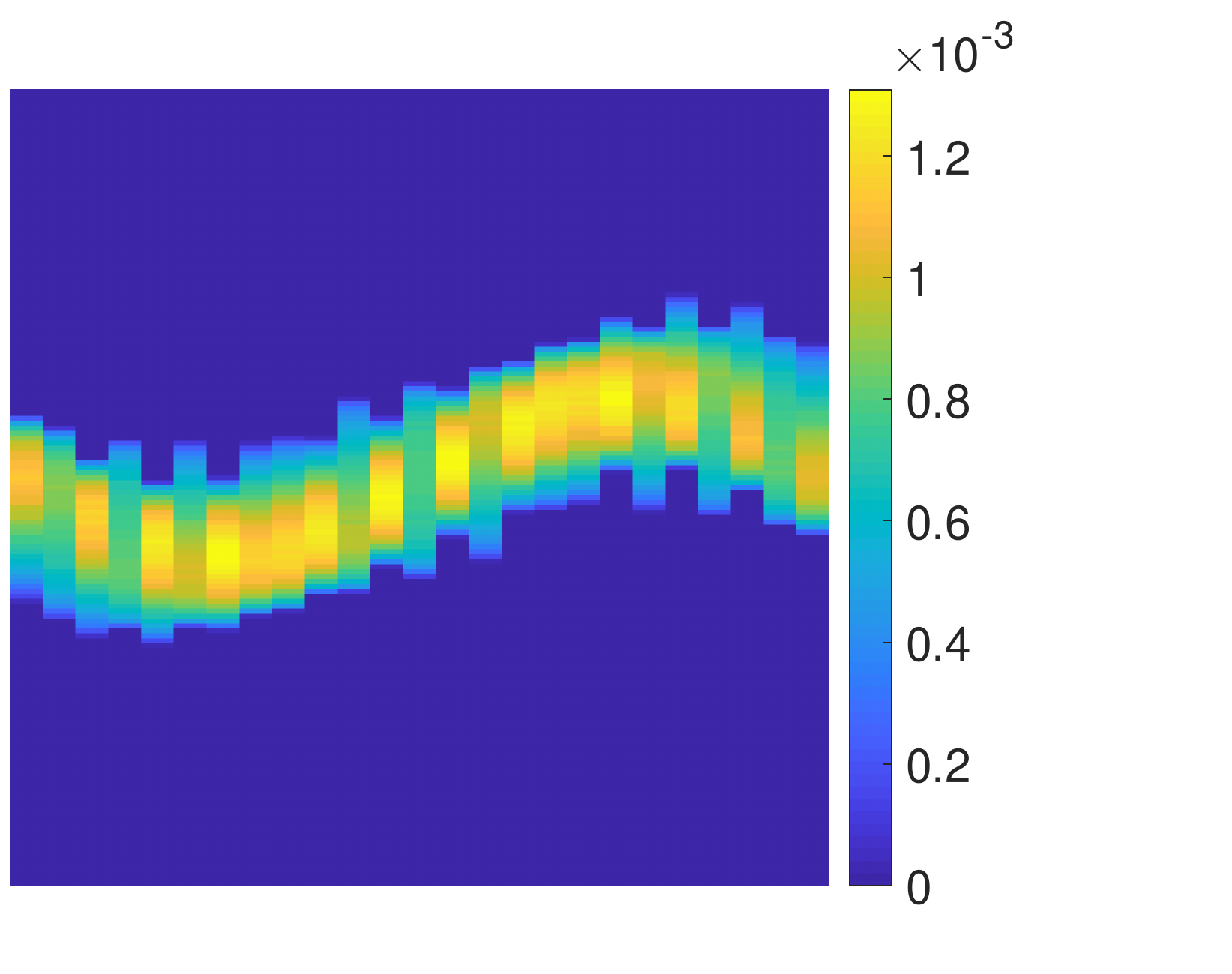}
		\subcaption{\small }
	\end{subfigure}
	\hfill
	\begin{subfigure}[b]{0.32\textwidth}
		\centering
		\includegraphics[width=1.22\linewidth, trim=4.3cm 8.5cm 0 8cm, clip]{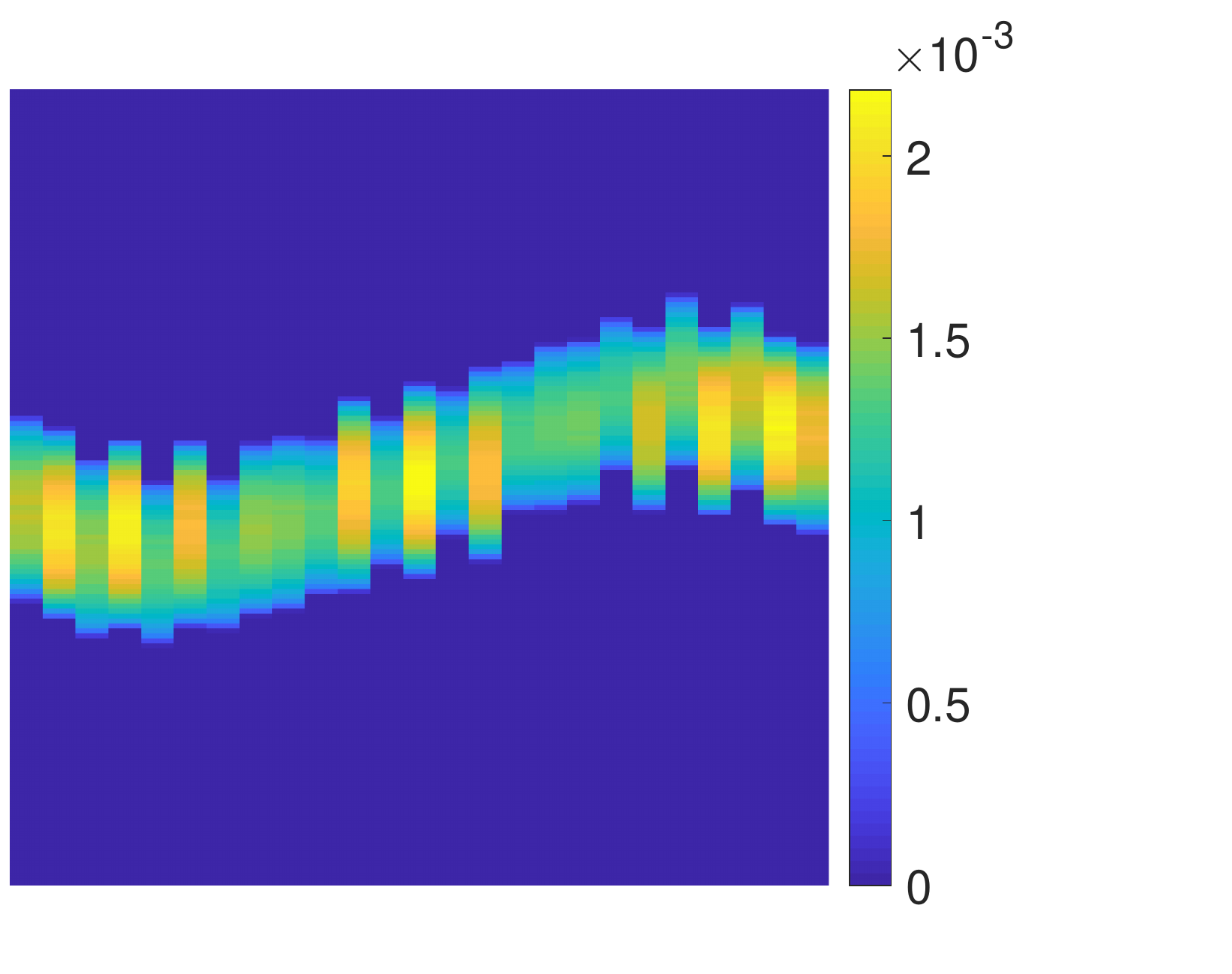}
		\subcaption{\small }
	\end{subfigure}
	\caption{\small Reference particle concentration (a) and sinograms assuming mass (b) or intensity (c) preservation.}
	\label{Fig:Sinograms}
\end{figure}\\\\
Discretizations are obtained via standard methods and indicated using bold letters. In order to evaluate $\mathbf{K}_{i,l}\mathbf{v}^\Gamma$, we would need a sinogram $\mathbf{v}^\Gamma$ with columns containing Radon data for the different FFL displacements and each column being dedicated to a fixed time-angle combination. Instead, to reduce the problem size, we opt for reconstructing the sinogram with entries $\mathcal{R}^\Gamma c_0\brackets{\varphi_m,t_n,s_n}$ for $m=1,\dots,p$ and $n=0,\dots,N:=\frac{f_s}{2f_d}$ as well as
\begin{eqnarray*}
	t_n = \frac{n}{f_s}, \quad
	s_n = \brackets{1-\frac{2n}{N}}\frac{A}{G}.
\end{eqnarray*} 
Therewith, we accept an additional model error, which we assume to be small enough to be manageable applying regularization methods. Note that $\overline{m}'$ converges to the dirac delta for particle diameters tending to infinity~\cite{knopp2012magnetic}. The resulting sinograms for mass and intensity preservation can be found in Figure~\ref{Fig:Sinograms}. \\
Next, we will investigate reconstruction results. We divide data and operators by the maximum absolute data value $u^*$
\begin{equation*}
\widehat{\mathbf{u}}_l := \frac{\mathbf{u}_l}{u^*}, \quad \widehat{\mathbf{K}}_{i,l} := \frac{\mathbf{K}_{i,l}}{u^*}, 
\quad \widehat{\widetilde{\mathbf{K}}}_{3,l} := \frac{\widetilde{\mathbf{K}}_{3,l}}{u^*}, \quad i,l=1,2
\end{equation*} 
and regard the following minimization problems
\begin{eqnarray*}
	\mathcal{M}_1: \hspace{1.1cm}
	\min_{\mathbf{c}_0\geq 0, \mathbf{v}^\Gamma\geq 0} \frac{1}{2} \sum_{l=1}^2 \norm{\widehat{\mathbf{K}}_{1,l}\mathbf{v}^\Gamma - \widehat{\mathbf{u}}_l}^2_{2} + \frac{\alpha_1}{2}\norm{\mathbf{R}\mathbf{c}_0 - \mathbf{v}^\Gamma}^2_{2} + \alpha_2 \norm{\abs{\nabla \mathbf{c}_0}_2}_1 + \alpha_3 \norm{\mathbf{v}^\Gamma}_1, 
\end{eqnarray*}
\begin{eqnarray*}
	\mathcal{M}_2: \hspace{1.1cm}
	\min_{\mathbf{c}_0\geq 0, \mathbf{v}^\Gamma\geq 0} \frac{1}{2}\sum_{l=1}^2 \norm{\widehat{\mathbf{K}}_{1,l}\mathbf{v}^\Gamma - \widehat{\mathbf{u}}_l}^2_{2} + \frac{\alpha_1}{2}\norm{\mathbf{R}^\Gamma\mathbf{c}_0 - \mathbf{v}^\Gamma}^2_{2} + \alpha_2 \norm{\abs{\nabla \mathbf{c_0}}_2}_1 + \alpha_3 \norm{\mathbf{v}^\Gamma}_1, 
\end{eqnarray*}
\begin{eqnarray*}
	\mathcal{M}_3: \hspace{1.1cm}
	\min_{\mathbf{c}_0\geq 0, \mathbf{v}^\Gamma\geq 0} \frac{1}{2}\sum_{l=1}^2 \norm{\brackets{\widehat{\mathbf{K}}_{1,l}+\widehat{\widetilde{\mathbf{K}}}_{3,l}}\mathbf{v}^\Gamma - \widehat{\mathbf{u}}_l}^2_{2} + \frac{\alpha_1}{2}\norm{\mathbf{R}^\Gamma\mathbf{c}_0 - \mathbf{v}^\Gamma}^2_{2} &+& \alpha_2 \norm{\abs{\nabla \mathbf{c}_0}_2}_1 \\ &+& \alpha_3 \norm{\mathbf{v}^\Gamma}_1.
\end{eqnarray*}
Hence, we neglect $\widehat{\mathbf{K}}_{2,l}\mathbf{R}_\alpha^\Gamma \mathbf{c}_0$ in the reconstruction process as its contribution to the signal was insignificant small compared to the other two signal components. To solve the regarded problem, we use CVX, a package for specifying and solving convex programs~(\cite{cvx},~\cite{gb08}), together with the MOSEK solver~\cite{mosek2010mosek}. If not mentioned differently, we compute results for parameters $\alpha_1 \in \left\lbrace 2 \cdot 10^i,\; i=1,\dots,8 \right\rbrace,\;\alpha_2 \in \left\lbrace 0.1^{5-0.2i},\; i=0,\dots,19 \right\rbrace,$ and $\alpha_3=0$ and give results for those choices with maximal structural similarity (SSIM) respectively maximal peak-signal-to-noise ratio (PSNR) of the reconstructed particle concentration compared with the reference phantom. Thereby, we do not aim for a parameter optimization in general but rather test for different magnitudes of $\alpha_1$. In the caption of reconstructions we only give the value of the similarity measure with respect to which we chose the parameters for that image. More values and corresponding parameters can be found at the end of each subsection. To state a guideline, applying the static reconstruction method $\mathcal{M}_1$ to static data we obtained at best a SSIM$\brackets{\mathbf{c}_0}$ value of $0.9875$ for $\alpha_1 = 2\cdot10^5,\;\alpha_2=0.1^{2.8},$ and a PSNR$\brackets{\mathbf{c}_0}$ value of $31.98$ for $\alpha_1 = 2\cdot10^6,\;\alpha_2=0.1^{3.5}$.
\subsection{Mass preservation}
In a first step, we simulate data assuming mass preservation. The just defined motion functions lead to a periodically expanding phantom being translated from left to right through the FOV (cf. Figure~\ref{Fig:Motion_mass}). 
\begin{figure}[htbp]
	\begin{subfigure}[b]{0.32\textwidth}
		\centering
		\includegraphics[width=1.15\linewidth, trim=3.6cm 8.5cm 0 8cm, clip]{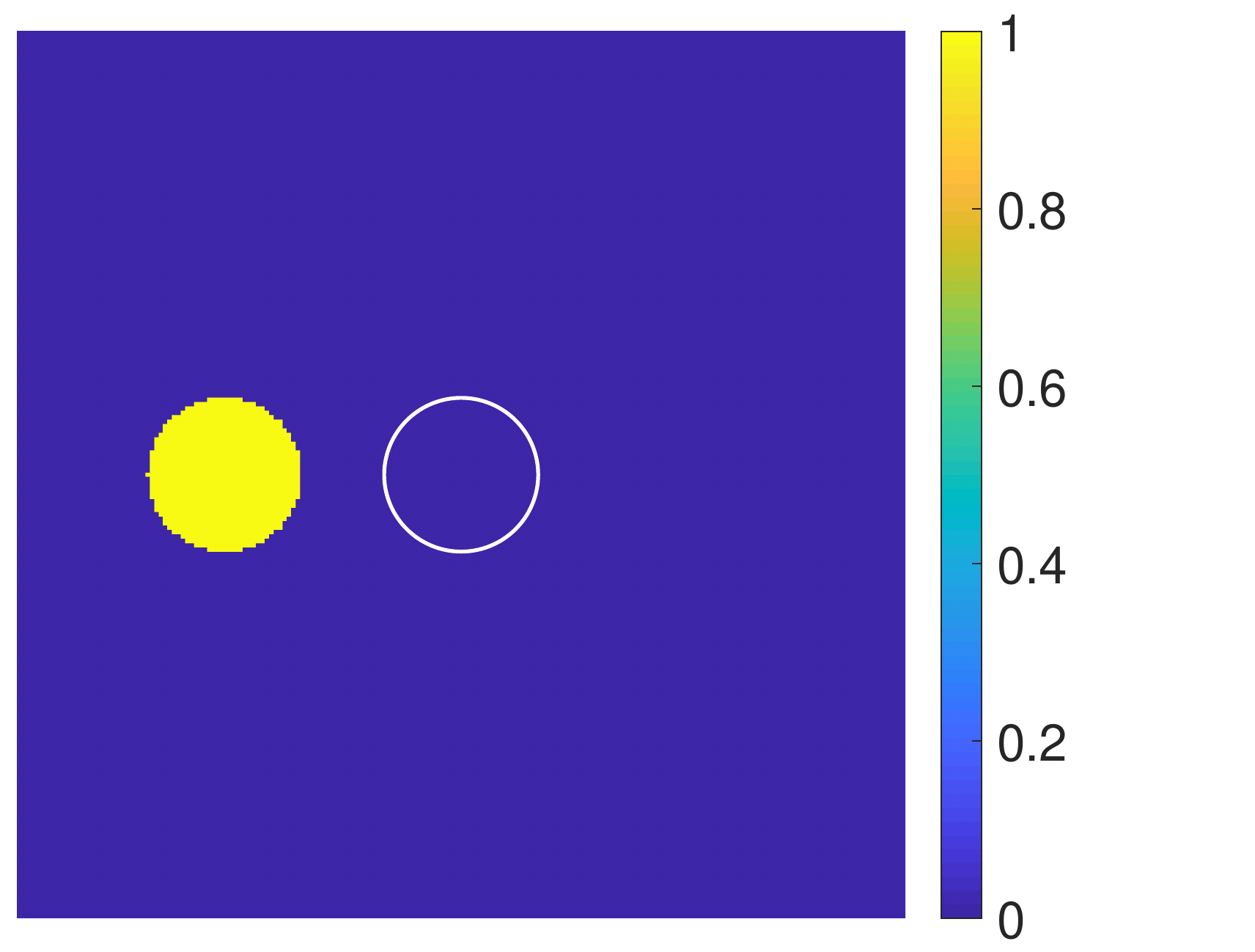}
		\subcaption{\small }
	\end{subfigure}
	\hfill
	\begin{subfigure}[b]{0.32\textwidth}
		\centering
		\includegraphics[width=1.15\linewidth, trim=3.6cm 8.5cm 0cm 8cm, clip]{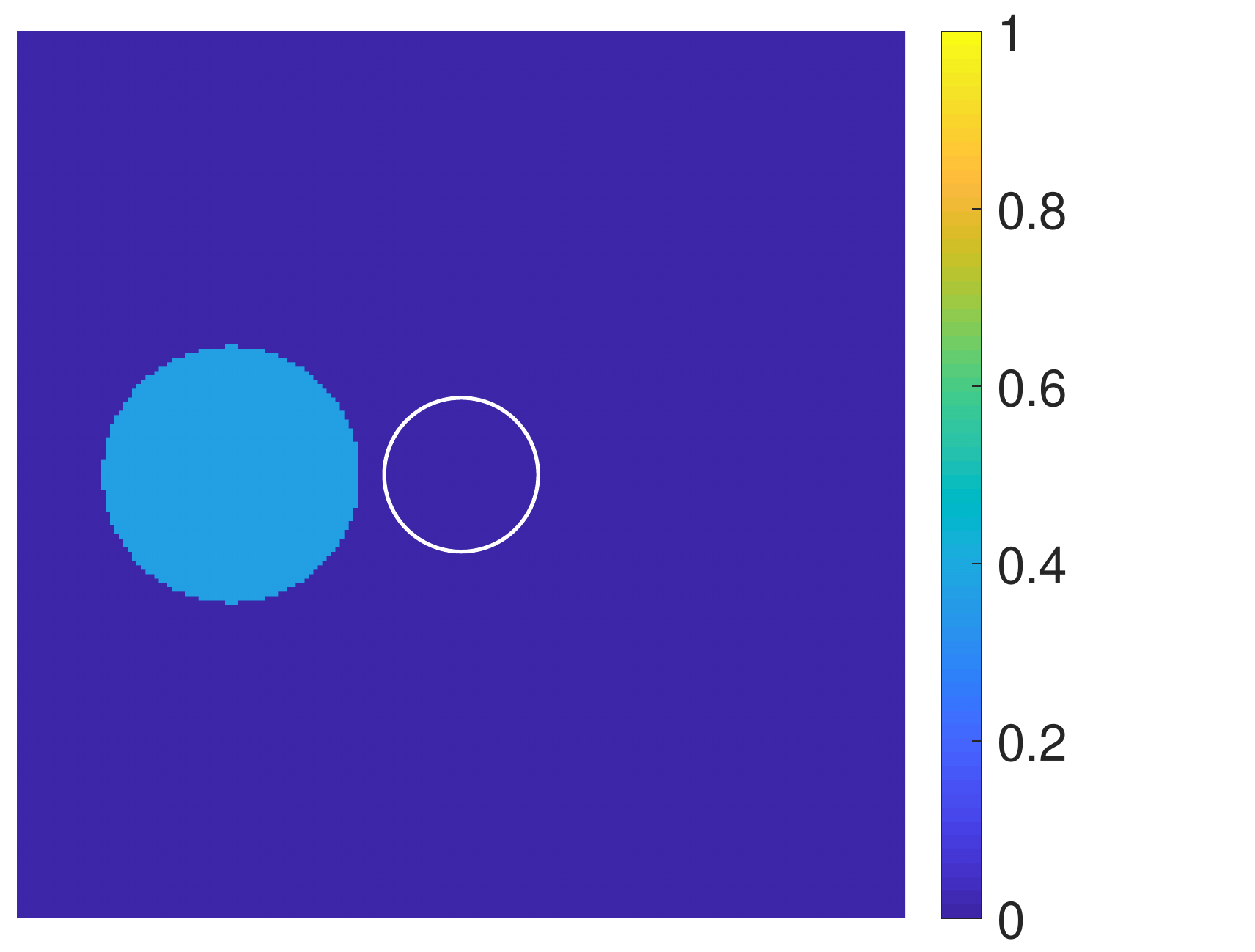}
		\subcaption{\small }
	\end{subfigure}
	\hfill
	\begin{subfigure}[b]{0.32\textwidth}
		\centering
		\includegraphics[width=1.15\linewidth, trim=3.6cm 8.5cm 0cm 8cm, clip]{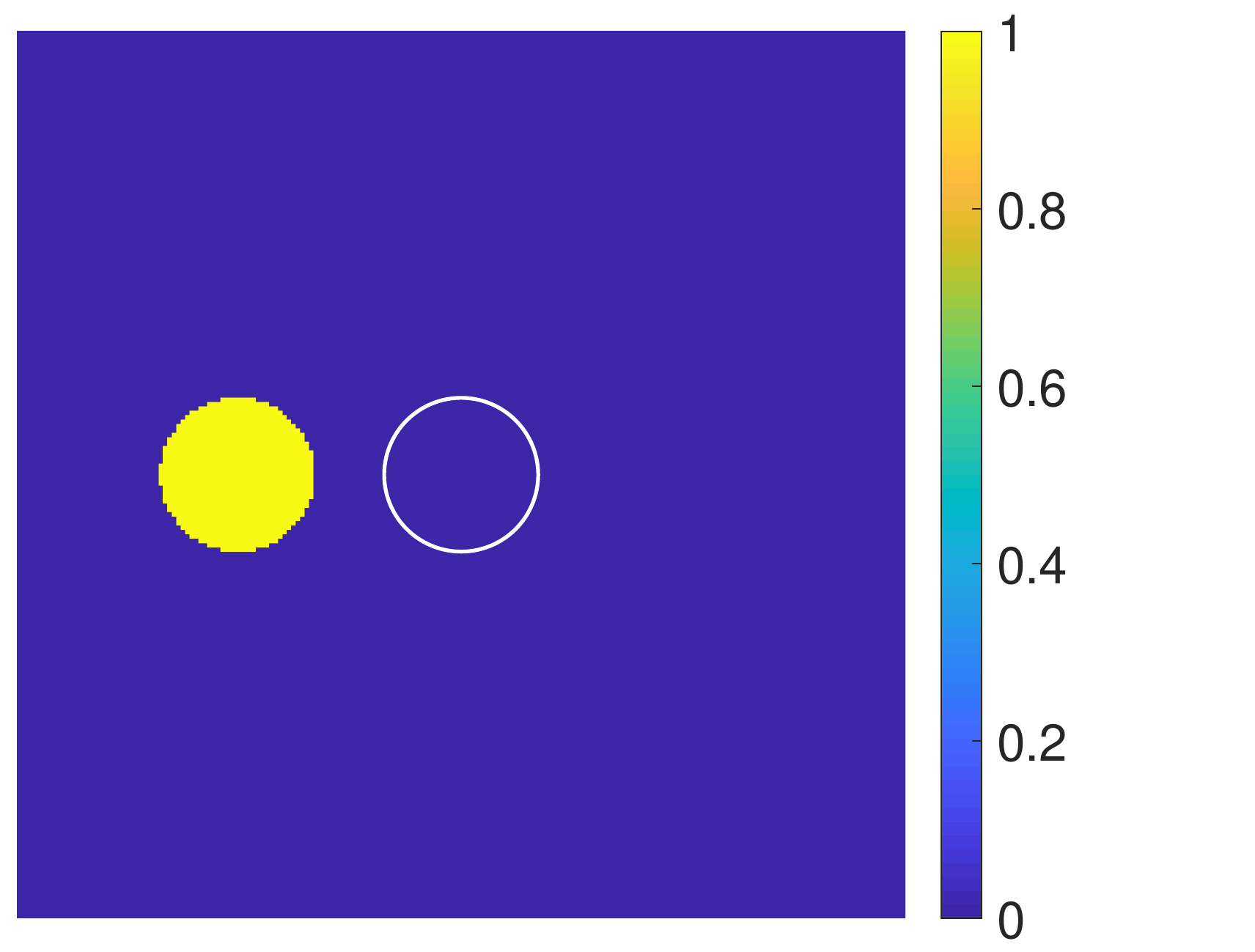}
		\subcaption{\small }
	\end{subfigure}
	\caption{\small Visualization of the object motion: The phantom starts at the left side of the FOV (a). While being shifted to the right, it expands until it reaches its maximum size (b). Still being translated to the right, it decreases again until the next expansion cycle starts (c). The white circle indicates the contour of the reference concentration.}
	\label{Fig:Motion_mass}
\end{figure}\\\\
Neglecting the phantom dynamics in the reconstruction and applying method $\mathcal{M}_1$ results in severe motion artifacts. Corresponding images are depicted in Figure~\ref{Fig:RecoPhantomMassStatic_ssim} respectively Figure~\ref{Fig:RecoPhantomMassStatic_psnr} with SSIM respectively PSNR as quality measure for comparing different parameter choices. To decrease the effect of outliers and enhance the contrast, Figure~\ref{Fig:RecoPhantomMassStatic_ssim_fullColorbar} shows a version of Figure~\ref{Fig:RecoPhantomMassStatic_ssim} where we restricted the colorbar to the range of our groundtruth.
\begin{figure}[htbp]
	\begin{subfigure}[b]{0.32\textwidth}
		\centering
		\includegraphics[width=1.15\linewidth, trim=3.6cm 8.5cm 0 8cm, clip]{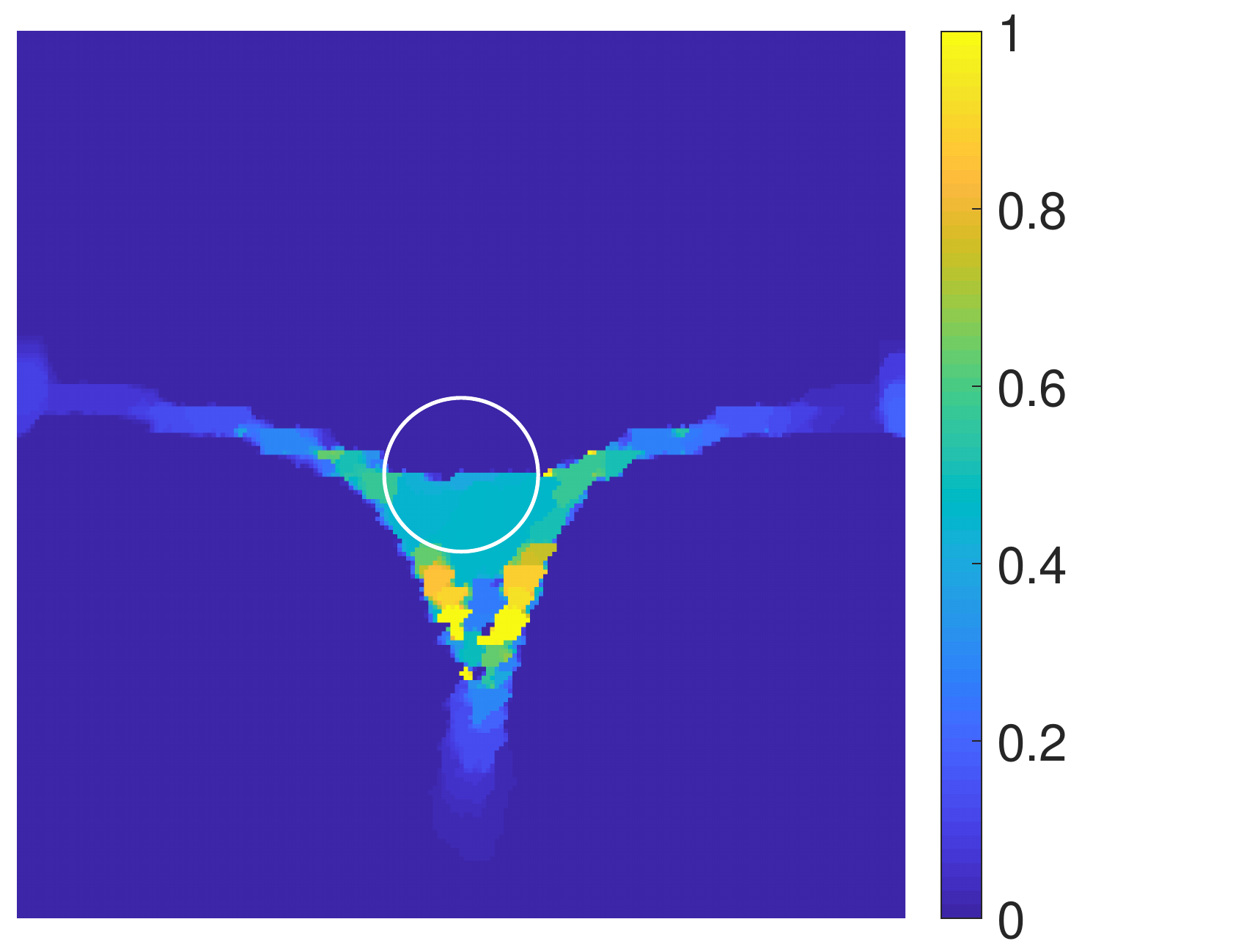}
		\subcaption{\small SSIM$\brackets{\mathbf{c}_0}$ = 0.8628}
		\label{Fig:RecoPhantomMassStatic_ssim_fullColorbar}
	\end{subfigure}
	\hfill
	\begin{subfigure}[b]{0.32\textwidth}
		\centering
		\includegraphics[width=1.15\linewidth, trim=3.6cm 8.5cm 0cm 8cm, clip]{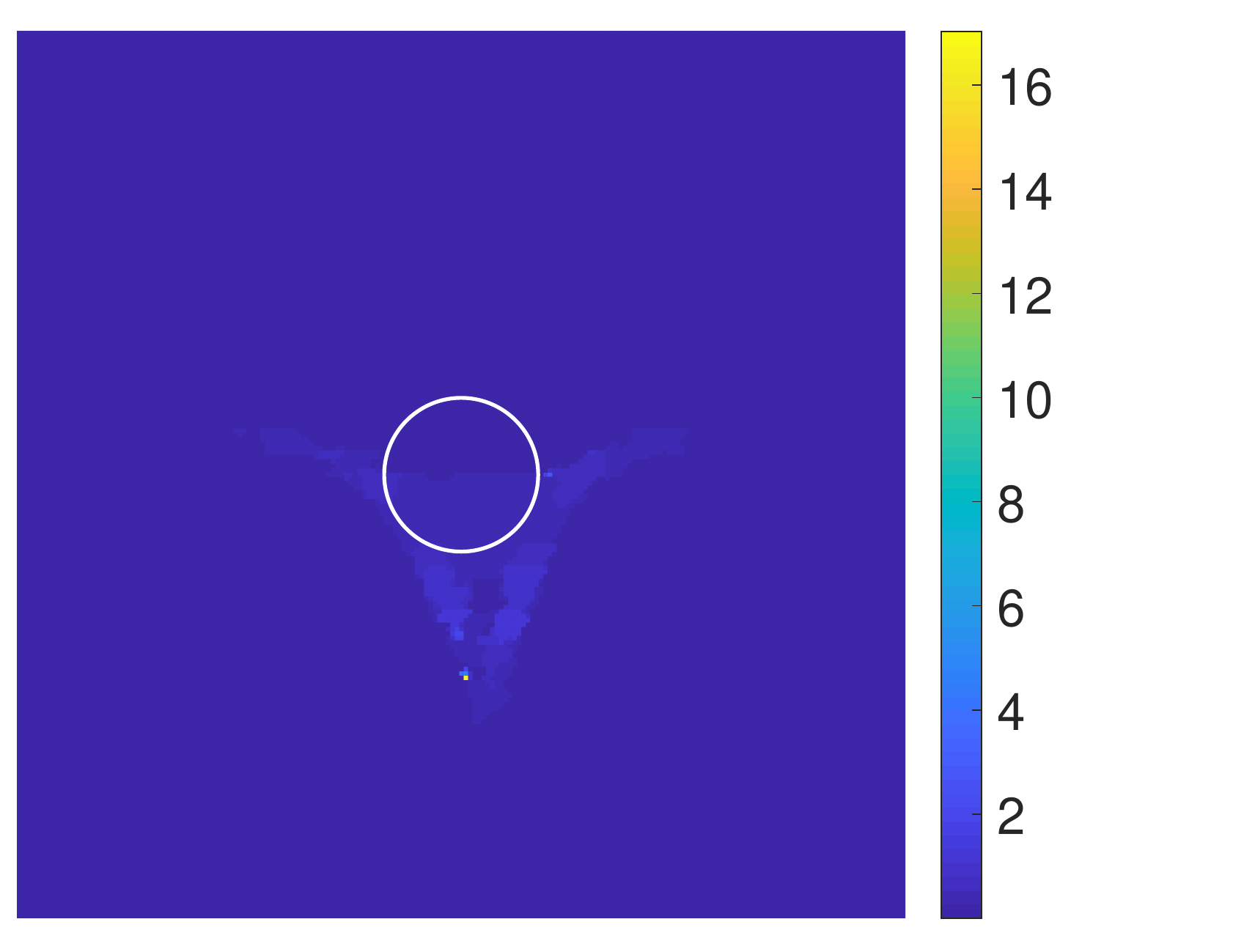}
		\subcaption{\small SSIM$\brackets{\mathbf{c}_0}$ = 0.8628}
		\label{Fig:RecoPhantomMassStatic_ssim}
	\end{subfigure}
	\hfill
	\begin{subfigure}[b]{0.32\textwidth}
		\centering
		\includegraphics[width=1.15\linewidth, trim=3.6cm 8.5cm 0cm 8cm, clip]{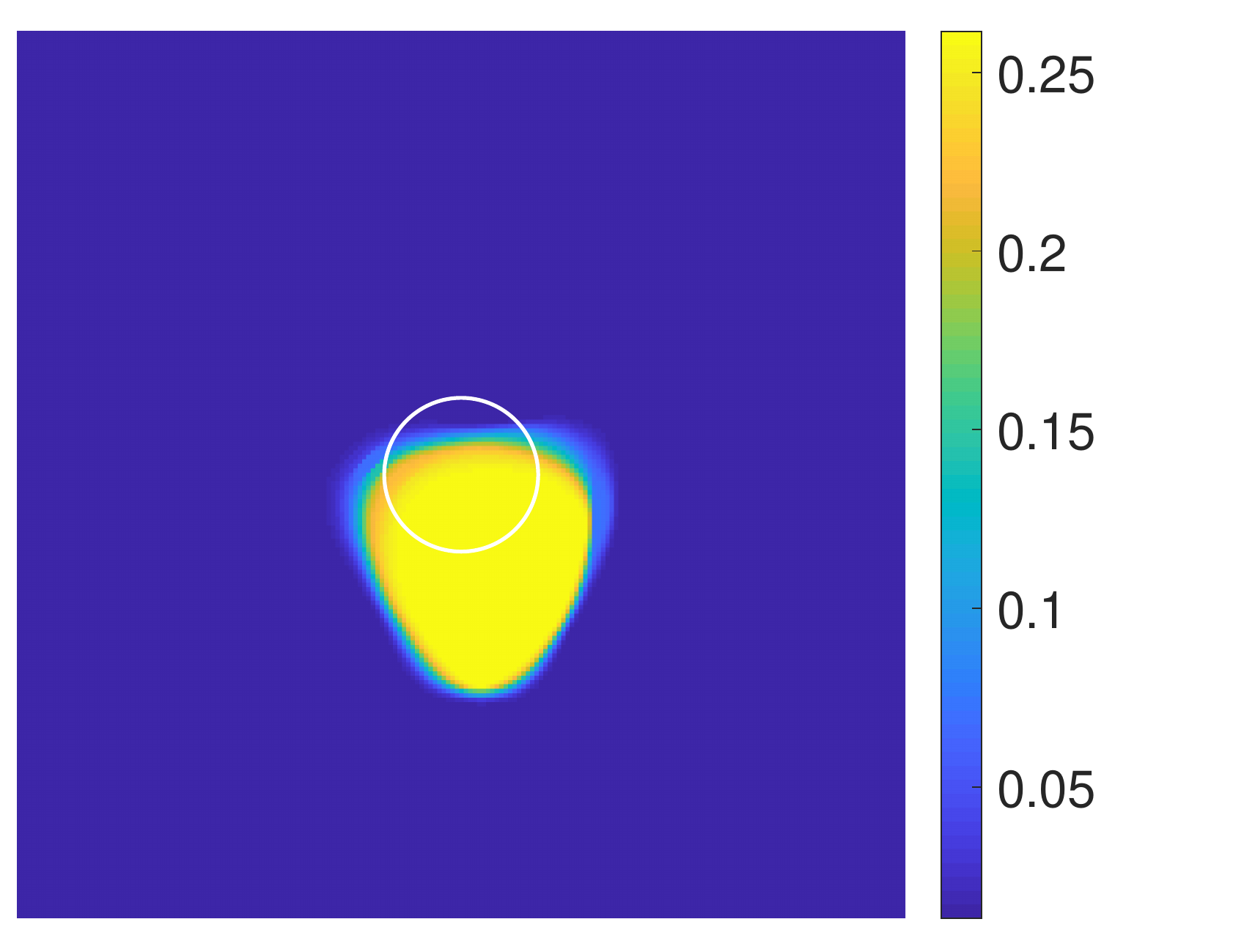}
		\subcaption{\small PSNR$\brackets{\mathbf{c}_0}$ = 17.43}
		\label{Fig:RecoPhantomMassStatic_psnr}
	\end{subfigure}
	\caption{\small \small Phantom reconstructions applying the static reconstruction method $\mathcal{M}_1$ to dynamic data. The white circle indicates shape and location of the chosen reference concentration. For contrast enhancement (a) shows a version of (b) with values larger than one being projected to one.}
	\label{Fig:RecoPhantomMassStatic}
\end{figure}\\
Thus, information about the object's motion clearly needs to be incorporated in the reconstruction in order to avoid motion artifacts and to get useful images. According to Remark~\ref{Rem:beta} it holds $\widehat{\widetilde{\mathbf{K}}}_{3,l}=0$ and $\mathcal{M}_3$ reduces to $\mathcal{M}_2$. Applying $\mathcal{M}_2$ yields the images in Figure~\ref{Fig:RecoPhantomMass}.
\begin{figure}[htbp]
	\begin{subfigure}[b]{0.32\textwidth}
		\centering
		\includegraphics[width=1.22\linewidth, trim=4.3cm 8.5cm 0 8cm, clip]{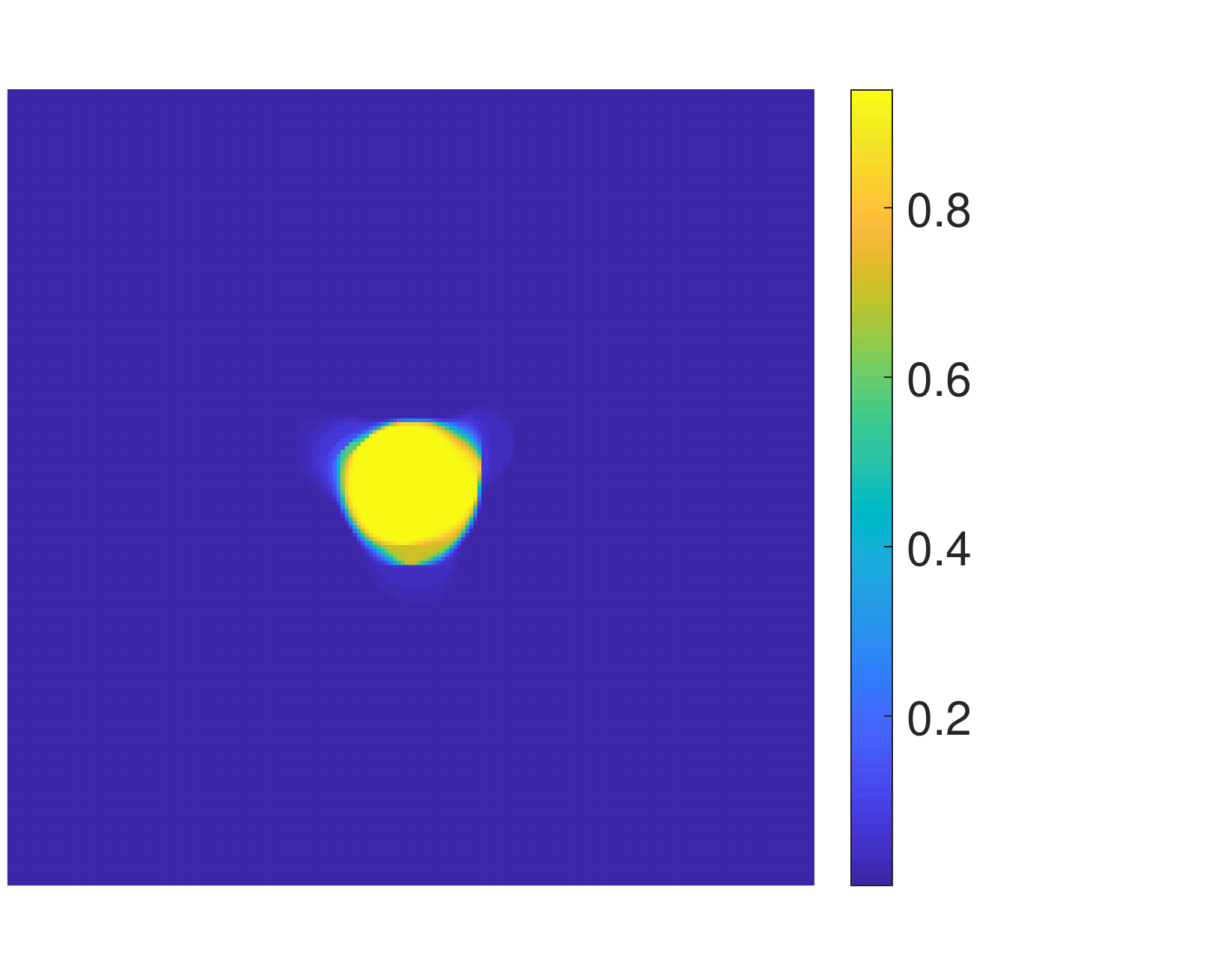}
		\subcaption{\small \small SSIM$\brackets{\mathbf{c}_0}$ = 0.9682}
		\label{Fig:RecoPhantomMass_a}
	\end{subfigure}
	\hfill
	\begin{subfigure}[b]{0.32\textwidth}
		\centering
		\includegraphics[width=1.22\linewidth, trim=4.3cm 8.5cm 0 8cm, clip]{images/Phantom.pdf}
		\subcaption{\small Groundtruth}
	\end{subfigure}
	\hfill
	\begin{subfigure}[b]{0.32\textwidth}
		\centering
		\includegraphics[width=1.22\linewidth, trim=4.3cm 8.5cm 0 8cm, clip]{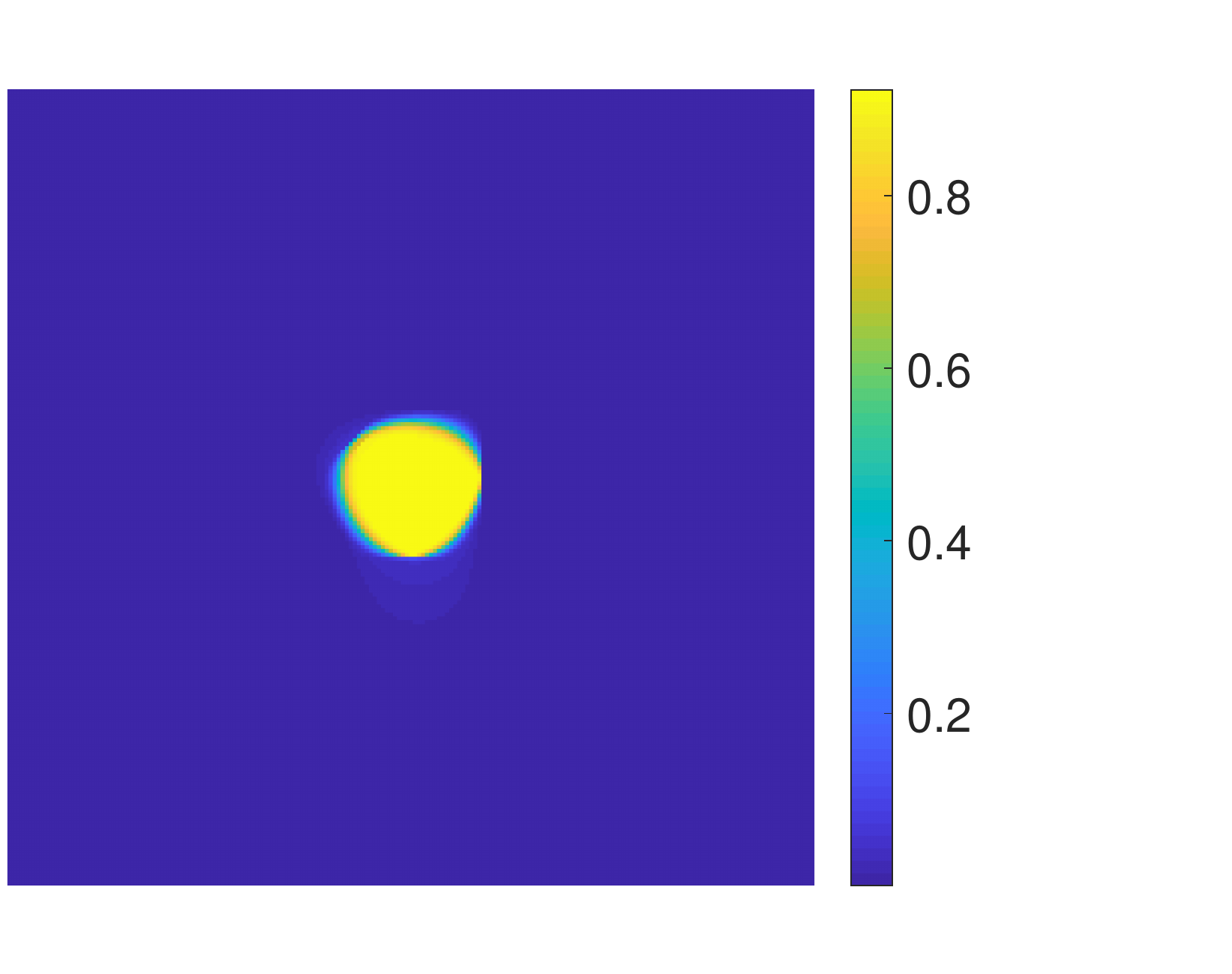}
		\subcaption{\small PSNR$\brackets{\mathbf{c}_0}$ = 28.97}
		\label{Fig:RecoPhantomMass_b}
	\end{subfigure}
	\\
	\begin{subfigure}[b]{0.32\textwidth}
		\centering
		\includegraphics[width=1.22\linewidth, trim=4.3cm 8.5cm 0 8cm, clip]{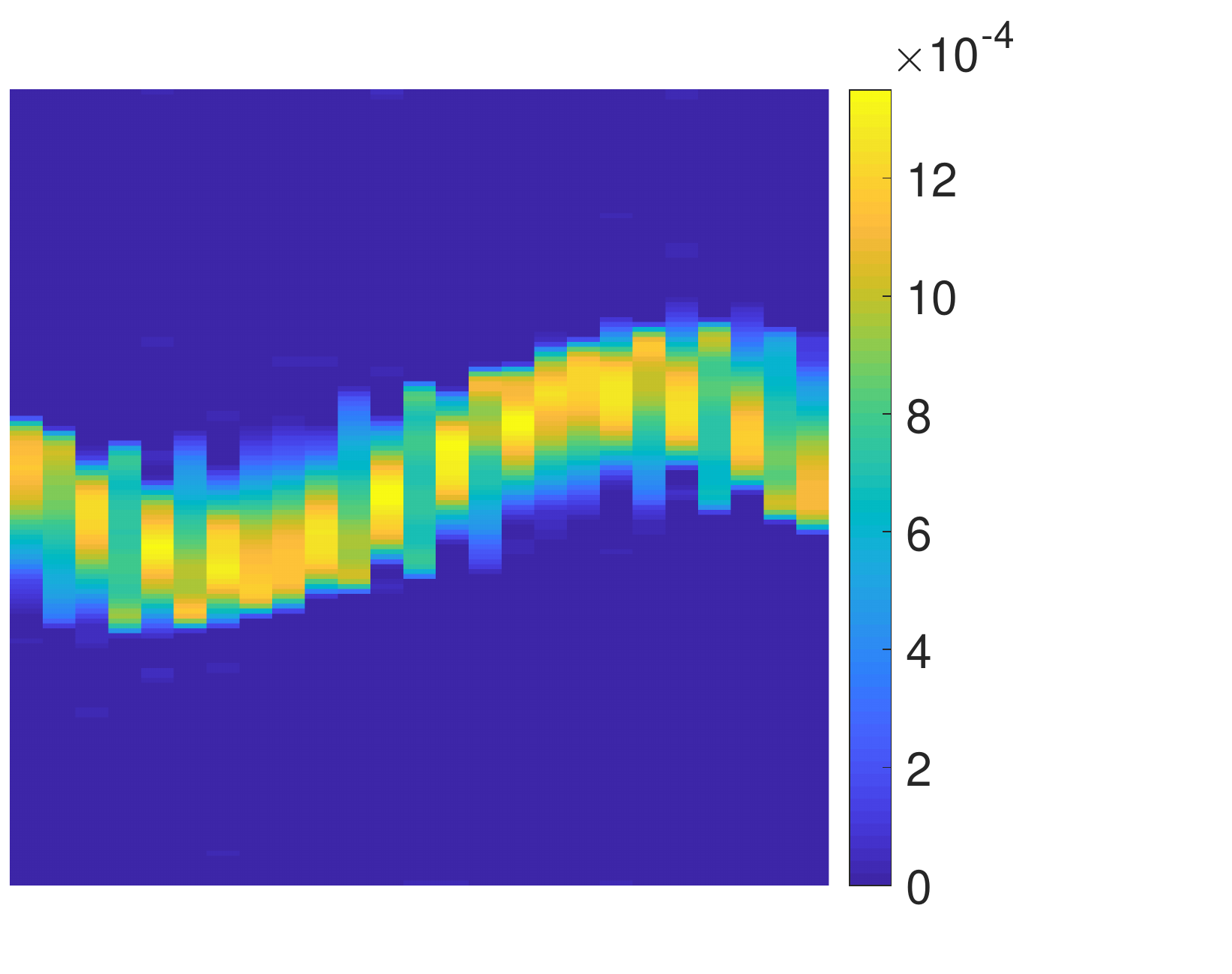}
		\subcaption{\small SSIM$\brackets{\mathbf{c}_0}$ = 0.9682}
		\label{Fig:RecoPhantomMass_c}
	\end{subfigure}
	\hfill
	\begin{subfigure}[b]{0.32\textwidth}
		\centering
		\includegraphics[width=1.22\linewidth, trim=4.3cm 8.5cm 0 8cm, clip]{images/Sino_mass.pdf}
		\subcaption{\small Groundtruth}
	\end{subfigure}
	\hfill
	\begin{subfigure}[b]{0.32\textwidth}
		\centering
		\includegraphics[width=1.22\linewidth, trim=4.3cm 8.5cm 0 8cm, clip]{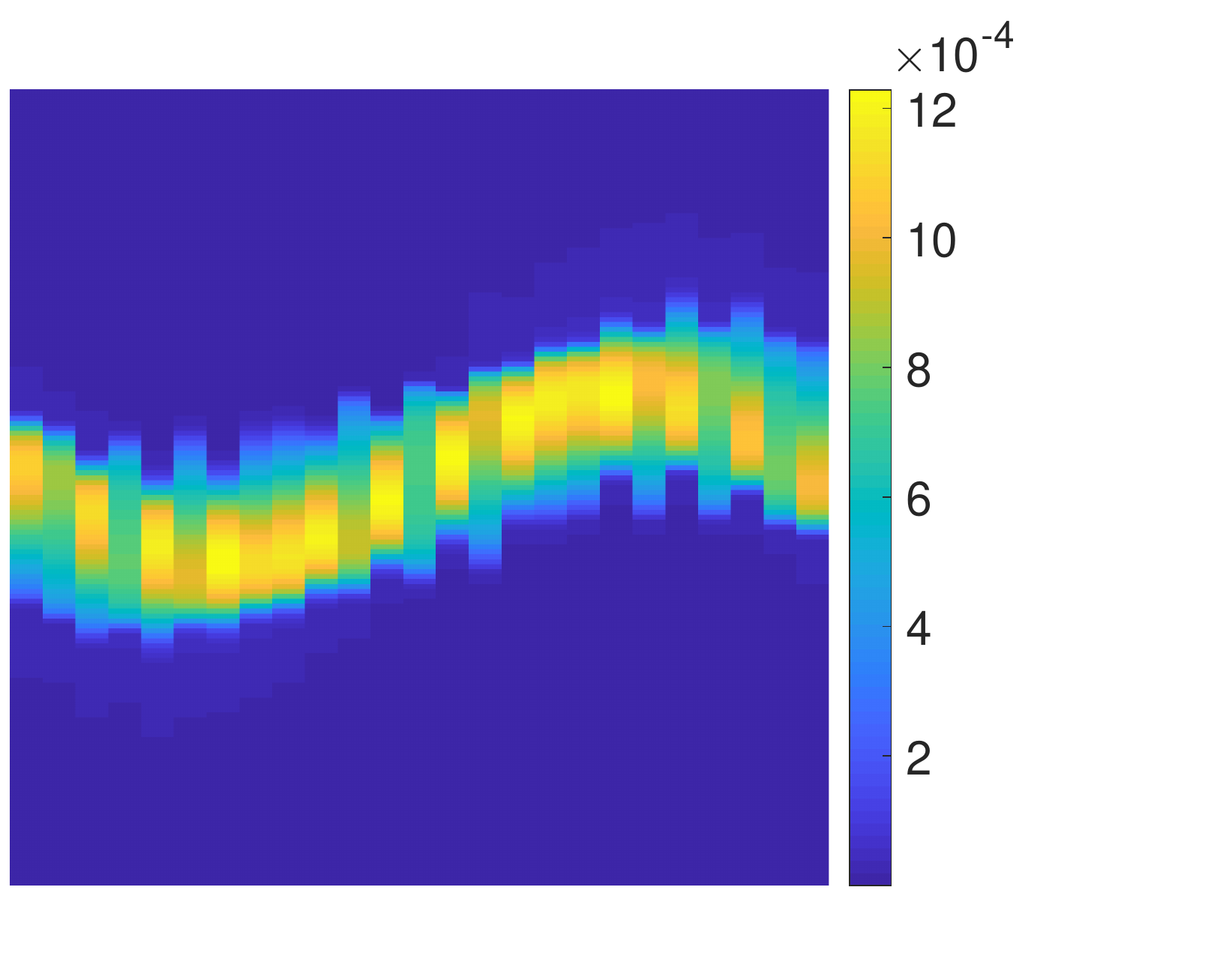}
		\subcaption{\small PSNR$\brackets{\mathbf{c}_0}$ = 28.97}
		\label{Fig:RecoPhantomMass_d}
	\end{subfigure}
	\caption{\small Phantom (first row) and corresponding sinogram (second row) reconstructions applying $\mathcal{M}_2$ incorporating motion information compared to the groundtruth phantom respectively sinogram.}
	\label{Fig:RecoPhantomMass}
\end{figure}
Both, usage of SSIM or PSNR as quality measure in the choice of parameters yields comparable reconstruction results. Phantom and sinogram are well reconstructed. The shape is not exactly a circle but can be recognized as one located at the right position. Also applying this method to data with added gaussian noise of $1\cdot 10^{-9}$ standard deviation (ca. 10\% of $u^*$) yields promising results as can be seen in Figure~\ref{Fig:RecoPhantomMass_noise}. Here, we do not compare SSIM with PSNR based image choices but we compare images with and without additional sparsity constraint on the Radon data. We find that for the sinogram obtained for $\alpha_3=0$ irritations are visible in the background, while by setting $\alpha_3>0$ we obtain a noise-free background in the sinogram and a clearer shape of the phantom. Note that for noisy data, we neglected executing a parameter comparison but rather used the parameters for the noise-free PSNR based image choice. Accordingly, we give the PSNR value in the captions, this time with higher precision to see a difference, whereas the SSIM value can be found in Table~\ref{table:RecoValues_mass} together with parameter choices and related SSIM and PSNR values for all of the figures of this section.
\begin{figure}[htbp]
	\begin{subfigure}[b]{0.32\textwidth}
		\centering
		\includegraphics[width=1.22\linewidth, trim=4.3cm 8.5cm 0 8cm, clip]{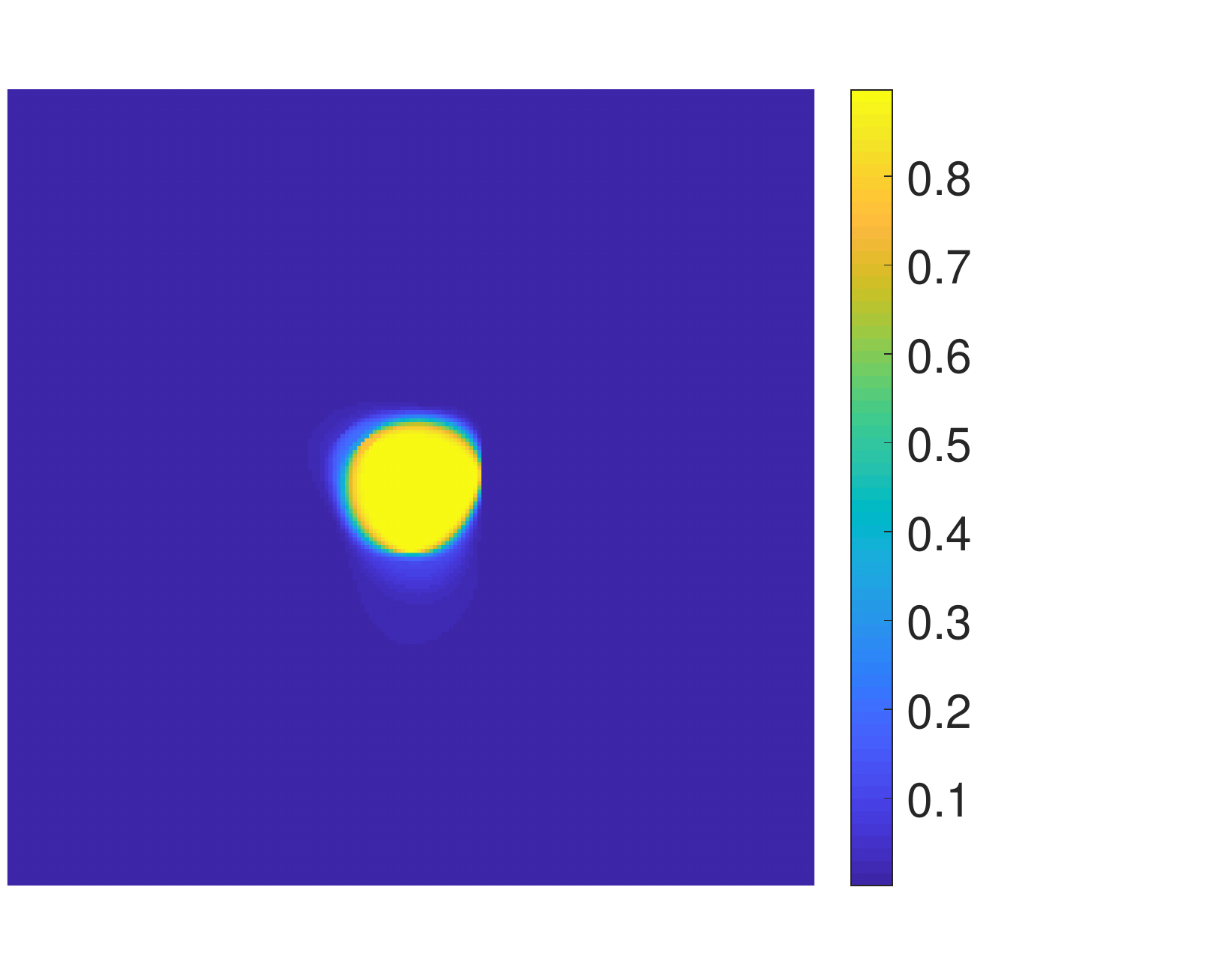}
		\subcaption{\small PSNR$\brackets{\mathbf{c}_0}$ = 27.539}
		\label{Fig:RecoPhantomMass_noise_a}
	\end{subfigure}
	\hfill
	\begin{subfigure}[b]{0.32\textwidth}
		\centering
		\includegraphics[width=1.22\linewidth, trim=4.3cm 8.5cm 0 8cm, clip]{images/Phantom.pdf}
		\subcaption{\small Groundtruth}
	\end{subfigure}
	\hfill
	\begin{subfigure}[b]{0.32\textwidth}
		\centering
		\includegraphics[width=1.22\linewidth, trim=4.3cm 8.5cm 0 8cm, clip]{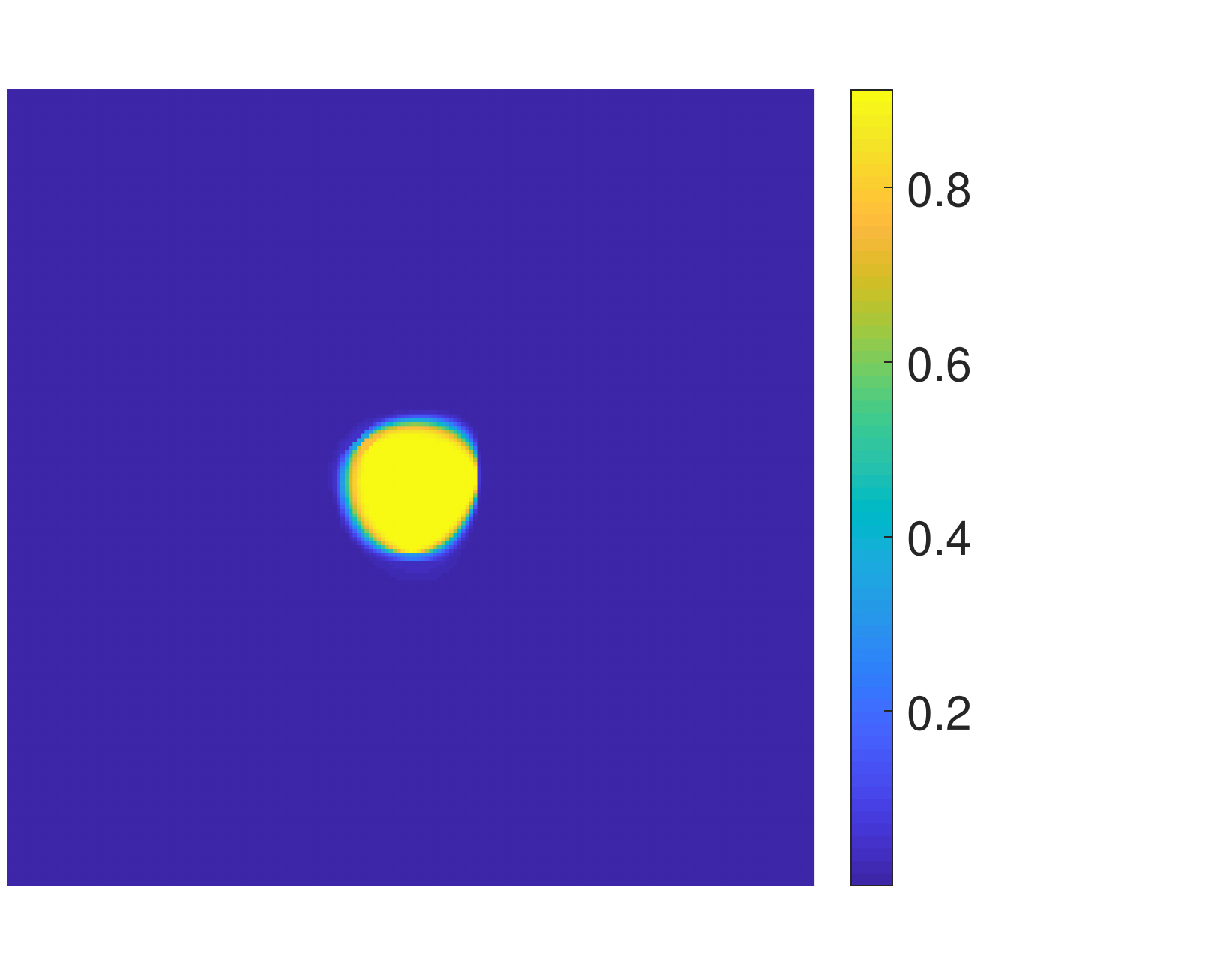}
		\subcaption{\small PSNR$\brackets{\mathbf{c}_0}$ = 27.542}
		\label{Fig:RecoPhantomMass_noise_b}
	\end{subfigure}
	\\
	\begin{subfigure}[b]{0.32\textwidth}
		\centering
		\includegraphics[width=1.22\linewidth, trim=4.3cm 8.5cm 0 8cm, clip]{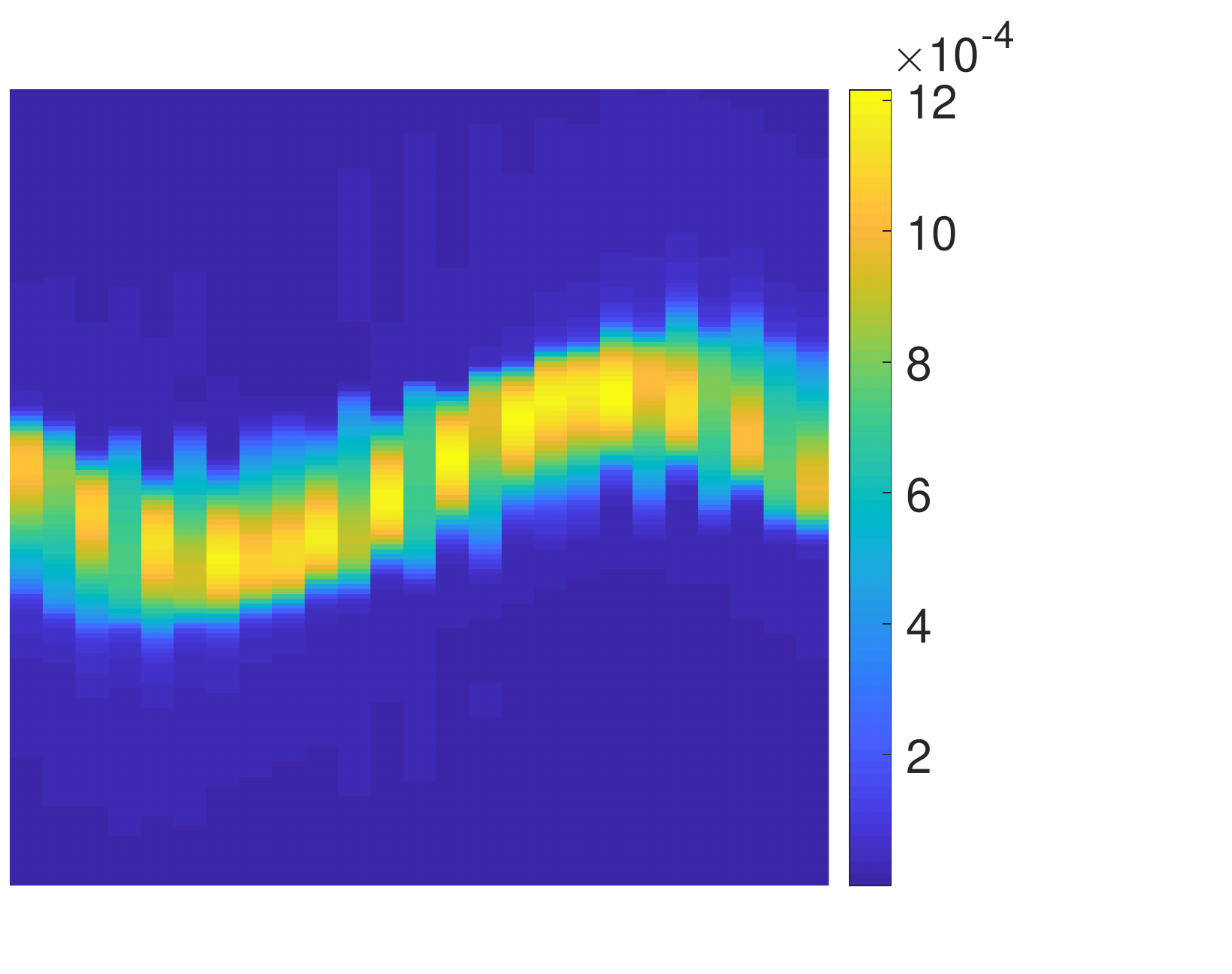}
		\subcaption{\small PSNR$\brackets{\mathbf{c}_0}$ = 27.539}
		\label{Fig:RecoPhantomMass_noise_c}
	\end{subfigure}
	\hfill
	\begin{subfigure}[b]{0.32\textwidth}
		\centering
		\includegraphics[width=1.22\linewidth, trim=4.3cm 8.5cm 0 8cm, clip]{images/Sino_mass.pdf}
		\subcaption{\small Groundtruth}
	\end{subfigure}
	\hfill
	\begin{subfigure}[b]{0.32\textwidth}
		\centering
		\includegraphics[width=1.22\linewidth, trim=4.3cm 8.5cm 0 8cm, clip]{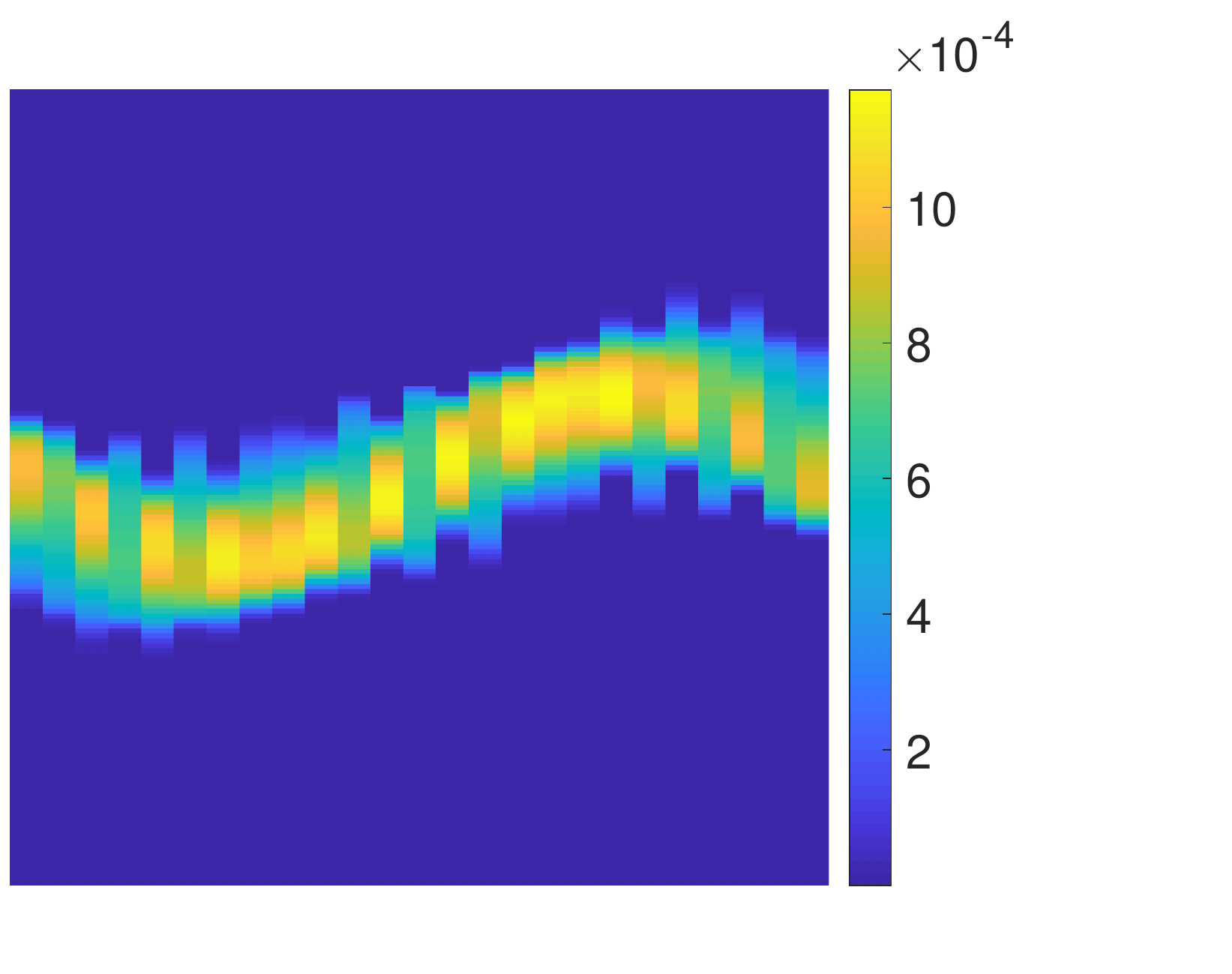}
		\subcaption{\small PSNR$\brackets{\mathbf{c}_0}$ = 27.542}
		\label{Fig:RecoPhantomMass_noise_d}
	\end{subfigure}
	\caption{\small Comparison of groundthruth images to reconstructed phantom (first row) and sinogram (second row) applying $\mathcal{M}_2$ to noisy data without (left column) and with (right column) sparsity constraint on the Radon data.}
	\label{Fig:RecoPhantomMass_noise}
\end{figure}\\
\begin{table}[htbp]
	\centering
	\caption{\small Reconstruction parameters and values}
	\renewcommand{\arraystretch}{1.3}
	\begin{tabular}{|c|c|c|c|c|c|}
		\hline
		\textbf{Figure} & $\alpha_1$ & $\alpha_2$ & $\alpha_3$ & SSIM$\brackets{\mathbf{c}_0}$ & PSNR$\brackets{\mathbf{c}_0}$ \\
		\hline \hline
		& \\[-1.6em]
		\ref{Fig:RecoPhantomMassStatic_ssim} & $2\cdot 10^5$ & $0.1^{3.6}$ & 0 & 0.8628 & 14.52 \\ 
		\ref{Fig:RecoPhantomMassStatic_psnr} & $2\cdot 10^1$ & $0.1^{4.4}$ & 0 & 0.3300 & 17.43 \\
		\hline 
		\ref{Fig:RecoPhantomMass_a} + \ref{Fig:RecoPhantomMass_c}& $2\cdot 10^2$ & $0.1^{4.6}$ & 0 & 0.9682 & 28.24 \\ 
		\ref{Fig:RecoPhantomMass_b} + \ref{Fig:RecoPhantomMass_d} & $2\cdot 10^7$ & $0.1^{1.4}$ & 0 & 0.9197 & 28.97 \\
		\hline 
		\ref{Fig:RecoPhantomMass_noise_a} + \ref{Fig:RecoPhantomMass_noise_c}& $2\cdot 10^7$ & $0.1^{1.4}$ & 0 & 0.8357 & 27.54 \\ 
		\ref{Fig:RecoPhantomMass_noise_b} + \ref{Fig:RecoPhantomMass_noise_d} & $2\cdot 10^7$ & $0.1^{1.4}$ & 20 & 0.9807 & 27.54 \\
		\hline
	\end{tabular}
	\label{table:RecoValues_mass}
	\renewcommand{\arraystretch}{1}
\end{table}
\subsection{Intensity preservation}
Next, we regard reconstruction results for the same motion but under the assumption of intensity preservation. For visualization we still refer to Figure~\ref{Fig:Motion_mass} keeping in mind that the concentration value stays the same no matter which size the object takes. Figure~\ref{Fig:RecoPhantomIntStatic} gives results neglecting the time-dependence of the phantom. Like for mass preservation it seems like the PSNR based parameter choice yields a phantom with shape closer to the searched-for solution (cf. Figure~\ref{Fig:RecoPhantomIntStatic_psnr}). The one related to structural similarity introduces artifacts that could be wrongly interpreted as three additional balls of lower concentration (cf. Figure~\ref{Fig:RecoPhantomIntStatic_ssim}). However, both images are useless for clinical diagnostic and motion information needs to be incorporated.
\begin{figure}[htbp]
	\begin{subfigure}[b]{0.5\textwidth}
		\centering
		\includegraphics[width=1\linewidth, trim=0 8.5cm 0 8cm, clip]{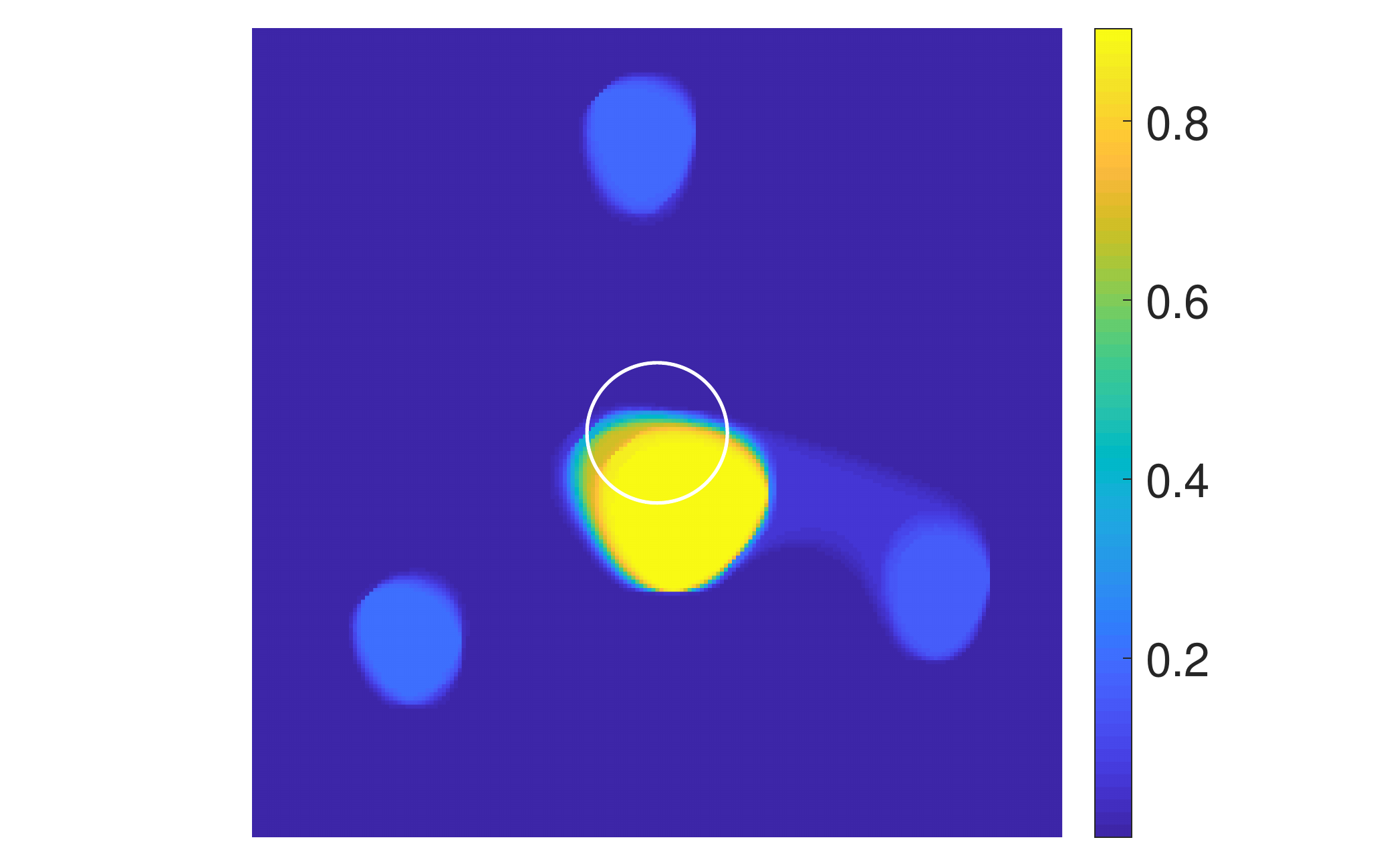}
		\subcaption{\small SSIM$\brackets{\mathbf{c}_0}$ = 0.7862}
		\label{Fig:RecoPhantomIntStatic_ssim}
	\end{subfigure}
	\hfill
	\begin{subfigure}[b]{0.5\textwidth}
		\centering
		\includegraphics[width=1\linewidth, trim=0 8.5cm 0 8cm, clip]{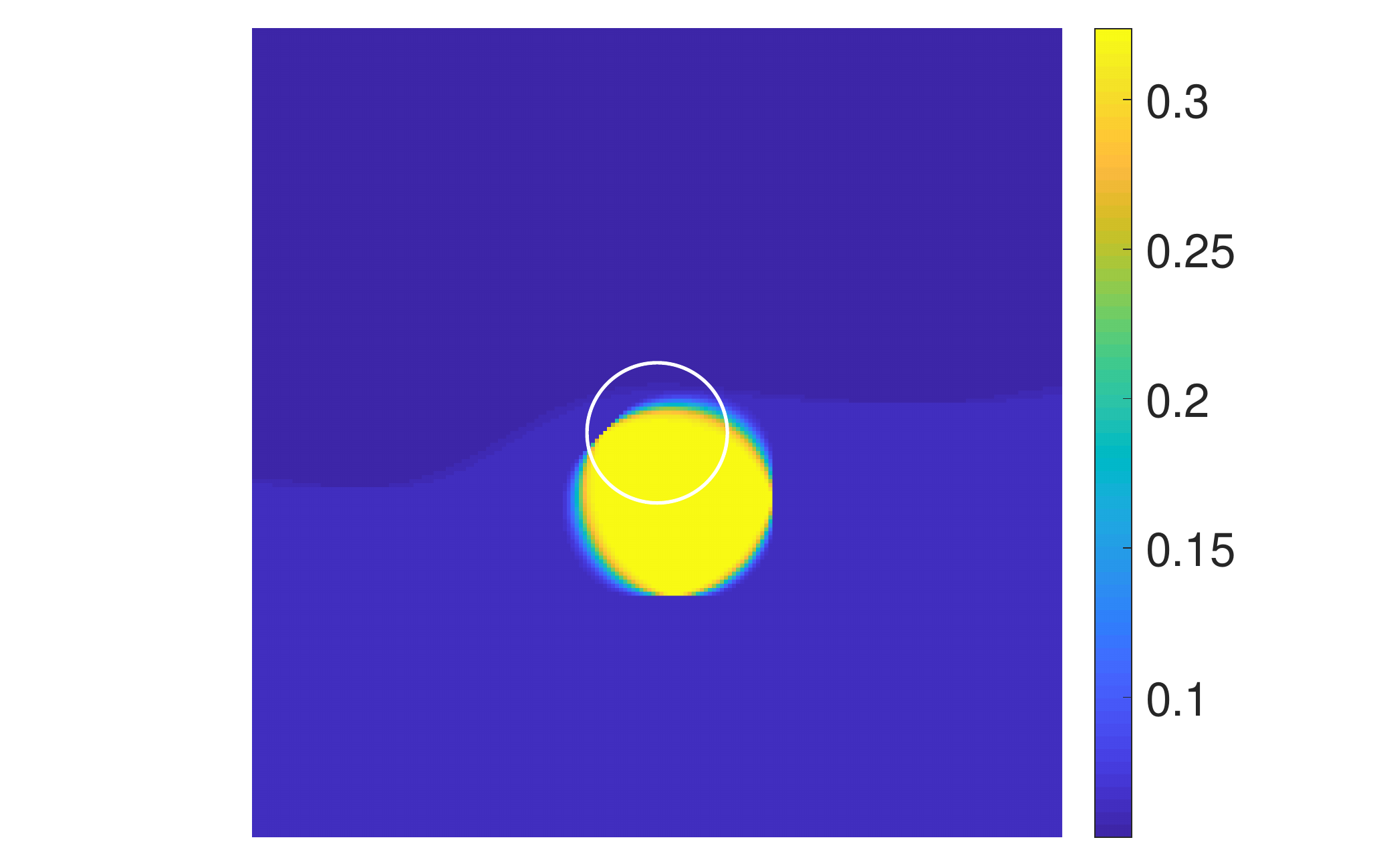}
		\subcaption{\small PSNR$\brackets{\mathbf{c}_0}$ = 15.21}
		\label{Fig:RecoPhantomIntStatic_psnr}
	\end{subfigure}
	\caption{\small Phantom reconstructions applying static reconstruction method $\mathcal{M}_1$ to dynamic data. The white circle indicates shape and location of the chosen reference concentration.}
	\label{Fig:RecoPhantomIntStatic}
\end{figure}\\
This time we have that $\widehat{\widetilde{\mathbf{K}}}_{3,l}\neq0$ and we compare results incorporating this term (method $\mathcal{M}_3$) with those neglecting it (method $\mathcal{M}_2$). In Figure~\ref{Fig:RecoPhantomInt_ssim_} we compare results for the SSIM based parameter choice. In contrast to the results ignoring the dynamics, we find that the circle is reconstructed at the right position. Applying method $\mathcal{M}_2$ still yields three additional balls that could be interpreted as parts of the phantom, whereas for method $\mathcal{M}_3$ we also have artifacts but they can be identified as artifacts. However, in the latter we are confronted with outliers, which might explain the lower SSIM value.
\begin{figure}[htbp]
	\begin{subfigure}[b]{0.32\textwidth}
		\centering
		\includegraphics[width=1.22\linewidth, trim=4.3cm 8.5cm 0 8cm, clip]{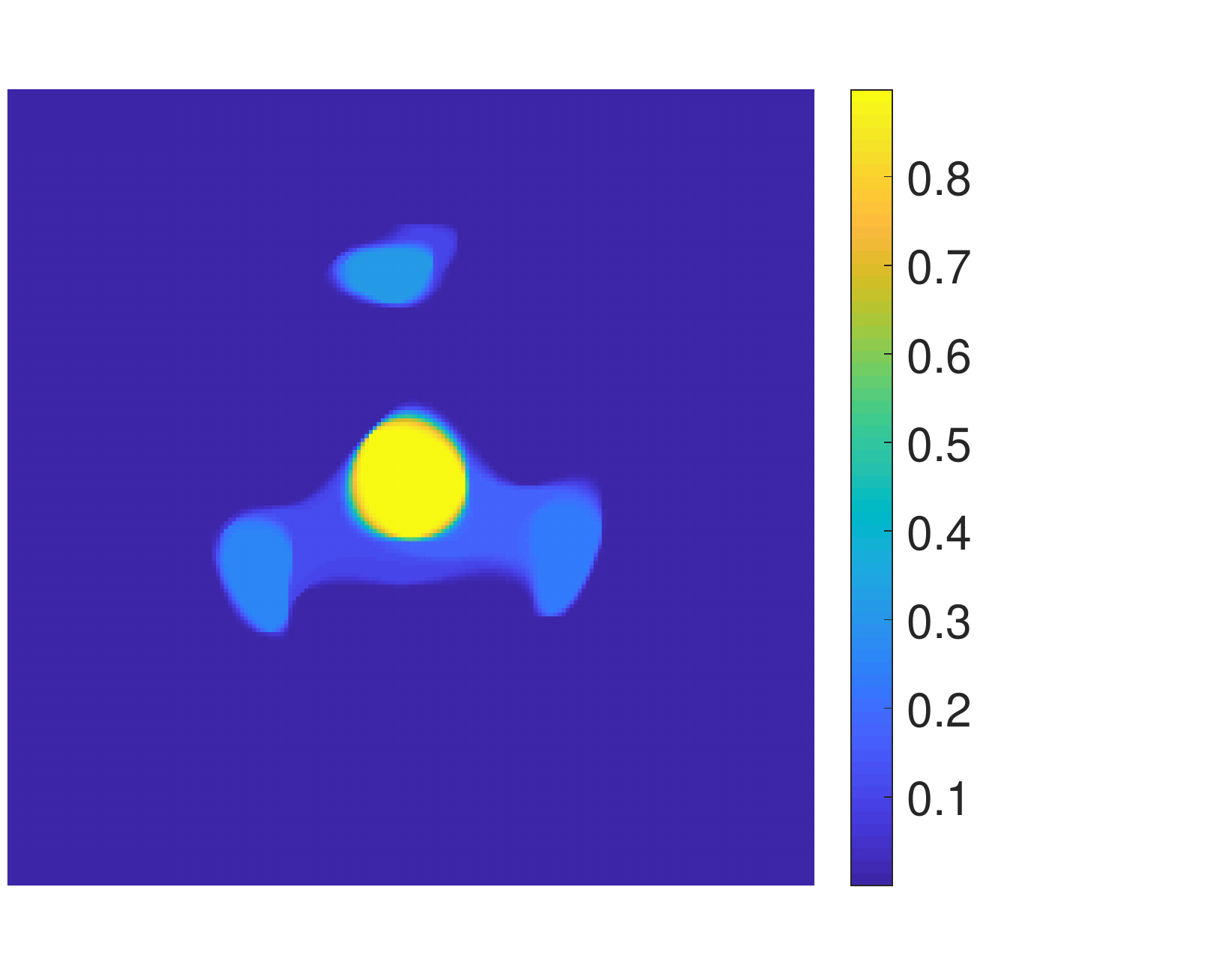}
		\subcaption{\small SSIM$\brackets{\mathbf{c}_0}$ = 0.8955}
		\label{Fig:RecoPhantomInt_ssim}
	\end{subfigure}
	\hfill
	\begin{subfigure}[b]{0.32\textwidth}
		\centering
		\includegraphics[width=1.22\linewidth, trim=4.3cm 8.5cm 0 8cm, clip]{images/Phantom.pdf}
		\subcaption{\small Groundtruth}
	\end{subfigure}
	\hfill
	\begin{subfigure}[b]{0.32\textwidth}
		\centering
		\includegraphics[width=1.22\linewidth, trim=4.3cm 8.5cm 0 8cm, clip]{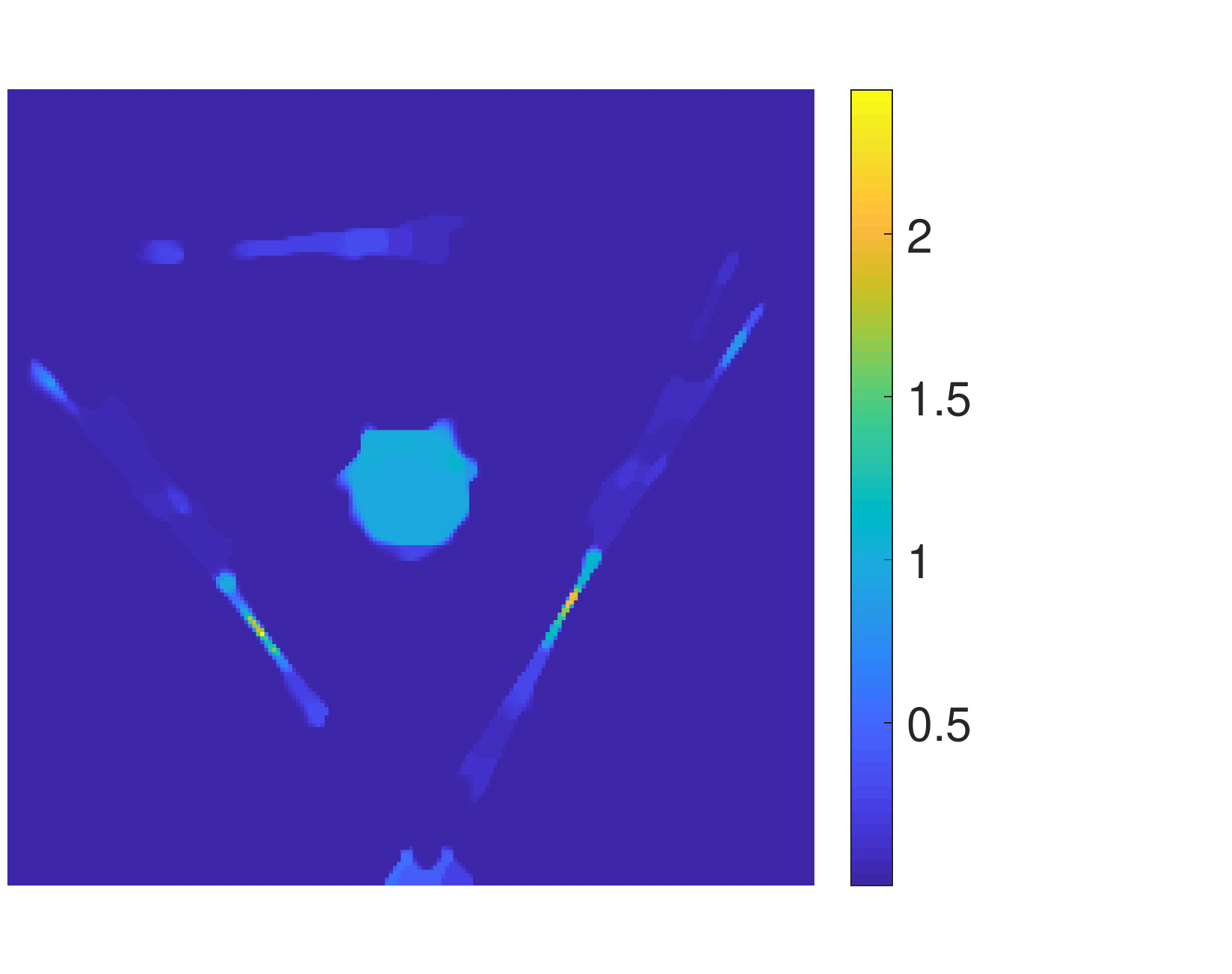}
		\subcaption{\small SSIM$\brackets{\mathbf{c}_0}$ = 0.8465}
		\label{Fig:RecoPhantomIntTwoTerms_ssim}
	\end{subfigure}
	\\
	\begin{subfigure}[b]{0.32\textwidth}
		\centering
		\includegraphics[width=1.22\linewidth, trim=4.3cm 8.5cm 0 8cm, clip]{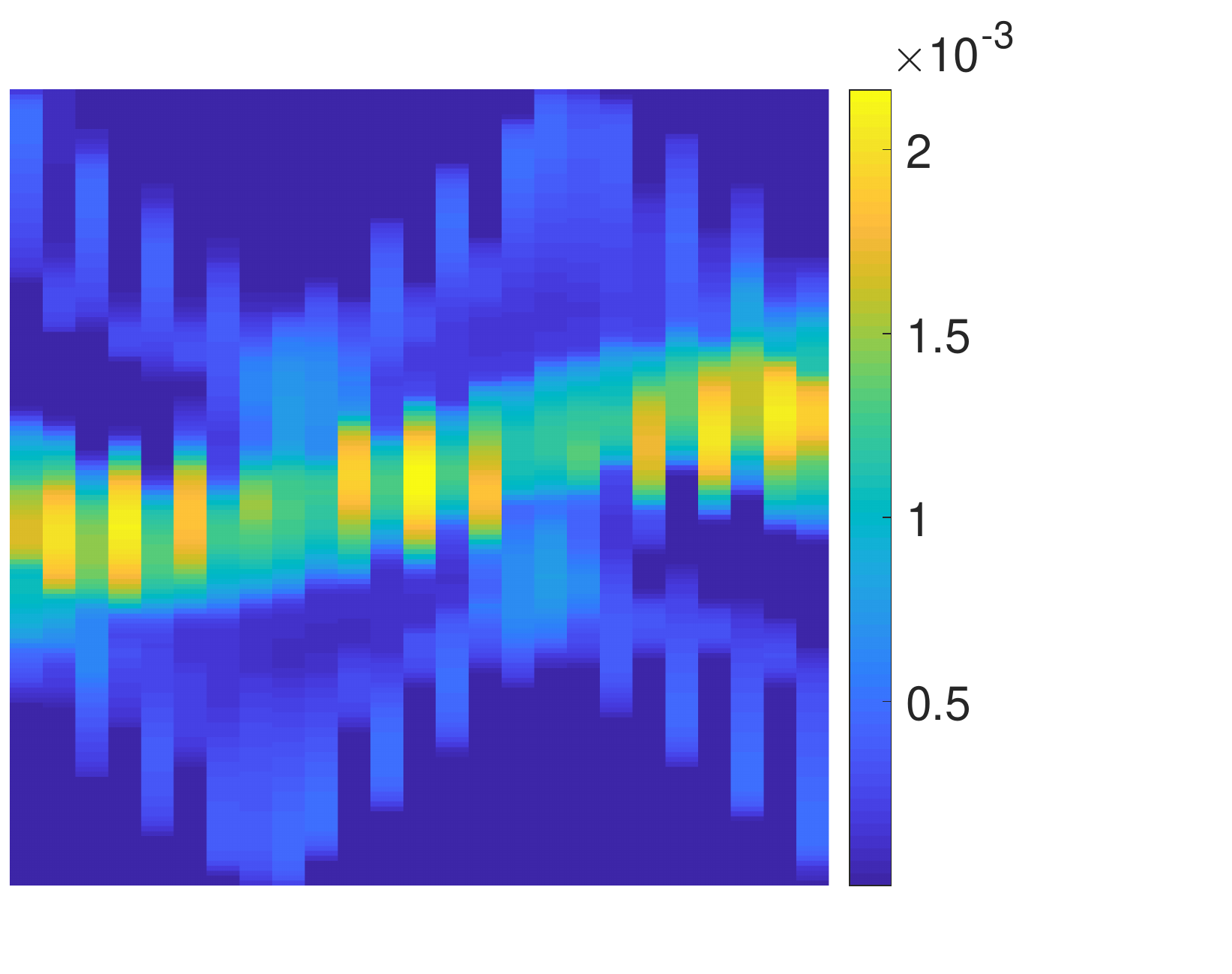}
		\subcaption{\small SSIM$\brackets{\mathbf{c}_0}$ = 0.8955}
		\label{Fig:RecoSinoInt_ssim}
	\end{subfigure}
	\hfill
	\begin{subfigure}[b]{0.32\textwidth}
		\centering
		\includegraphics[width=1.22\linewidth, trim=4.3cm 8.5cm 0 8cm, clip]{images/Sino_int.pdf}
		\subcaption{\small Groundtruth}
	\end{subfigure}
	\hfill
	\begin{subfigure}[b]{0.32\textwidth}
		\centering
		\includegraphics[width=1.22\linewidth, trim=4.3cm 8.5cm 0 8cm, clip]{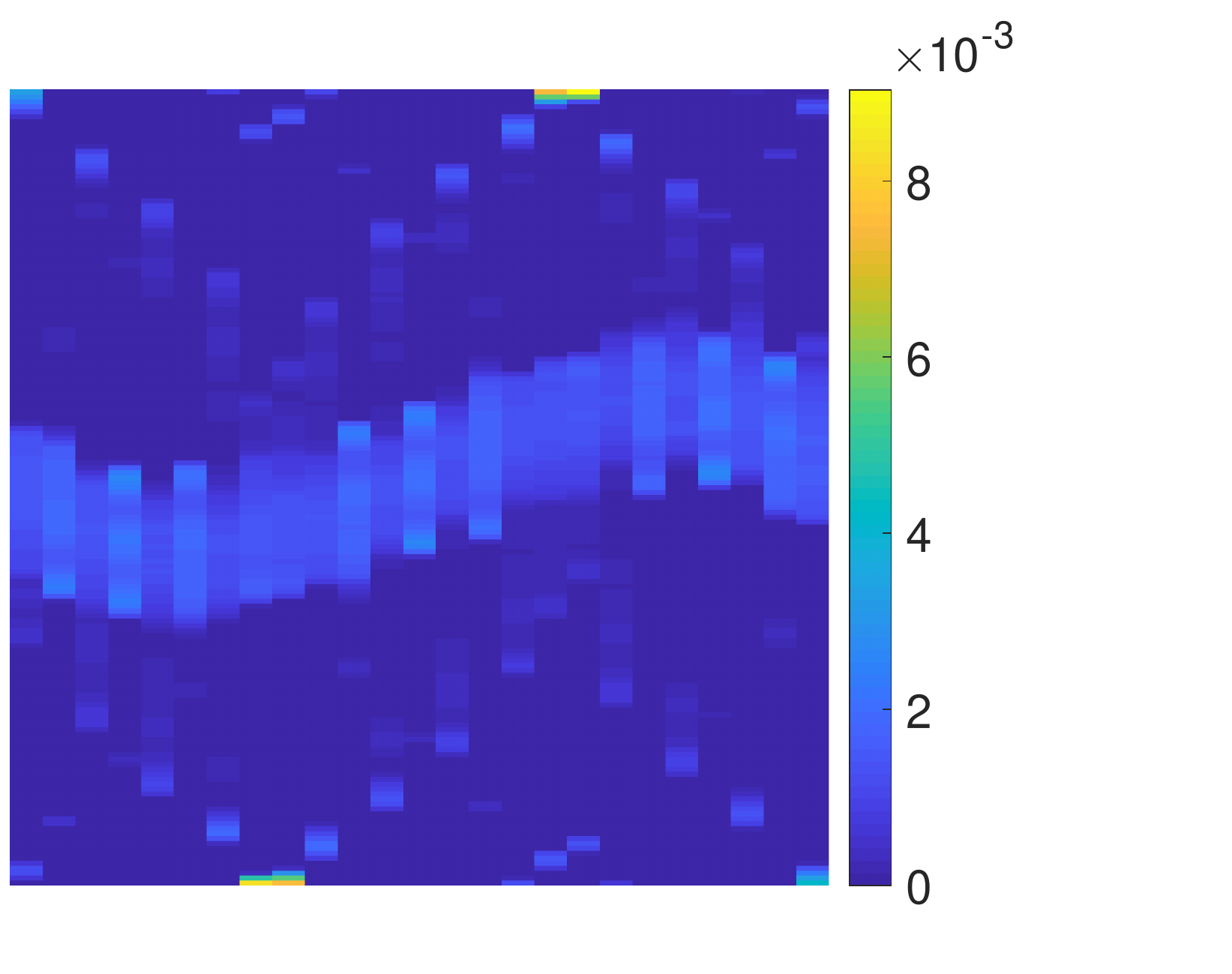}
		\subcaption{\small SSIM$\brackets{\mathbf{c}_0}$ = 0.8465}
		\label{Fig:RecoSinoIntTwoTerms_ssim}
	\end{subfigure}
	\caption{\small Groundtruth images compared to reconstructed phantom (first row) and sinogram (second row) applying $\mathcal{M}_2$ (left column) or $\mathcal{M}_3$ (right column) for SSIM based parameter choice.}
	\label{Fig:RecoPhantomInt_ssim_}
\end{figure}
\begin{figure}[htbp]
	\begin{subfigure}[b]{0.32\textwidth}
		\centering
		\includegraphics[width=1.22\linewidth, trim=4.3cm 8.5cm 0 8cm, clip]{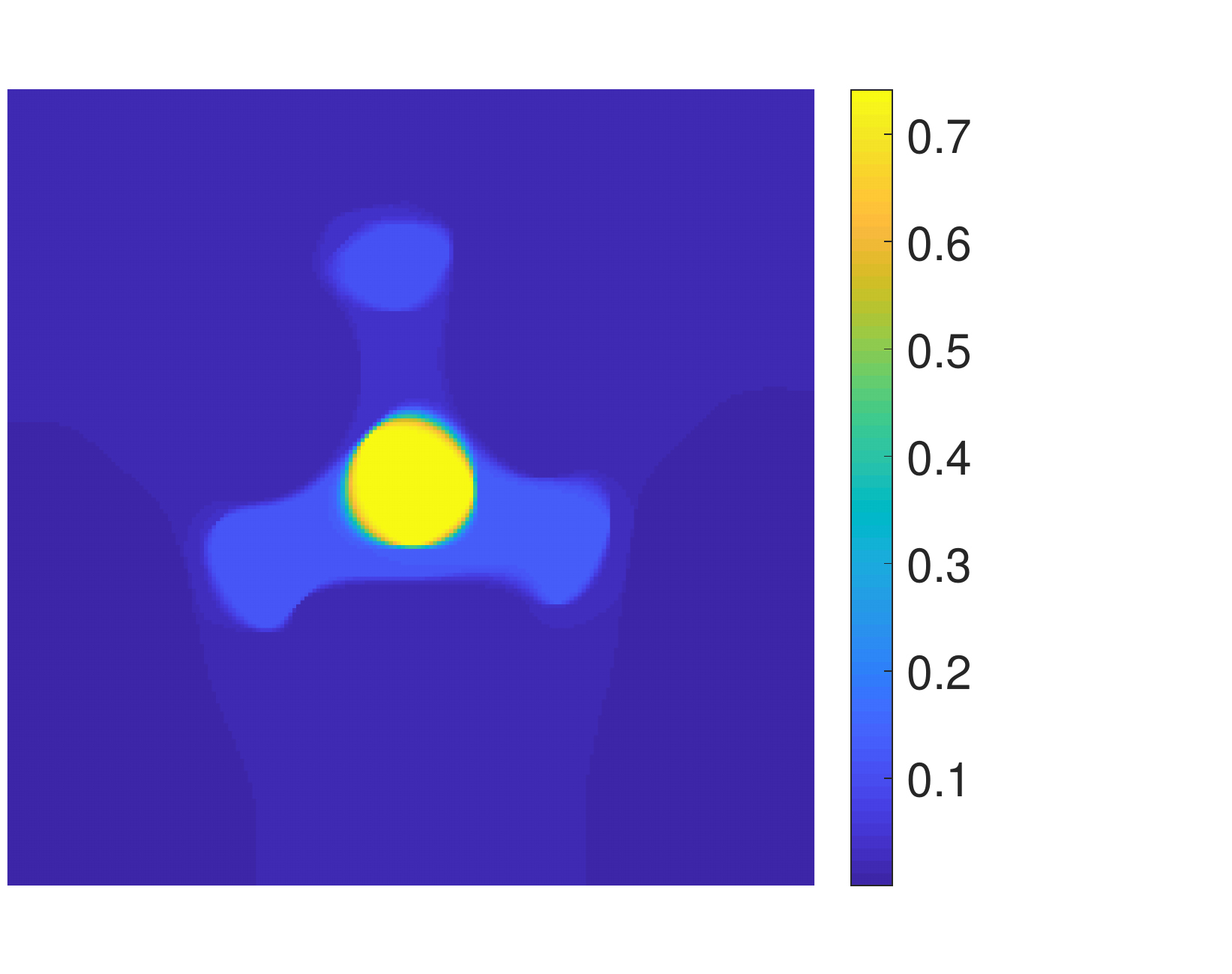}
		\subcaption{\small PSNR$\brackets{\mathbf{c}_0}$ = 23.26}
		\label{Fig:RecoPhantomInt_psnr}
	\end{subfigure}
	\hfill
	\begin{subfigure}[b]{0.32\textwidth}
		\centering
		\includegraphics[width=1.22\linewidth, trim=4.3cm 8.5cm 0 8cm, clip]{images/Phantom.pdf}
		\subcaption{\small Groundtruth}
	\end{subfigure}
	\hfill
	\begin{subfigure}[b]{0.32\textwidth}
		\centering
		\includegraphics[width=1.22\linewidth, trim=4.3cm 8.5cm 0 8cm, clip]{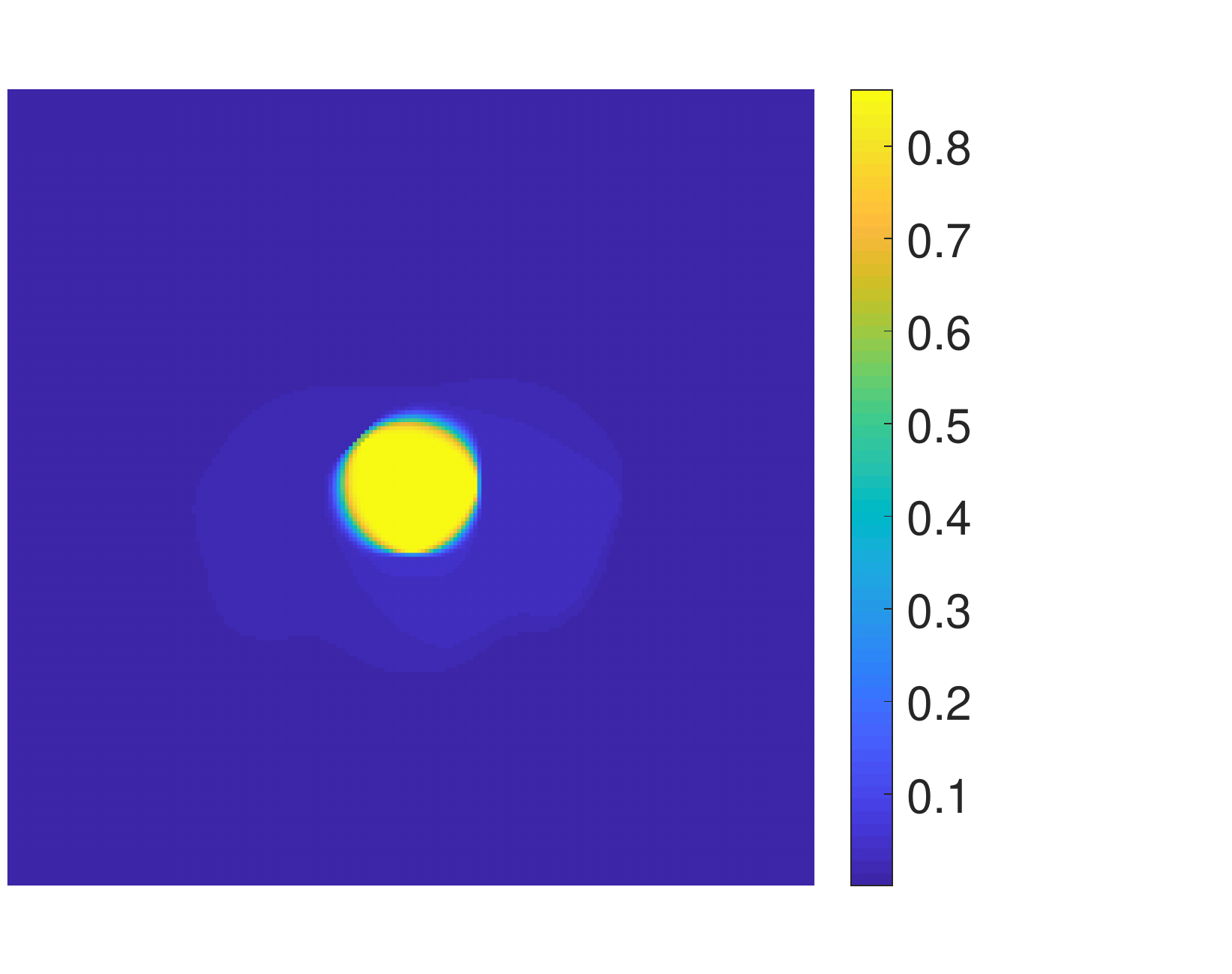}
		\subcaption{\small PSNR$\brackets{\mathbf{c}_0}$ = 27.64}
		\label{Fig:RecoPhantomIntTwoTerms_psnr}
	\end{subfigure}
	\\
	\begin{subfigure}[b]{0.32\textwidth}
		\centering
		\includegraphics[width=1.22\linewidth, trim=4.3cm 8.5cm 0 8cm, clip]{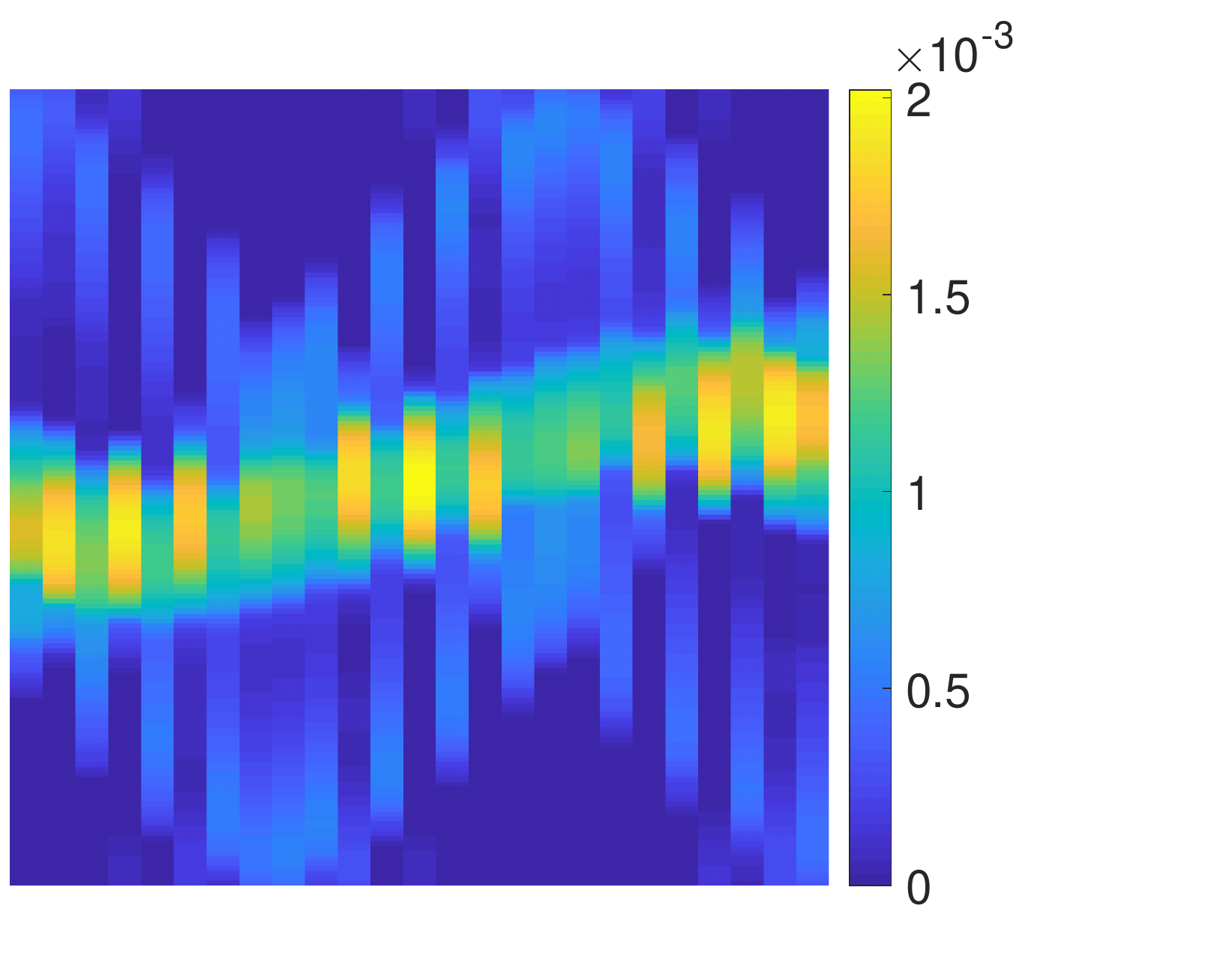}
		\subcaption{\small PSNR$\brackets{\mathbf{c}_0}$ = 23.26}
		\label{Fig:RecoSinoInt_psnr}
	\end{subfigure}
	\hfill
	\begin{subfigure}[b]{0.32\textwidth}
		\centering
		\includegraphics[width=1.22\linewidth, trim=4.3cm 8.5cm 0 8cm, clip]{images/Sino_int.pdf}
		\subcaption{\small Groundtruth}
	\end{subfigure}
	\hfill
	\begin{subfigure}[b]{0.32\textwidth}
		\centering
		\includegraphics[width=1.22\linewidth, trim=4.3cm 8.5cm 0 8cm, clip]{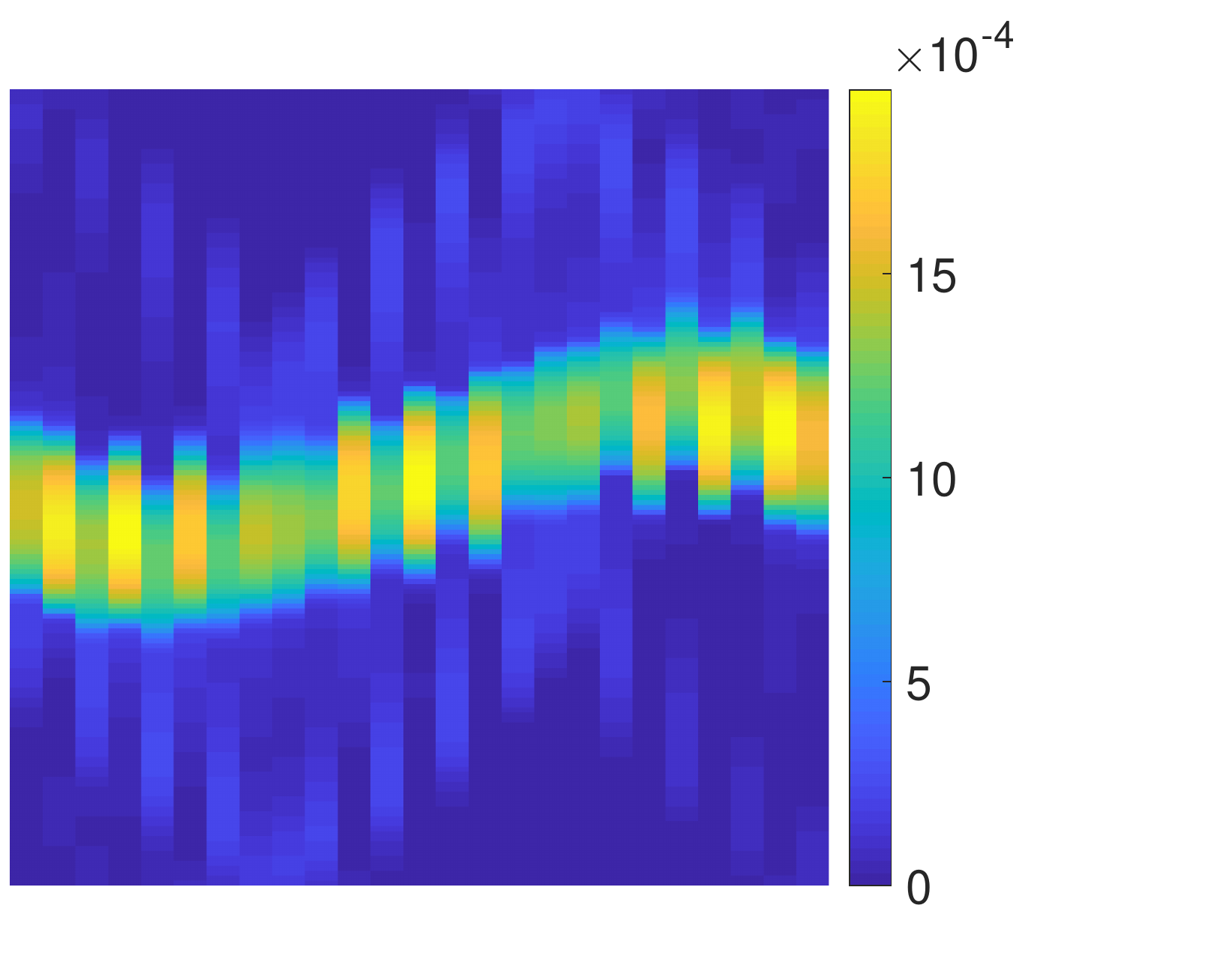}
		\subcaption{\small PSNR$\brackets{\mathbf{c}_0}$ = 27.64}
		\label{Fig:RecoSinoIntTwoTerms_psnr}
	\end{subfigure}
	\caption{\small Groundtruth images compared to reconstructed phantom (first row) and sinogram (second row) applying $\mathcal{M}_2$ (left column) or $\mathcal{M}_3$ (right column) for PSNR based parameter choice.}
	\label{Fig:RecoPhantomInt_psnr_}
\end{figure}
\begin{figure}[htbp]
	\begin{subfigure}[b]{0.5\textwidth}
		\centering
		\includegraphics[width=1\linewidth, trim=0 8.5cm 0 8cm, clip]{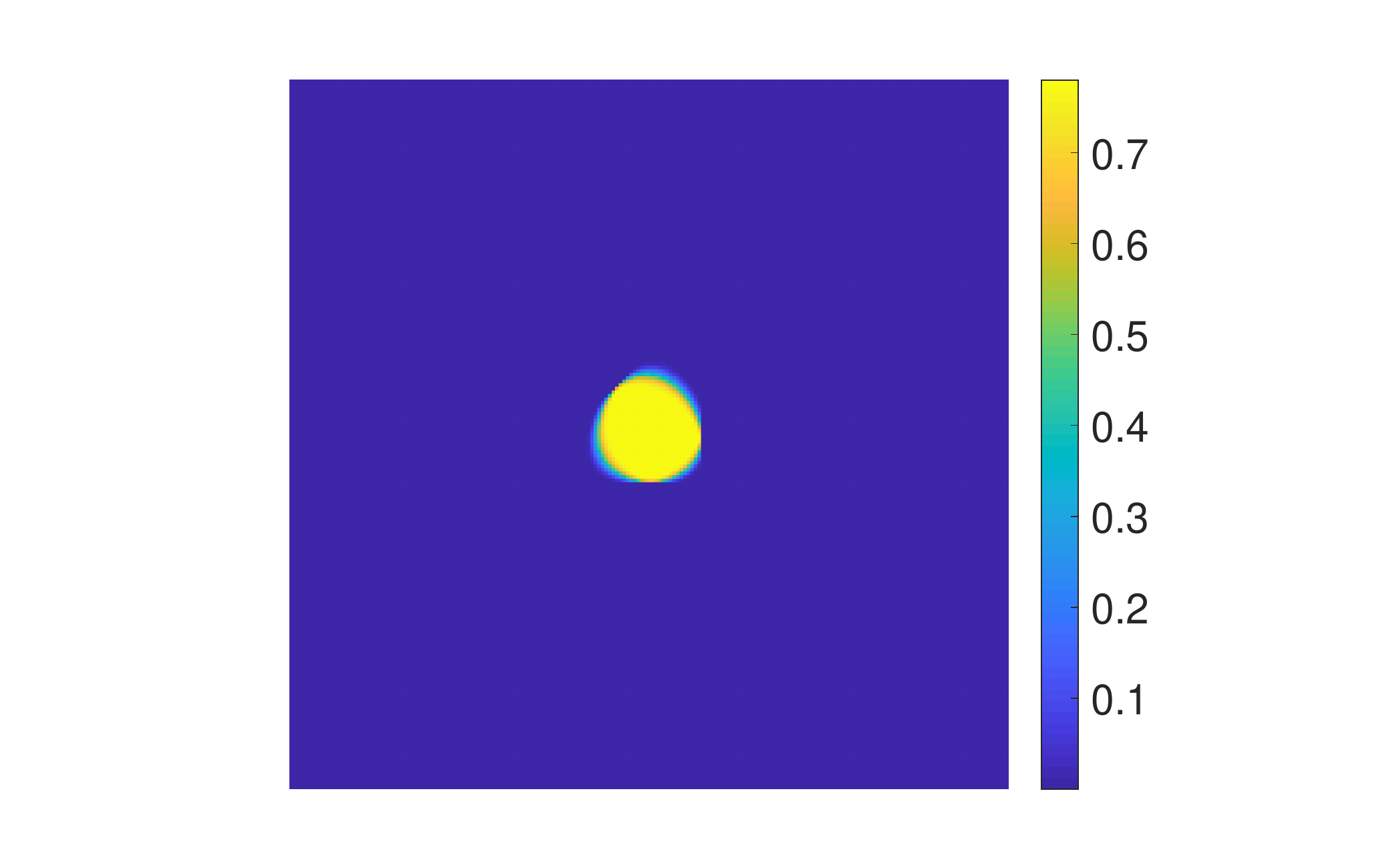}
		\subcaption{\small SSIM$\brackets{\mathbf{c}_0}$ = 0.9703}
	\end{subfigure}
	\hfill
	\begin{subfigure}[b]{0.5\textwidth}
		\centering
		\includegraphics[width=1\linewidth, trim=0 8.5cm 0 8cm, clip]{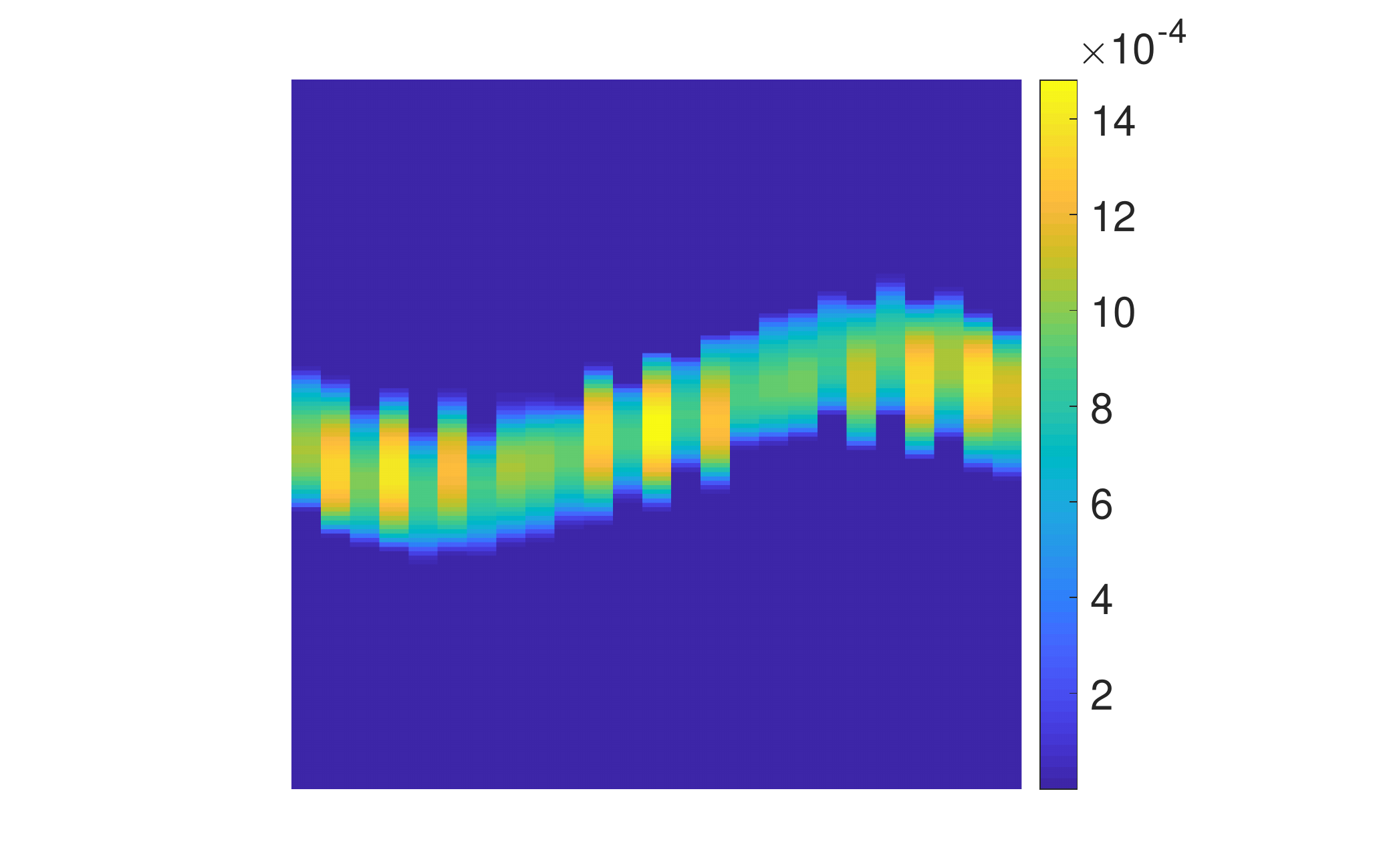}
		\subcaption{\small SSIM$\brackets{\mathbf{c}_0}$ = 0.9703}
	\end{subfigure}
	\caption{\small Phantom and sinogram reconstruction applying method $\mathcal{M}_2$ with $\alpha_3>0$.}
	\label{Fig:RecoPhantomIntSparsity}
\end{figure}
\\
For PSNR based image choice (cf. Figure~\ref{Fig:RecoPhantomInt_psnr_}) we obtain, using method $\mathcal{M}_3$ incorporating an additional term in the forward model, an image without outliers, well-reconstructed edges, and correct location of the phantom. Nevertheless, especially in the sinogram we find background irritations. A possible reason for the errors in the sinogram might be that although we improve the model by adding $\widehat{\widetilde{\mathbf{K}}}_{3,l}$ we still have the inaccuracy resulting from using the small sinogram instead of a needed full one. However, artifacts can be further reduced by applying a sparsity constraint on the sinogram in addition to the TV penalty term acting on the concentration. Therewith, we even obtain for method $\mathcal{M}_2$, neglecting $\widehat{\widetilde{\mathbf{K}}}_{3,l}$, good reconstruction results as can be seen in Figure~\ref{Fig:RecoPhantomIntSparsity}. Here, we did not compare similarity results for different parameter choices, instead we chose $\alpha_1,\;\alpha_2$ as for the SSIM based image choice for method $\mathcal{M}_2$ in combination with setting $\alpha_3>0$. Thus, in the caption we give the corresponding SSIM value. Again, precise parameter choices and related SSIM and PSNR values can be found in Table~\ref{table:RecoValues_int}.
\begin{table}[htbp]
	\centering
	\caption{\small Reconstruction parameters and values}
	\renewcommand{\arraystretch}{1.3}
	\begin{tabular}{|c|c|c|c|c|c|}
		\hline
		\textbf{Figure} & $\alpha_1$ & $\alpha_2$ & $\alpha_3$ & SSIM$\brackets{\mathbf{c}_0}$ & PSNR$\brackets{\mathbf{c}_0}$ \\
		\hline \hline
		& \\[-1.6em]
		\ref{Fig:RecoPhantomIntStatic_ssim} & $2\cdot 10^7$ & $0.1^{2}$ & 0 & 0.7862 & 15.21 \\ 
		\ref{Fig:RecoPhantomIntStatic_psnr} & $2\cdot 10^4$ & $0.1^{1.2}$ & 0 & 0.0327 & 16.93 \\
		\hline 
		\ref{Fig:RecoPhantomInt_ssim} + \ref{Fig:RecoSinoInt_ssim}&  $2\cdot 10^7$ & $0.1^{1.4}$ & 0 & 0.8955 & 21.92 \\
		\ref{Fig:RecoPhantomIntTwoTerms_ssim} + \ref{Fig:RecoSinoIntTwoTerms_ssim}&  $2\cdot 10^2$ & $0.1^{4.4}$ & 0 & 0.8465 & 19.75 \\
		\hline 
		\ref{Fig:RecoPhantomInt_psnr} + \ref{Fig:RecoSinoInt_psnr}&  $2\cdot 10^5$ & $0.1^{1.2}$ & 0 & 0.4658 & 23.26 \\
		\ref{Fig:RecoPhantomIntTwoTerms_psnr} + \ref{Fig:RecoSinoIntTwoTerms_psnr}&  $2\cdot 10^5$ & $0.1^{1.4}$ & 0 & 0.6078 & 27.64 \\
		\hline 
		\ref{Fig:RecoPhantomIntSparsity} &  $2\cdot 10^7$ & $0.1^{1.4}$ & 60 & 0.9703 & 21.85 \\
		\hline
	\end{tabular}
	\label{table:RecoValues_int}
	\renewcommand{\arraystretch}{1}
\end{table}

\section{Conclusion}
Motivated by the similar sampling geometry for MPI using an FFL scanner and CT, we extended the MPI signal equation to dynamic particle concentrations  using diffeomorphic motion functions as successfully applied in dynamic CT. Compared to the forward model in the static case, we get additional components resulting from the new time-dependencies due to the moving phantom. We were able to link each of these components to adapted versions of the Radon transform and proposed a joint reconstruction of particle concentration and corresponding dynamic Radon data by means of TV regularization in terms of the concentration and an optional penalty term for the corresponding Radon data. Finally, we applied different versions of our method to synthetic data. Among other things, we compared results obtained via neglecting the phantom dynamics with those obtained via incorporating motion information, clearly emphasizing the need for suitable dynamic reconstruction methods in order to get reasonable image reconstructions. \\ 
Further, we regarded an example for mass as well as for intensity preservation. The mass preserving case seems to be easier to handle as we only have one additional term instead of two. However, for our motion the second term was neglectable for both preservation assumptions. We got promising results, which could serve as starting point for future research. For instance, it would be interesting to investigate whether the approach allows extension to more realistic magnetization models in contrast to using the Langevin model. We also accepted further modeling errors by using a reduced sinogram in the reconstruction. According to this, adding sinogram inpainting methods into our Radon-based image reconstruction approach might serve as remedy. So far, we assumed the motion functions to be given. A next step would be to develop methods for determining these from the measurements and a priori information either beforehand of or simultaneously with the image reconstruction. Existing methods for dynamic CT and dynamic inverse problems in general might serve as point of orientation. Lastly, testing and possibly adapting the methods with respect to real data are needed to check applicability for clinical issues.

\section*{Acknowledgment}
The authors acknowledge the support by the Deutsche Forschungsgemeinschaft (DFG) within the Research Training Group GRK 2583 "Modeling, Simulation and Optimization of Fluid Dynamic Applications".


\bibliographystyle{plain}
\bibliography{literature}

\end{document}